\documentclass[11pt]{article}
\usepackage[margin=1in,bottom=.8in]{geometry} 

\usepackage[nocompress]{cite}

\usepackage{amsmath}
\usepackage{amssymb}
\usepackage{mathtools}
\usepackage{mathrsfs}    
\usepackage{caption}
\usepackage[normalem]{ulem}

\usepackage{amsthm}
\usepackage{thmtools,thm-restate}  
\usepackage{scrextend}

\usepackage{mathpartir}
\usepackage{stmaryrd}
\usepackage{proof}

\usepackage[colorlinks=true, urlcolor=purple, linkcolor=blue]{hyperref}
\usepackage[dvipsnames]{xcolor}
\usepackage{xspace}                

\newcommand{\mc}{\mathcal}

\newcounter{itemnum}

\newcommand{\case}[2][]{\vskip12pt\noindent{#2:}\phantomsection\label{#1}}
\newcommand{\refcase}[2][]{\vskip12pt\noindent{\hyperref[#1]{#2:}}}
\newcommand{\subcase}[1]{\vskip12pt\indent{#1:}}
\newcommand{\textlabel}[2]{{\color{PineGreen}
    {\refstepcounter{itemnum}(\theitemnum)\,#1\label{#2}}}}
\newcommand{\resetitemnum}{\setcounter{itemnum}{0}}

\newcommand{\st}{%
  \nonscript\;
  \ifnum\currentgrouptype=16
    \middle\vert
  \else
    \mid
  \fi
  \nonscript\;}

\newcommand{\imp}{\supset}

\renewcommand{\subset}{\subseteq}

\newcommand{\ind}{\stackrel{\mathclap{\scalebox{0.66}{\mbox{$\mu$}}}}{=}}
\newcommand{\defn}{\stackrel{\mathclap{\scalebox{0.5}{\mbox{$\Delta$}}}}{=}}
\newcommand{\defstar}{\stackrel{\mathclap{\scalebox{0.66}{\mbox{$*$}}}}{=}}

\newcommand{\dfn}{\mathrm{defn}}

\newcommand{\R}{\mathscr{R}}

\newcommand{\IR}{$\mu\mc R$\xspace}
\newcommand{\IL}{$\mu\mc L$\xspace}
\newcommand{\muR}{\IR}
\newcommand{\muL}{\IL}

\newcommand{\DR}{$\Delta\mc R$\xspace}
\newcommand{\DL}{$\Delta\mc L$\xspace}
\newcommand{\axiom}{\textsc{Ax}\xspace}
\newcommand{\init}{\textsc{Init}\xspace}
\newcommand{\cL}{c$\mc L$\xspace}
\newcommand{\wL}{w$\mc L$\xspace}

\newcommand{\multc}{$\textit{mc}$\xspace}
\newcommand{\topR}{$\top\mc R$\xspace}
\newcommand{\botL}{$\bot\mc L$\xspace}
\newcommand{\impR}{$\imp\!\!\mc R$\xspace}
\newcommand{\impL}{$\imp\!\!\mc L$\xspace}
\newcommand{\andR}{$\wedge\mc R$\xspace}
\newcommand{\andL}[1]{$\wedge\mc L#1$\xspace}
\newcommand{\orR}[1]{$\vee\mc R#1$\xspace}
\newcommand{\orL}{$\vee\mc L$\xspace}
\newcommand{\allR}{$\forall\mc R$\xspace}
\newcommand{\allL}{$\forall\mc L$\xspace}
\newcommand{\exR}{$\exists\mc R$\xspace}
\newcommand{\exL}{$\exists\mc L$\xspace}
\newcommand{\nabR}{$\nabla\mc R$\xspace}
\newcommand{\nabL}{$\nabla\mc L$\xspace}

\newcommand{\G}{$\mc G$\xspace}
\newcommand{\LD}{$\text{LD}$\xspace}
\newcommand{\LDinf}{$\text{LD}_\infty$\xspace}
\newcommand{\LDIN}{$\text{LD}^{\mu\nabla}$\xspace}
\newcommand{\LDINinf}{$\text{LD}^{\mu\nabla}_\infty$\xspace}

\newcommand{\NJI}{$\mu\text{NJ}$\xspace}

\newcommand{\multcut}{mc}
\newcommand{\weakL}{\text{\wL}}
\newcommand{\forallR}{\text{\allR}}
\newcommand{\forallL}{\text{\allL}}
\newcommand{\existsL}{\text{\exL}}
\newcommand{\existsR}{\text{\exR}}
\newcommand{\circL}{\circ_{L}}
\newcommand{\circR}{\circ{\cal R}}
\newcommand{\bulletL}{\bullet_{L}}

\newcommand{\deriv}[2]{{\begin{array}{c}#1\\#2\end{array}}}

\newcommand{\Seq}[2]{#1\vdash #2}

\newcommand{\Perm}{\mathrm{Perm}}
\newcommand{\supp}{\mathrm{supp}}

\newcommand{\lvl}{\mathrm{lvl}}
\newcommand{\ground}{\mathrm{ground}}

\newcommand{\height}{\mathrm{ht}}
\newcommand{\Ind}{\mathrm{Ind}}

\newcommand{\range}{\mathrm{range}}

\newtheorem{definition}{Definition}
\newtheorem{theorem}{Theorem}
\newtheorem{lemma}{Lemma}
\newtheorem{sublemma}[lemma]{Sublemma}
\newtheorem{corollary}{Corollary}

\title{Towards weak stratification for logics of definitions}
\author{Nathan Guermond}
\date{April 17, 2025\footnote{Last modified: \today}}

\begin{document}
\maketitle

\begin{abstract}
  The logic of definitions is a family of logics for encoding and reasoning
  about judgments, which are atomic predicates specified by inference rules. 
A definition associates an atomic predicate with a logical formula, which may itself 
depend on the predicate being defined. This leads to an apparent circularity
which can be resolved by interpreting definitions as monotone fixed-point
operators on terms, and which is enforced by imposing a stratification
condition on definitions. In many instances, it is useful to consider definitions
in which the predicate being defined appears negatively in the body of its
definition. In the logic \G, underlying the Abella proof assistant, this is not allowed due to the stratification condition.
One such application violating this condition is that of
defining logical relations, which is a technique commonly used
to prove properties about programming languages. Tiu has shown how
to relax this stratification condition to allow for a broader body
of definitions including that needed for logical relations. However,
he only showed how to extend a core fragment of \G with the weakened stratification condition, resulting in a logic he called \LD.
In this work we show that the weakened stratification
condition is also compatible with generic (nabla) quantification and general
induction. The eventual aim of this work is to justify an extension
of the Abella proof assistant allowing for such definitions.
\end{abstract}


\section{Introduction}
{\newcommand{\nil}{nil}
\newcommand{\cons}{::}
\newcommand{\listp}{list}
\newcommand{\append}{append}
\newcommand{\nat}{nat}
\newcommand{\app}{app}

The logic of definitions is a family of logics for encoding and reasoning
about judgments, which are atomic predicates specified by inference rules. 
A definition associates an atomic predicate to a logical formula, which may itself 
depend on the predicate being defined. 

For example, consider the following specification for constructing lists and appending them:

\begin{mathpar}
  \inferrule*[right=nil]
  { }{\listp\ \nil}\and
  \inferrule*[right=cons]
  {\listp\ L}{\listp\ (X :: L)}\\
  \inferrule*[right=AppNil]
  { }{\append\ \nil\ K\ K}\and
  \inferrule*[right=AppCons]
  {\append\ L\ K\ M}
  {\append\ (X :: L)\ K\ (X :: M)}
\end{mathpar}
\begin{figure}[h!]
  \begin{mathpar}
  \listp\ nil\defn\top\and
  \listp\ (X :: L) \defn \listp\ L\\
  \append\ \nil\ K\ K \defn \top\and
  \append\ (X :: L)\ K\ (X :: M)\defn \append\ L\ K\ M
\end{mathpar}
\caption{Definitions for $\listp$ and $\append$}
\label{fig:append}
\end{figure}
These inference rules may be modeled in the logic of definitions with the definitions for $\listp$ and $\append$ given by the clauses in Figure~\ref{fig:append}. Each clause models an inference rule with the conclusion appearing as an atomic formula on the left (the head), and the premises appearing as a formula on the right (the body). Variables (capitalized), which may appear in both the head and body, are used to model rule schemas whose variables may be instantiated for any term of the same type.
The derivability of a judgment under some hypotheses may then be modeled as the provability of a formula, as in the following example
$$\forall X,Y,L,K,M.\append\ L\ K\ M \imp  \append\ (X :: Y :: L)\ K\ (X :: Y :: M),$$
Provability in the logic is thus defined in such a way that proving the above models the derivability of the judgment $\append\ (x :: y :: \ell)\ k\ (x :: y :: m)$ for closed terms $x,y,\ell,k,m$ for which $\append\ \ell\ k\ m$ is known.
Conversely, non-derivability of a judgment may be modeled by the provability of the negation of the corresponding formula. For example, proving that
$$\forall X.\append\ \nil\ (X ::\nil)\ \nil \imp \bot$$
means that the judgment $\append\ \nil\ (x ::\nil)\ \nil$ is not derivable for any closed term $x$. Indeed, this should be so because there are no inference rules whose conclusion matches with this judgment.

In programming languages research, one often wants to discuss and reason about rules-based specifications, such as typing systems, logics, or operational semantics for programming languages (see eg. \cite{Ha16, GTL89}).
In a similar manner to the above examples, the logic of definitions may be used to model and reason about such specifications.

When specifying definitions, one notices that the body of a definition is allowed to depend on the predicate being defined, which may lead to an apparent circularity. Indeed, failing to impose restrictions on the types of definitions we are allowed to make can lead to inconsistencies. For example, allowing the definition $p \defn p\imp \bot$ leads to an inconsistent logic, as we will see in Section~\ref{section:definitions}. The logic of definitions therefore imposes a \emph{stratification} condition on definitions, which prevents the predicate being defined from appearing negatively (ie. to the left of an implication) in the body of its definition. Alternatively, one can think of definitions as fixed-points of operators on terms, where the stratification condition corresponds to ensuring that these operators are monotone. For example, a corresponding fixed-point operator for the $\listp$ predicate in Figure~\ref{fig:append} is
  \begin{equation}
    \label{eq:lfp-operator}
    B \equiv \lambda p. \lambda L. (L = \nil)\vee (\exists X,L'. (L = X :: L')\wedge (p\ L')
  \end{equation}
  where, instead of specifying a definition by a set of clauses, we could instead specify a definition as a fixed-point of an operator like $B$ above. This is the approach we will take for \emph{inductive} definitions later, which will allow us to think of inductive definitions as \emph{least} fixed-points.

In many instances, however, it is useful to consider definitions in which the predicate being defined appears negatively in the body of its definition. One such application is that of defining logical relations, which is a technique commonly used to prove properties about programming languages (see eg. \cite{Ha16,GTL89,WN16}). For example, \emph{logical equivalence}\footnote{As opposed to computational equivalence of programs, which relates programs intensionally (ie. syntactically), logical equivalence relates programs extensionally (ie. semantically), so $\lambda x.\lambda y. x + y$ and $\lambda x\lambda y. y + x$ are logically, but not computationally, equivalent.} of programs in the simply typed lambda calculus is defined by recursion on types in Figure~\ref{fig:logeq},
\begin{figure}[h]
\begin{align*}
  s \sim_{\nat} t &\defn \exists n. (s\to^* n) \wedge (t\to^* n) \wedge (\nat\ n)\\
  s \sim_{\sigma\to\tau} t &\defn \forall u_1,u_2. (u_1\sim_\sigma u_2)\imp (\app\ s\ u_1\sim_\tau \app\ t\ u_2)
\end{align*}
\caption{Logical equivalence in the simply typed lambda calculus}
\label{fig:logeq}
\end{figure}
where we assume $\to^*$ is defined as the reduction relation on lambda terms, and $\nat$ is defined to hold on natural numbers.
This definition is not stratified because $u_1\sim_\sigma u_2$ appears negatively in the body of the definition. However, if we think of fixing a predicate $\sim_\tau$ for each ground type $\tau$, where we think of $\sim_{\sigma\to\tau}$ as being defined only once $\sim_\sigma$ has been defined, then we see that this family of predicates can still be stratified.

In \cite{Tiu12}, Tiu showed how to make this intuition precise by weakening the stratification condition by allowing stratification to depend on ground arguments. Tiu was then able to show that his stratification condition, which we will refer to as \emph{weak stratification}, is compatible with a core fragment of the logic \G. This logic, which he called \LD, consists of first order logic extended with weakly stratified definitions and natural number induction.

In this work, we take this idea further by showing that the weakened stratification condition is also compatible with the nabla quantifier and inductive definitions in a logic we call \LDIN. The eventual aim of this work is to justify an extension of the Abella proof assistant \cite{Abella14} allowing for weakly stratified definitions. Since Abella is based on the logic \G introduced in \cite{GMN11}, fully justifying this extension would require further extending \LDIN with co-induction and nominal abstraction.

The nabla quantifier ($\nabla$), first introduced by Miller and Tiu in \cite{MT05} and refined in \cite{Tiu07}, was introduced in order to reason about names, which are modeled by \emph{nominal constants}, and freshness, which is modeled by the nabla quantifier. For a judgment $\nabla x.J$ to be derivable means that $J[n/x]$ has a derivation $\Pi[n/x]$ for a generic nominal constant $n$ not appearing in $J$. In particular, this allows us to conclude that $J[t/x]$ is uniformly derivable by the derivation $\Pi[t/x]$ for any term $t$.  This differs from the meaning of a universally quantified judgment $\forall x.J$, which asserts that $J[t/x]$ has a derivation $\Pi_t$ for every term $t$, but cannot guarantee the uniformity of the derivations.


In a similar vein, inductive definitions were introduced by Momigliano and Tiu in \cite{MT04} to allow us to reason about the least fixed-point of a definition. For example, to model the least fixed-point $p$ of an operator $B$ (eg. the operator for $\listp$ in (\ref{eq:lfp-operator})), it should be the case that for any predicate $S$ such that
$$\forall X. B\ S\ X\imp S\ X$$
is provable (ie. $S$ is a pre-fixed-point  of $B$), and for which $S\ t\imp C$ is provable, then $p\ t \imp C$ should be provable. In \LDIN we will introduce a general rule capturing this idea. This is in contrast to Tiu's \LD, which supports natural number induction, but does not support induction in this general sense.

In order for a logic to be useful, we must first show that it is \emph{consistent}, which means that not all formulas are provable. To show consistency of \LD, Tiu first constructs a ground version of the logic, \LDinf, which means that no eigenvariables (free variables) appear in the derivations. He then defines an interpretation of \LD into \LDinf, which shows that if \LDinf is inconsistent, then \LD must be as well. This reduces the problem to proving the consistency of \LDinf which he does by using the cut elimination technique. The aim of this method is to show that any derivation making use of the \emph{cut} rule can be transformed into a \emph{cut-free} derivation, which allow us to locally decide the provability of certain formulas, including $\bot$. This is done by defining a \emph{cut reduction} relation, which can be shown to be well founded. The technique for doing this is known as the Tait-Martin-L\"of reducibility technique \cite{Ta67,ML71}, which was adapted to the logic of definitions by McDowell and Miller \cite{MDM00}. Following Tiu's work, we define a ground version of \LDIN called \LDINinf and prove cut-elimination for the latter, thus allowing us to conclude that \LDIN is consistent.

Our main contribution is to show that \LD extended
with nabla quantification and inductive definitions is consistent,
which we show by adapting Tiu's method of eliminating cut in a ground logic.
Doing this required resolving
weak stratification with inductive definitions using the ideas
of Momigliano and Tiu~\cite{MT04}, which assume strict stratification
on both generic and inductive definitions. Our results show that this
can be done so long as the strict stratification condition is
maintained on inductive definitions. We show by means of a counterexample
that allowing weakly stratified inductive definitions may lead to an
inconsistency.
  
The rest of this paper is structured as follows. In section~\ref{sec:ldin}, we introduce the logic \LDIN, which we wish to prove consistent. In section~\ref{sec:examples}, we give examples of definitions requiring weak stratification, as well as a counter example justifying the restriction to strict stratification for inductive definitions. In section~\ref{sec:ldininf} we introduce a ground version of the logic, \LDINinf. We show in section~\ref{sec:interp} that \LDIN may be interpreted in \LDINinf, allowing us to conclude that the consistency of \LDIN may be reduced to the consistency of \LDINinf. In section~\ref{sec:ldininf-cutred}, we define the cut reduction relation on ground derivations, and in section~\ref{sec:cutelim} we introduce the intermediate notion of reducibility and prove that all ground derivations are reducible. We put these results together to prove the cut elimination theorem in~\ref{sec:cut-elimination}. In section~\ref{sec:applications} we show how to apply this result to obtain consistency of \LDIN, as well as an application to proof search. We conclude in section~\ref{sec:conclusion} with a discussion of related work and future directions.
 
}

\section{The logic \LDIN}
\label{sec:ldin}
{
\newcommand{\nat}{\textrm{nat}}
\newcommand{\nil}{nil}
\newcommand{\cons}{::}
\newcommand{\listp}{list}
\newcommand{\append}{append}
\newcommand{\lst}{\textrm{lst}}
\newcommand{\eq}{eq}

In this section we define the logic \LDIN, which is an extension of the first order sequent calculus implicitly parametrized by a set of definitions $\mc D$ and a set of inductive definitions $\mc I$. First, we define the syntax of terms and formulas in the logic, and what constitutes a valid sequent. Second, we describe a core fragment of the logic, consisting of logical inference rules, followed by the structural, axiom, and multicut rules. Then, we specify what constitutes a valid set of definitions $\mc D$, as well as the inference rules associated with $\mc D$. Finally, we specify what constitutes a valid set of inductive definitions $\mc I$, with its associated inference rules. The logic \LDIN, which is implicitly parametrized by the sets $\mc D$ and $\mc I$, is precisely summed up at the end of this section. We observe that \LDIN is an extension of Tiu's \LD with nabla quantification and inductive definitions. It can also be viewed as a core fragment of the logic $\mc G$ \cite{GMN11}, which underlies Abella, extended with more permissive definitions.

\subsection{Syntax}

As in Church's simple theory of types \cite{Ch40}, we assume a finite set of base types $\iota_1,\ldots,\iota_n$, arrow types $\alpha\to\beta$, and a distinguished type $o$ of propositions. We assume we are given a \emph{signature} $\Sigma$, which is a set of type annotated constants of the form $\kappa : \tau$. Following the presentation in \cite{GMN08,Tiu07}, we assume a countably infinite number of global \emph{nominal constants} $n^\tau_1,n^\tau_2,\ldots$ in each type $\tau$, but we will omit the typing annotations for simplicity. Terms in our logic are then well-typed terms in the simply typed lambda calculus modulo $\alpha,\beta,\eta$-equality constructed from constants in the signature $\Sigma$, nominal constants, and variables from a \emph{variable context} $\mc X$, which is a finite set of distinct type annotated variables $x : \tau$. Since a term $t$ only exists with respect to a signature $\Sigma$, nominal constants, and a variable context $\mc X$, and since $\Sigma$ and the set of nominal constants are fixed, we will say that $t$ is well formed with respect to $\mc X$, or that $t$ \emph{lies over} $\mc X$. In particular this means that $\mc X$ must contain all free variables in $t$, 
but may contain more. In addition we remark that the type of $t$ is entirely determined by $\Sigma$, $\mc X$, and the nominal constants, and thus we denote $\mc X \vdash_\Sigma t : \tau$ whenever $t$ lies over $\mc X$.

We call a type \emph{first-order} if it does not contain $o$, and \emph{propositional} if it is either $o$ or of the form $\tau\to \omega$ for a first order type $\tau$ and a propositional type $\omega$. A \emph{formula} is a term of type $o$. We assume that our signature $\Sigma$ contains the following logical constructors
\begin{mathpar}
  \bot,\top : o\and
  \wedge,\vee,\imp\ : o\to o\to o\and
  \forall_\alpha,\exists_\alpha,\nabla_\alpha : (\alpha\to o)\to o
\end{mathpar}
for every first-order type $\alpha$. \emph{Atomic} formulas are of the form $p\ \vec t$ for some constant predicate symbol $p : \omega$ in $\Sigma$ where in general, $u\ \vec t$ is an abbreviation for the application $(\ldots(u\ t_1)\ \ldots\ t_n)$. We abbreviate $Q_\alpha(\lambda x.C)$ (where $Q_\alpha = \exists_\alpha,\forall_\alpha,\nabla_\alpha$) by $Q_\alpha x.C$.

We define the \emph{support} of a formula $F$ to be the set of nominals that appear in $F$, which we denote $\supp(F)$. Note that the support of a formula is necessarily finite. Given a finite set of nominals $\mc N$, we may consider the set $\Perm(\mc N)$ of all permutations of these nominals preserving their type. Given a permutation $\pi\in\Perm(\mc N)$ such that $\supp(F)\subset \mc N$, let $F[\pi]$ denote the formula $F$ with each nominal renamed by $\pi$.

A \emph{sequent} is a judgment $\mc X; \Gamma \vdash C$ for a finite multiset of formulas $\Gamma$, called a \emph{context}, and a formula $C$ lying over a variable context $\mc X$. We call the variables in $\mc X$ \emph{eigenvariables}, and we will assume that all the types in $\mc X$ are first-order. We will omit the type annotations when they can be inferred from the context. A term is \emph{ground} if it lies over the empty variable context $\emptyset$. We denote the set of ground terms of type $\alpha$ by $\ground(\alpha)$. Given variable contexts $\mc X$ and $\mc Y$, a simultaneous substitution $\theta = [t_1/x_1,\ldots,t_n/x_n]$ of type $\mc Y\to\mc X$ is an assignment of terms $t_1 : \tau_1,\ldots,t_n:\tau_n$ lying over $\mc Y$ to each variable $x_1:\tau_1,\ldots,x_n:\tau_n = \mc X$. So given a term $t$ lying over $\mc X$, $t\theta := (\lambda x_1\ldots \lambda x_n. t)\ \vec t$ is a term lying over $\mc Y$. Moreover, we write $[t/x]$ to denote the simultaneous substitution $[x_1/x_1,\ldots,x_n/x_n,t/x] : \mc X\to \mc X,x$, and we denote the trivial substitution by $\epsilon_{\mc X}:\mc X\to\mc X$. Finally, the \emph{range} of a substitution, which we denote $\range(\theta)$, is the smallest variable context containing all the free variables appearing in it.

\subsection{Logical rules}

\begin{figure}[h!]
  \begin{mathpar}
    {\inferrule*[right=\botL]{ }{\mc X;\Gamma,\bot\vdash B}}\and
    {\inferrule*[right=\topR]{ }{\mc X;\Gamma\vdash \top}}\\
    {\inferrule*[right=\impL]{\mc X;\Gamma\vdash B\quad \mc X;\Gamma,C\vdash D}
      {\mc X;\Gamma, B\imp C\vdash D}}\and
    {\inferrule*[right=\impR]{\mc X;\Gamma,B\vdash C}{\mc X;\Gamma\vdash B\imp C}}\\
    {\inferrule*[right=\andL1]{\mc X;\Gamma,B\vdash D}{\mc X;\Gamma,B\wedge C\vdash D}}\and
    {\inferrule*[right=\andL2]{\mc X;\Gamma,C\vdash D}{\mc X;\Gamma,B\wedge C\vdash D}}\and
    {\inferrule*[right=\andR]{\mc X;\Gamma\vdash B\quad \mc X;\Gamma\vdash C}
      {\mc X;\Gamma\vdash B\wedge C}}\\
    {\inferrule*[right=\orL]{\mc X;\Gamma,B\vdash D\quad\mc X;\Gamma,C\vdash D}
      {\mc X;\Gamma,B\vee C\vdash D}}\and
    {\inferrule*[right=\orR1]{\mc X;\Gamma\vdash B}{\mc X;\Gamma\vdash B\vee C}}\and
    {\inferrule*[right=\orR2]{\mc X;\Gamma\vdash C}{\mc X;\Gamma\vdash B\vee C}}\\
    {\inferrule*[right=\allL]
      {\mc X;\Gamma,C[t/x]\vdash D\\
        \mc X \vdash_\Sigma t : \tau}
      {\mc X;\Gamma,\forall_\tau x.C\vdash D}}\and
    {\inferrule*[right={\allR\quad\raisebox{.5em}{\parbox[t]{6em}
          {$\supp(C) = \vec n$\\ $y\not\in \mc X$}}}]
      {\mc X, y ; \Gamma\vdash C[y\ \vec n/x]}
      {\mc X ;\Gamma\vdash \forall_\tau x.C}}\\
    {\inferrule*[right={\exL\quad\raisebox{.5em}{\parbox[t]{6em}
          {$\supp(C) = \vec n$\\$y\not\in\mc X$}}}]
      {\mc X,y;\Gamma,C[y\ \vec n/x]\vdash D}
      {\mc X;\Gamma,\exists_\tau x.C\vdash D}}\and
        {\inferrule*[right=\exR]
          {\mc X;\Gamma\vdash C[t/x]\\
          \mc X\vdash_\Sigma t : \tau}
      {\mc X;\Gamma\vdash \exists_\tau x.C}}\\
    {\inferrule*[right={\nabL\quad$n\not\in\supp(C)$}]
      {\mc X;\Gamma,C[n/x]\vdash D}
      {\mc X;\Gamma,\nabla_\tau x.C\vdash D}}\and
    {\inferrule*[right={\nabR\quad$n\not\in\supp(C)$}]
      {\mc X;\Gamma\vdash C[n/x]}
      {\mc X;\Gamma\vdash \nabla_\tau x.C}}
  \end{mathpar}
  \caption{Core inference rules for \LDIN}
  \label{fig:core-rules}
\end{figure}

We now present the core left and right introductions rules for \LDIN in Figure~\ref{fig:core-rules}. We say that a sequent $\mc X;\Gamma\vdash C$ is \emph{derivable} if there exists a rule whose conclusion matches this sequent, and for which each of that rule's premises is derivable.

Note in particular the the right universal (respectively, left existential) introduction rules where $C[y\ \vec n/x]$ for $\vec n = \supp(C)$. This differs from the usual presentation of first order logic, and ensures the non-derivability of the following sequent
$$\mc X;\forall y.\nabla x.p\ x\ y\vdash \nabla x.\forall y.p\ x\ y$$
for a given predicate constant $p$. The application $(y\ \vec n)$ keeps track of the scope in which $y$ was introduced, a technique known as \emph{raising} (see \cite{M92}). We note in particular that the type of $y$ above is $\tau_1\to\ldots\to \tau_n\to\tau$ whenever $n_i : \tau_i$. 


\subsection{Structural, axiom, and multicut rules}

\begin{figure}[h!]
  \begin{mathpar}
    {\inferrule*[right=\cL]{\mc X;\Gamma,B,B\vdash C}{\mc X;\Gamma,B\vdash C}}\and
    {\inferrule*[right=\wL]{\mc X;\Gamma\vdash C}{\mc X;\Gamma,B\vdash C}}\\
    {\inferrule*[right={\axiom\quad$\pi\in\Perm(\supp(A))\ \mathrm{and}\ A\ \mathrm{atomic}$}]{ }{\mc X;A\vdash A[\pi]}}\and
    {\inferrule*[right=\multc]{\mc X;\Delta_1\vdash A_1\qquad\ldots\qquad\mc X;\Delta_n\vdash A_n\qquad \mc X;\Gamma,A_1,\ldots,A_n\vdash C}{\mc X;\Gamma,\Delta_1,\ldots,\Delta_n\vdash C}}
  \end{mathpar}
  \caption{Structural, axiom, and multicut rules for \LDIN}
  \label{fig:struct-rules}
\end{figure}

We then give the structural contraction and weakening (resp. \cL and \wL), axiom (\axiom), and multicut (\multc) rules in Figure~\ref{fig:struct-rules}. The contraction rule is necessary because contexts are multisets, and the weakening rule is necessary because the axiom rule does not permit multiple premises. Finally, the cut rule is presented as a multicut rule, an idea due to Gentzen which will allow us to more easily reduce a cut involving the contraction rule in Section~\ref{sec:ldininf-cutred}. We also permit the multicut rule to be nullary (ie. $n \geq 0$ in Figure~\ref{fig:struct-rules}).

Note that the axiom rule is restricted to atomic formulas. The following lemma shows this is not a fundamental restriction, but using this rule will make the technical developments involving inductive definitions easier.

\begin{lemma}
For any permutation $\pi\in\supp(B)$, the following \init rule is derivable without cut:
\begin{mathpar}
{\inferrule*[right={\init\quad$\pi\in\Perm(\supp(B))$}]{ }{\mc X;B\vdash B[\pi]}}
\end{mathpar}
\end{lemma}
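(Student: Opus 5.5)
Argue by induction on the structure of $B$, proving the statement simultaneously for all variable contexts $\mc X$ and all permutations $\pi\in\Perm(\supp(B))$. Read the hypothesis "$\pi\in\supp(B)$" as $\pi\in\Perm(\supp(B))$. It is convenient to strengthen it slightly to any permutation $\pi$ of a finite set $\mc N\supseteq\supp(B)$; since $B[\pi]$ depends only on the restriction of $\pi$ to $\supp(B)$, this costs nothing.

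The base and propositional cases are direct. If $B$ is atomic, the sequent is an instance of \axiom. If $B=\top$, use \topR; if $B=\bot$, use \botL. For $B=B_1\wedge B_2$, we have $B[\pi]=B_1[\pi]\wedge B_2[\pi]$. Apply \andR, then \andL1 (resp. \andL2) to each premise, and close with the induction hypotheses $\mc X;B_i\vdash B_i[\pi]$; here $\pi$ restricts to a permutation of $\supp(B)\supseteq\supp(B_i)$, which is why the strengthened form is used. Disjunction is dual: use \orL followed by \orR1/\orR2.

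For $B=B_1\imp B_2$, apply \impR to reach $\mc X;B_1\imp B_2,B_1[\pi]\vdash B_2[\pi]$, then \impL. This leaves $\mc X;B_1[\pi]\vdash B_1$ and $\mc X;B_1[\pi],B_2\vdash B_2[\pi]$. For the second, use \wL and the induction hypothesis. For the first, use the induction hypothesis on $B_1[\pi]$ with the inverse permutation $\pi^{-1}$, noting $B_1[\pi][\pi^{-1}]=B_1$. To make this legitimate, do the induction on the size of $B$ rather than on $B$ itself, since permuting nominals preserves size.

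The quantifier cases carry the real content.

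For $B=\forall x.C$, set $\vec n=\supp(C)$. Apply \allR with a fresh $y$, giving $\mc X,y;\forall x.C\vdash C[\pi][y\,\pi(\vec n)/x]$, where we use $\supp(C[\pi])=\pi(\vec n)$. Then apply \allL with the term $t=y\,\vec n$, reducing the goal to $\mc X,y;C[y\,\vec n/x]\vdash C[\pi][y\,\pi(\vec n)/x]$. The key observation is that $(C[y\,\vec n/x])[\pi]=C[\pi][y\,\pi(\vec n)/x]$: the variable $y$ contains no nominals, and $\pi$ commutes with substitution. Also $\supp(C[y\,\vec n/x])=\supp(C)$, so $\pi$ is still a permutation of the right support set, and the induction hypothesis on the smaller formula $C[y\,\vec n/x]$ closes the goal. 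The $\exists$ case is symmetric: first \exL with $y\,\vec n$, then \exR with $y\,\pi(\vec n)$.

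For $B=\nabla x.C$, apply \nabR with a nominal $n\notin\supp(C[\pi])$, and \nabL with a nominal $m\notin\supp(C)$. This leaves $\mc X;C[m/x]\vdash C[\pi][n/x]$. Extend $\pi$ to $\pi'$ on $\supp(C)\cup\{m\}$ by sending $m\mapsto n$; if $n\neq m$, also send $n\mapsto m$, or otherwise patch the permutation so that it remains a bijection. Then $C[m/x][\pi']=C[\pi][n/x]$, and the induction hypothesis applies; the strengthened formulation is what allows $\pi'$ to live on a larger set than $\supp(B)$.

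The main obstacle is exactly this bookkeeping. One must check that substitution commutes with nominal permutation, that the raised term $y\,\vec n$ is transformed correctly by $\pi$, and that a suitable bijective extension $\pi'$ exists in the $\nabla$ case. That last point is simplest if $n$ and $m$ are both chosen fresh for $\supp(C)\cup\supp(C[\pi])$, taking $n=m$ where possible. Choosing the measure so that the induction hypothesis applies to instantiated and permuted bodies is routine but must be stated explicitly.
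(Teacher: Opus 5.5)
Your route is the paper's: induction on $B$, with the $\wedge$ case built exactly as in the paper's displayed derivation. Your size measure, with $\pi^{-1}$ in the $\imp$ case, is fine, and so are your quantifier cases. You also rightly notice something the paper's one-line $\wedge$ step passes over. For $\pi\in\Perm(\supp(B_1\wedge B_2))$ the hypothesis for $B_1$ is not directly applicable, because $\pi$ need not map $\supp(B_1)$ onto itself; the same holds for $\vee$ and $\imp$.

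The gap is your claim that the strengthening ``costs nothing''. A permutation of $\mc N\supseteq\supp(B)$ restricts to an injection $\supp(B)\to\mc N$, not to a permutation of $\supp(B)$. So your strengthened statement is strictly stronger, and its atomic case is not an instance of \axiom. With $\pi=(n_1\ n_2)$ you need $q\,n_1\vdash q\,n_2$, but the side condition $\pi\in\Perm(\supp(A))$ allows only the identity when $\supp(A)=\{n_1\}$.

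The base case cannot be repaired under the rule as literally written, because the lemma itself then fails. Here is a counterexample:
\begin{itemize}
\item take the definitions $q\ X\defn q\ X$ and $r\ X\defn r\ X$, stratified at level $0$;
\item take $B=q\,n_1\wedge r\,n_2$ and $\pi=(n_1\ n_2)$;
\item consider any sequent whose context is built from $B$, $q\,n_1$, $r\,n_2$ and whose right-hand side is $B[\pi]$, $q\,n_2$ or $r\,n_1$.
\end{itemize}
The only rules that apply to such a sequent are \andL{i}, \andR, \cL, \wL, \DL and \DR. Each has all its premises of the same form, and \axiom never closes one. Since derivations are well-founded, $B\vdash B[\pi]$ has no cut-free derivation. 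The paper's own $\wedge$ step tacitly has the same hole.

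What the lemma, your proof and the paper's proof all need is \axiom in the form $A\vdash A'$ whenever $A\approx A'$, i.e.\ for an arbitrary finite type-preserving permutation of nominals. This is exactly how the paper's Renaming Lemma already uses \axiom in its axiom case. State that reading explicitly, and your strengthened induction goes through as written. In the $\nabla$ case you can then take $m=n\notin\mc N$ and extend $\pi$ by the identity on $n$, so no patching of the permutation is needed.
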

\begin{proof} By induction on the structure of the formula $B$. We only show the case in which $B = B_1\wedge B_2$. In this case, we know by the inductive hypothesis that $\mc X;B_1\vdash B_1[\pi]$ and $\mc X;B_2\vdash B_2[\pi]$, from which we may form the derivation
  \begin{mathpar}
    \inferrule*[right=\andR]
    {\inferrule*[right=\andL1]
      {\mc X;B_1\vdash B_1[\pi]}
      {\mc X;B_1\wedge B_2\vdash B_1[\pi]}\\
      \inferrule*[Right=\andL2]
      {\mc X;B_2\vdash B_2[\pi]}
      {\mc X;B_1\wedge B_2\vdash B_2[\pi]}}
    {\mc X;B_1\wedge B_2 \vdash (B_1\wedge B_2)[\pi]}
  \end{mathpar}
\end{proof}

\subsection{Definitions and stratification}
\label{section:definitions}
A valid set of definitions $\mc D$ consists of clauses of the form ${p\ \vec t \defn_{\mc X} B \in \mc D}$ for some propositional constant $p : \omega$ in the signature $\Sigma$ and a formula $B$ with $\vec t$ and $B$ both lying over $\mc X$ and having empty support. For a given clause $H\defn_{\mc X} B\in\mc D$, we say that $H$ is the \emph{head} of the clause, and $B$ the \emph{body}. We furthermore require that for any clause $H\defn_{\mc X}B\in\mc D$, every variable in $\mc X$ appears in $H$, and that $H$ lies in the $L_\lambda$ \emph{pattern fragment} (see eg. \cite{MN12}). For a given predicate constant $p$, the \emph{definition} associated with $p$ is the set of clauses in $\mc D$ with $p$ in the head. We now assume that for every ground atomic formula $A$ we have a \emph{level} assignment $\lvl(A)$ to an ordinal, and we define
\begin{align*}
  &\lvl(\bot) := \lvl(\top) := 0 \qquad&\\
  &\lvl(A\wedge B) := \lvl(A\vee B) := \max(\lvl(A),\lvl(B))&\\
  &\lvl(A\imp B) := \max(\lvl(A) + 1,\lvl(B))&\\
  &\lvl(\forall_\alpha x.C) := \lvl(\exists_\alpha x.C) := \sup\{\lvl(C[t/x])\st t\in\ground(\alpha)\}&\\
  &\lvl(\nabla_\alpha x.C) := \lvl(C[n/x])\qquad& n\not\in\supp(C)
\end{align*}
We say that a set of definitions $\mc D$ is \emph{weakly stratified} if for every clause $H\defn_{\mc X} B\in \mc D$, and for every $\mc X$-grounding substitution $\rho$ (ie. $\rho: \emptyset\to\mc X$),

\begin{equation}
  \label{wstrat-cond}
  \lvl(H\rho)\geq \lvl(B\rho).
\end{equation}

If in addition to (\ref{wstrat-cond}), we require that for any atomic predicate $A = p\ \vec x$, and any $\vec x$-grounding substitutions $\rho_1,\rho_2$,
\begin{equation}
  \label{strat-cond}
  \lvl(A\rho_1) = \lvl(A\rho_2)
\end{equation}
which is to say that the level of a ground atomic predicate $p\ \vec t$ does not depend on its arguments, then we say that $\mc D$ is \emph{(strictly) stratified}. The (strict) stratification condition prevents $p$ from appearing negatively (ie. to the left of an implication) in $B$. For example, the definition $p \defn p\imp \bot$ is not stratified. The weak stratification condition weakens this condition by allowing $A = p\ \vec u$ to appear negatively in $B$ provided that for any ground substitution $\rho$, $\lvl(H\rho) > \lvl(A\rho)$. Examples will be provided in section \ref{sec:examples}.

Given a clause $H\defn_{\mc X} B \in\mc D$, formulas $A, B'$, and a substitution $\theta : \mc Z\to\mc Y$, we say that $\dfn(H \defn_{\mc X} B, A,\theta,B')$ if there exists a substitution $\rho:\mc Z\to \mc X$ such that $H\rho = A\theta$ and $B' = B\rho$. In addition, we require that $\supp(\theta) = \emptyset$, although the support of $\rho$ may be nonempty. Note that $B'$ is always an instance of $B$, but $A$ may not be an instance of $H$. For example, for the clause $H \defn B$ given by
$${\append\ (X :: L)\ K\ (X :: M)\defn \append\ L\ K\ M}$$
and for the formula $A = \append\ V\ \nil\ W$, the following holds
$$\dfn(H\defn B,\append\ V\ \nil\ W,[(X :: L)/V, (X :: M)/W],\append\ L\ \nil\ M)$$ for the substitution $\rho = [\nil/K].$

The introduction rules for definitions are given in Figure~\ref{fig:defn-rules}.
\begin{figure}[htb!]
\begin{mathpar}  
  {\inferrule*[right=\DL]
    {\{\mc Z;\Gamma\theta,B'\vdash C\theta\st \dfn(H\defn_\mc X B,A,\theta,B')
      \text{ and }\mc Z = \range(\theta)\}}
    {\mc Y;\Gamma, A\vdash C}}\and
  {\inferrule*[right=\DR]
    {\mc Y;\Gamma\vdash B'}
    {\mc Y;\Gamma\vdash A}}\quad \raisebox{1em}{$\dfn(H\defn_{\mc X} B,A,\epsilon_{\mc Y},B')$}
\end{mathpar}
\caption{Definition rules for a predicate $p$, provided $A = p\ \vec t$}
\label{fig:defn-rules}
\end{figure}
Conceptually, \DR corresponds to unfolding a definition for a particular instance $A$ of the head $H$, whereas \DL corresponds to case analysis on all possible instances of $A$ matching with the head $H$. 

\begin{figure}[h!]
  \begin{mathpar}
  {\inferrule*[right=\cL]
    {\inferrule*[Right=\DL,leftskip=1.5em,rightskip=1.5em]
      {\inferrule*[Right=\impL,leftskip=1.5em,rightskip=1.5em]
        {\inferrule*[Right=\axiom]{ }{p\vdash p}\\
          \inferrule*[Right=\wL]{\inferrule*[Right=\botL]{ }{\bot\vdash\bot}}{p,\bot\vdash\bot}}
        {p,p\imp\bot\vdash\bot}}
      {p,p\vdash \bot}}
    {p\vdash \bot}}\and
  {\inferrule*[right=\multc]
    { {\raisebox{-1em}{$\inferrule*[right=\DR]
        {\inferrule*[Right=\impR]
        {{\deriv{\vdots}{p\vdash\bot}}}{\vdash p\imp\bot}}
      {\vdash p}$}}\\
      {{\deriv{\vdots}{p \vdash \bot}}}}
    {\vdash \bot}
  }
\end{mathpar}
\caption{Inconsistency of the definition $p\defn p\imp \bot$}
\label{fig:inconsistency}
\end{figure}

 If the stratification condition (\ref{wstrat-cond}) is not enforced, inconsistencies may arise. Indeed, the definition $p \defn p\imp \bot$ cannot be (weakly nor strictly) stratified, and if it were allowed as a definition, we would obtain a derivation of $\vdash \bot$ as seen in Figure~\ref{fig:inconsistency}. The principle of explosion, captured by the \botL rule, then allows us to obtain a proof of any formula, making the logic inconsistent. We will see in section~\ref{sec:examples} examples of permissible definitions which are weakly, but not strictly stratified.
 
 \subsection{The fixed-point and clausal form of definitions}
 \label{sec:definition-forms}
In the previous section, we have presented a definition as a set of clauses of the form $p\ \vec t\defstar_{\mc X} B$, which we call a \emph{clausal} presentation (where $\defstar$ is either $\defn$ or $\ind$, which we will define in the next section). We now discuss an alternative way of presenting definitions. A \emph{fixed-point} presentation of a definition is a single clause of the form $p\ \vec x\defstar_{\vec x} B'\ p\ \vec x$ where each variable in $\vec x$ is unique, and $B'$ is a closed term not containing $p$. Notice that a definition in fixed-point form $p\ \vec x\defstar_{\vec x} B'\ p\ \vec x$ is already in clausal form. Conversely, a definition in clausal form
$$p\ \vec t_1\defstar_{\mc X_1} B_1\quad\ldots \quad p\ \vec t_n\defstar_{\mc X_n} B_n$$
may be encoded into a fixed-point form as follows
\begin{align*}
  \eq\ x\ x &\defn_{\{x\}} \top\\
  p\ \vec x &\defstar_{\vec x} (\exists \mc X_1.(\eq\ x_1\ t^1_1)\wedge\ldots\wedge(\eq\ x_k\ t^k_1) \wedge B_1)\vee\ldots\vee\\
            &\qquad(\exists \mc X_n.(\eq\ x_1\ t^1_n)\wedge\ldots\wedge(\eq\ x_k\ t^k_n)\wedge B_n)
\end{align*}
where $\vec t_i = t_i^1,\ldots,t_i^k$ for each $i$, with $t_i^j$ lying over $\mc X_i$. For example, $\listp$ and $\append$ from Figure~\ref{fig:append} may be specified in fixed-point form in the following way
\begin{equation}
  \label{def:listp}
\begin{aligned}
  \listp\ L &\ind (\eq\ L\ \nil)\vee(\exists X,L'. (\eq\ L\ (X :: L'))\wedge (\listp\ L'))\\
  \append\ L\ K\ M &\defn (\eq\ L\ \nil \wedge \eq\ K\ M)\vee\\
                   &\qquad(\exists X,L',M'. (\eq\ L\ (X :: L'))\wedge (\eq\ M\ (X :: M'))\wedge (\append\ L'\ K\ M'))
\end{aligned}
\end{equation}

However, while these two presentations appear interchangeable, a weakly stratified definition in clausal form may no longer be weakly stratified when converted to its fixed-point form, which is why we require that definitions be presented in clausal form. For example, if we stratify $\listp$ with the level $\lvl(\listp\ L) := |L|$ where for closed terms $X$ and $L$, 
\begin{mathpar}
  |\nil| := 0\and |X :: L| := 1 + |L|
\end{mathpar}
then while its clausal form in Figure~\ref{fig:append} is weakly stratified, its fixed-point form in Figure~\ref{def:listp} is not because for any ground term $L$, the level of the ground formula
$$\exists X,L'.(\eq\ L\ (X :: L'))\wedge(\listp\ L')$$
is the infinite ordinal $\omega$, which is greater than $\lvl(\listp\ L) = |L|$. However, this does not preclude the existence of another assignment of levels for which both presentations of $\listp$ are stratified.

In the next section, we will introduce inductive definitions using the fixed-point presentation. Since inductive definitions will be required to be (strictly) stratified, we will present them in fixed-point form, which will simplify the rules for inductive definitions.

\subsection{Inductive definitions}

\begin{figure}[h]
\begin{mathpar}
  {\inferrule*[right=\IL]
    {\vec x; B\ S\ \vec x\vdash S\ \vec x\qquad \mc X; \Gamma,S\ \vec t\vdash C}
    {\mc X; \Gamma, p\ \vec t\vdash C}}\and
  {\inferrule*[right=\IR]
    {\mc X;\Gamma\vdash B\ p\ \vec t}
    {\mc X;\Gamma\vdash p\ \vec t}}
\end{mathpar}
\caption{Inductive definition rules for $p\ \vec x\ind_{\vec x} B\ p\ \vec x\in \mc I$ and inductive invariant $S$}
\label{fig:ind-rules}
\end{figure}

A valid set of inductive definitions $\mc I$ consists of clauses of the form $p\ \vec x\ind_{\vec x} B\ p\ \vec x\in \mc I$ for some predicate constant $p : \omega$ in the signature $\Sigma$ and a closed term $B$ of type $\omega\to\omega$ not containing $p$. We assume all variables in $\vec x$ are distinct, which allows us to uniquely determine $B$ from the formula $B\ p\ \vec x$. We assume $p$ is in the head of at most one clause in $\mc I$ and is not already defined in $\mc D$. The \emph{inductive definition} associated with $p$ is the unique clause in $\mc I$ with $p$ in the head. An \emph{inductive invariant} for an inductive definition is a closed term $S$ of type $\omega$. We require that the set of inductive definitions be (strictly) stratified, and present the introduction rules for inductive definitions in Figure~\ref{fig:ind-rules}. For example, the \muL rule for the $\listp$ predicate defined in \ref{def:listp} is given by the following
\begin{mathpar}
  {\inferrule*[Right=\IL]
    {{L;\eq\ L\ \nil\vee(\exists X,L'.(\eq\ L\ (X :: L')) \wedge (S\ L'))\vdash S\ L}\\
    \\\mc X;\Gamma,S\ \ell\vdash C}
  {\mc X;\Gamma,\listp\ \ell\vdash C}}
\end{mathpar}
for any inductive invariant $S$.


In section~\ref{examples:ind-counterexample} we will see an example showing that the weak stratification condition on inductive definitions is not sufficient to ensure consistency.

\subsection{The logic \LDIN}
Now that we have defined every component to our logic, we may finally define
\begin{definition}[The logic \LDIN]
  Given a valid set of definitions $\mc D$ and a valid set of inductive definitions $\mc I$ such that
  \begin{enumerate}
  \item $\mc D$ is weakly stratified, and
  \item $\mc I$ is (strictly) stratified
  \end{enumerate}
  derivability in the logic $\text{\LDIN}\!(\mc D,\mc I)$ is described inductively by the combination of the rules in Figures~\ref{fig:core-rules},\ref{fig:struct-rules},\ref{fig:defn-rules}, and \ref{fig:ind-rules}. When $\mc D$ and $\mc I$ are clear from the context we simply write \LDIN.
\end{definition}

Given an inference rule, if it is a right introduction rule, all of its premises are called \emph{major} premises. If it is a left introduction rule or a multicut, then only those with the same consequent as its conclusion are major premises. All other premises are \emph{minor} premises. In particular, we note that the only rules with minor premises are \impL, \muL, and \multc.

}

\section{Examples}
\label{sec:examples}
{\newcommand{\z}{z}
\newcommand{\suc}{s}
\newcommand{\arr}{arr}
\newcommand{\app}{app}
\newcommand{\lam}{lam}
\newcommand{\nt}{nt}

\newcommand{\ev}{ev}
\newcommand{\odd}{odd}

\newcommand{\logeq}{logeq}
\newcommand{\steps}{steps}
\newcommand{\step}{step}
\newcommand{\natp}{nat}
\newcommand{\red}{red}
\newcommand{\sn}{sn}

\newcommand{\eq}{eq}
\newcommand{\true}{\mathrm{T}}
\newcommand{\false}{\mathrm{F}}
\newcommand{\lbox}{\square}
\renewcommand{\diamond}{\lozenge}

\newcommand{\tm}{\mathrm{tm}}
\newcommand{\ty}{\mathrm{ty}}

We have already seen that inconsistencies may appear in the logic of definitions as soon as the predicate being defined appears negatively in the body of its definition. A sufficient condition to prevent this is the stratification condition (\ref{strat-cond}), but the aim of this work is weaken this to the weak stratification condition (\ref{wstrat-cond}) to allow for a broader body of definitions. In this section, we will see examples of definitions which are weakly but not strictly stratified, thus demonstrating the usefulness of such definitions. Our first example (from \cite{Tiu12}) gives a simple but illustrative example for defining even numbers. Our second example motivates weak stratification as a means of encoding logical equivalence, a special case of a more general method to prove properties about programs known as logical relations \cite{Ha16}. Finally, we give a counter-example demonstrating that weak stratification is not a sufficient condition for inductive definitions to be consistent, thus justifying our restriction on inductive definitions.

In addition to the examples given in this section, we note that weakly stratified definitions may be used to encode the G\"odel-Gentzen translation, as well as Kolmogorov's double negation translation of classical logic in intuitionistic logic, as described in \cite{Tiu12}. As pointed out in \cite{BaNa12}, another useful application of such definitions is for the formalization of Tait-style \cite{Ta67, GTL89} strong normalization arguments. A more elaborate example application of a logical relations argument may be found in \cite{WN16}, which implicitly uses weak stratification to define a step-indexed logical equivalence relation to prove correctness of compiler transformations for functional programs.

\subsection{Even numbers}
\label{examples:even}
We consider the following definition recognizing all even natural numbers by first specifying the following signature over the base type $\tm$:
\begin{mathpar}
  \z : \tm\and \suc : \tm\to\tm\and \ev : \tm\to o
\end{mathpar}
We then specify the definition clauses as follows
\begin{mathpar}
  \ev\ \z \defn \top\and
  \ev\ (\suc\ X) \defn (\ev\ X)\imp \bot
\end{mathpar}
This definition cannot be stratified because $\ev\ X$ appears negatively in the body of its own definition. However, it can be weakly stratified by assigning the levels $\lvl(\ev\ t) := |t|$ for any ground $t$ where $|\z| := 0$ and $|\suc\ X| := |X| + 1$. One might want to compare this to a more standard way of inductively defining even numbers, ie.
\begin{mathpar}
  \ev'\ X \ind (X = \z) \vee (\exists Y. \eq\ X\ (\suc\ (\suc\ Y)) \wedge (\ev'\ Y))
\end{mathpar}
It is then provable in \LDIN that $\forall x. \ev'\ x \imp \ev\ x$. Note however that since $\ev$ cannot be stratified, its definition cannot be made inductive, and thus the converse need not be provable.

\subsection{Logical equivalence}
\label{examples:logeq}
For a more interesting example, we show how we can encode the logical equivalence relation of programs presented in Figure~\ref{fig:logeq} (see eg. Harper \cite{Ha16}) using weak stratification. In particular, we encode types and terms in the simply typed lambda calculus described by the constants:
\begin{mathpar}
  \z : \tm\and \suc : \tm\to\tm\and
  \app : \tm\to\tm\to\tm\and \lam : (\tm\to\tm)\to\tm\\
  \nt : \ty\and \arr : \ty\to\ty\to\ty
\end{mathpar}

We then define a predicate $\natp : \tm\to o$ specifying natural number values, as well as a predicate $\step : \tm\to\tm\to o$ specifying a single computation step:
\begin{mathpar}
  \natp\ \z \defn \top\and
  \natp\ (\suc\ X) \defn \natp\ X\and\\
  \step\ (\suc\ X)\ (\suc\ X')\defn \step\ X\ X'\and
  \step\ (\app\ (\lam\ S)\ U)\ (S\ U) \defn \top\and
  \step\ (\app\ S\ T)\ (\app\ S'\ T)\defn \step\ S\ S'\and
  \step\ (\app\ S\ T)\ (\app\ S\ T')\defn \step\ T\ T'
\end{mathpar}

and we define the relation $\steps : \tm\to\tm\to o$ to be the reflexive-transitive closure of $\step$:
\begin{mathpar}
  \steps\ T\ T \defn \top\and
  \steps\ S\ T \defn \exists y.\step\ S\ y \wedge \steps\ y\ T
\end{mathpar}

Finally, logical equivalence is encoded by the relation $\logeq : \ty\to\tm\to\tm\to o$ as follows:
\begin{align*}
  \logeq\ \nt\ S\ T &\defn (\exists n. (\steps\ S\ n) \wedge (\steps\ T\ n)\wedge (\natp\ n))\\
  \logeq\ (\arr\ A\ B)\ S\ T &\defn (\forall u_1,u_2. (\logeq\ A\ u_1\ u_2)\imp (\logeq\ B\ (\app\ S\ u_1)\ (\app\ T\ u_2)))
\end{align*}

This example cannot be stratified because $\logeq$ appears negatively in its body. However, we can assign the following levels to each of the defined predicates for arbitrary ground terms $A, B, S, T,$ and $N$ as follows:
\begin{mathpar}
  \lvl(\steps\ S\ T) := 0\and
  \lvl(\step\ S\ T) := 0\\
  \lvl(\natp\ N) := 0\and
  \lvl(\logeq\ A\ S\ T) := |A|
\end{mathpar}
where we define $|\nt| := 0$ and $|\arr\ A\ B| := \max(|A| + 1,|B|)$. We may check that these definitions are weakly stratified with respect to this level assignment. The only nontrivial condition to check is that for any ground $A, B, S$ and $T$,
\begin{align*}
  \lvl(\logeq\ (\arr\ A\ B)\ S\ T)
  &= \max(|A|+1,|B|) \\
  &=\sup_{u_1,u_2}\{\max(\lvl(\logeq\ A\ u_1\ u_2) + 1, \lvl(\logeq\ B\ (\app\ S\ u_1)\ (\app\ T\ u_2)))\}\\
  &=\sup_{u_1,u_2}\{\lvl((\logeq\ A\ u_1\ u_2)\imp(\logeq\ B\ (\app\ S\ u_1)\ (\app\ T\ u_2)))\}\\
  &\geq \lvl(\forall u_1,u_2. (\logeq\ A\ u_1\ u_2)\imp (\logeq\ B\ (\app\ S\ u_1)\ (\app\ T\ u_2)))
\end{align*}
thus $\logeq$ is weakly stratified, and therefore a valid definition.



\subsection{Weak stratification is insufficient for inductive definitions}
\label{examples:ind-counterexample}

Recall that we required inductive definitions to be (strictly) stratified. We now give an example showing that if the strict stratification condition on inductive definitions is not enforced, the logic may become inconsistent. Indeed, consider the signature with base type $\tm$ and definition $\ev : \tm\to o$ in the example in section~\ref{examples:even}. We additionally suppose we are allowed to define the predicate $\odd : \tm\to o$ inductively as follows
\begin{align*}
  \odd\ (\suc\ X) \ind (\odd\ X)\imp\bot
\end{align*}
which corresponds to the fixed-point presentation $\odd\ X \ind B\ \odd\ X$ with operator
$$B \equiv \lambda S. \lambda x. \exists y. (\eq\ x\ (\suc\ y)\wedge(S\ y\imp\bot)).$$

We note firstly that, using induction on $\odd$ with the inductive invariant $S = \lambda x. \ev\ x$, we derive the sequent $X; \odd\ X \vdash \ev\ X$ as follows:
\begin{equation}
  \label{counter-ex1}
  \inferrule*[right=\muL]
  {{\inferrule*[right=\exL;\andL{*}]
      {\inferrule*[Right=\DL]
        {\inferrule*[Right=\DR]
          {\inferrule*[Right=\init]{ }{Y; \ev\ Y\imp\bot\vdash\ev\ Y\imp\bot}}
          {Y; \ev\ Y\imp\bot\vdash\ev\ (\suc\ Y)}}
        {X, Y; \eq\ X\ (\suc\ Y), \ev\ Y\imp\bot\vdash\ev\ X}}
    {X; \exists y. \eq\ X\ (\suc\ y) \wedge \ev\ y\imp\bot\vdash\ev\ X}}\\
    \inferrule*[Right=\axiom]
    { }{X; \ev\ X\vdash \ev\ X}}
  {X; \odd\ X\vdash \ev\ X}
\end{equation}

Second, we use induction on $\odd$ with the inductive invariant $S = \lambda x.(\ev\ x\imp\bot)$ to derive the sequent $X; \odd\ X\vdash \ev\ X\imp\bot$ as follows:
\begin{equation}
  \label{counter-ex2}
  \inferrule*[right=\muL]
  {{\inferrule*[right=\exL;\andL{*}]
      {\inferrule*[Right=\DL]
        {\inferrule*[Right=\impR]
          {\inferrule*[Right=\DL]
            {\inferrule*[Right=\impL, leftskip=5em, rightskip=5em]
              {{\inferrule*[right=\init]{ }{Y; \ev\ Y\imp\bot\vdash \ev\ Y\imp\bot}}\\
                {\inferrule*[Right=\axiom;\wL]{ }{Y; \bot, \ev\ Y\imp\bot\vdash\bot}}}
              {Y; (\ev\ Y\imp\bot)\imp\bot, \ev\ Y\imp\bot\vdash\bot}}
            {Y; (\ev\ Y\imp\bot)\imp\bot, \ev\ (\suc\ Y)\vdash\bot}}
          {Y; (\ev\ Y\imp\bot)\imp\bot\vdash \ev\ (\suc\ Y)\imp\bot}}
        {X, Y; \eq\ X\ (\suc\ Y), (\ev\ Y\imp\bot)\imp\bot\vdash \ev\ X\imp\bot}}
      {X; \exists y. \eq\ X\ (\suc\ y) \wedge (\ev\ Y\imp\bot)\imp\bot\vdash \ev\ X\imp\bot}}
    \inferrule*[Right=\init]
    { }{X; \ev\ X\imp\bot\vdash \ev\ X\imp\bot}}
  {X; \odd\ X\vdash \ev\ X\imp\bot}
\end{equation}

With a single use of the multicut rule applied to the sequents derived in \ref{counter-ex1} and \ref{counter-ex2} above, we obtain
\begin{equation}
  \label{counter-ex3}
  X; \odd\ X\vdash \bot
\end{equation}
Specializing $X$ to $\z$ in \ref{counter-ex3}, we obtain a derivation of $\odd\ \z \vdash \bot$, and thus a derivation of $\vdash \odd\ (\suc\ \z)$ through the \DR and \impR rules. Finally, specializing $X$ to $\suc\ \z$ in \ref{counter-ex3}, we obtain $\odd\ (\suc\ \z)\vdash\bot$, which when cut with the previous result produces a derivation of $\vdash\bot$.

}

\section{The ground logic \LDINinf}
\label{sec:ldininf}
{
We now define a ground version of \LDIN which we call \LDINinf. We will then define an interpretation of \LDIN in \LDINinf, and show that the cut rule is admissible in \LDINinf. This will imply that any ground derivation of $\vdash \bot$ has a cut-free derivation, which we will see is impossible. Because the interpretation preserves derivability, this will allow us to conclude that $\vdash \bot$ is not derivable in \LDIN, and therefore that \LDIN is \emph{consistent}.

The formulas in \LDINinf are the formulas in \LDIN which are well formed over the empty variable context. Since all formulas are ground, we restrict the sequent $\mc X;\Gamma\vdash C$ to have no eigenvariables, which translates to requiring that $\mc X = \emptyset$. We therefore denote a ground sequent by $\Gamma\vdash C$. The rules of \LDINinf are the same as those of \LDIN except for \allR, \exL, and \muL. We replace the \allR and \exL rules with the following infinitary rules in Figure~\ref{fig:omega-rules}.

\begin{figure}[h]
  \begin{mathpar}
    \inferrule*[right=\allR]
    {\left\{{\Gamma\vdash C[t/x]}
      \right\}_{t\in\ground(\alpha)}}
    {\Gamma\vdash \forall_\alpha x.C}\and
    \inferrule*[right=\exL]
    {\left\{{\Gamma, B[t/x]\vdash C}
      \right\}_{t\in\ground(\alpha)}}
    {\Gamma, \exists_\alpha x.B \vdash C}
  \end{mathpar}
  \caption{\allR and \exL rules in \LDINinf}
  \label{fig:omega-rules}
\end{figure}

The definition \DL and \DR rules in \LDINinf are the same as those of \LDIN. The restriction to ground conclusions allows us to write the rules as in Figure~\ref{fig:ground-def-rules}.
\begin{figure}
  \begin{mathpar}
  {\inferrule*[right=\DL]
    {\{\Gamma,B'\vdash C\st \dfn(H\defn B,A,\epsilon_\emptyset,B')\}}
    {\Gamma, A\vdash C}}\and
  {\inferrule*[right=\DR]
    {\Gamma\vdash B'}
    {\Gamma\vdash A}}\quad\raisebox{1em}{$\dfn(H\defn_{\mc X} B, A, \epsilon_\emptyset,B')$}
\end{mathpar}
\caption{Definition rules in \LDINinf for a predicate $p$, provided $A = p\ \vec t$}
\label{fig:ground-def-rules}
\end{figure}
However, we must check that these rules are well-formed, which amounts to checking that the premises are well-formed over the empty variable context. Equivalently, we must check that all the formulas in each premise is ground. Suppose we are given a ground context $\Gamma$, a ground formula $A$, and clause $H\defn_{\mc X} B$ such that $\dfn(H\defn_{\mc X}B, A,\epsilon_\emptyset,B')$, then there exists $\rho : \mc Z\to X$ such that $A = A\epsilon_\emptyset = H\rho$ and $B\rho = B'$. Since $H$ lies in the $L_\lambda$ pattern fragment and $A$ is ground, this implies that $\rho$ is unique and furthermore that it must be ground. Since $B$ lies over $\mc X$, which is the domain of $\rho$, it follows that $B'$ must be ground, as desired.

\begin{figure}[h!]
  \begin{mathpar}
    \inferrule*[right=\IL]
    {\left\{{B\ S\ \vec t\vdash S\ \vec t}
      \right\}_{\vec t\in\ground(\vec \alpha)}\\
      {\Gamma, S\ \vec u\vdash C}}
    {\Gamma, p\ \vec u\vdash C}
\end{mathpar}
\caption{\IL rule in \LDINinf for $p\ \vec x\ind_{\vec x} B\ p\ \vec x\in \mc I$ and inductive invariant $S$}
\label{fig:ind-inf-rule}
\end{figure}
Finally, for any inductive definition $p\ \vec x\ind_{\vec x} B\ p\ \vec x\in\mc I$ and inductive invariant $S$, we introduce a new ground version of the \IL rule given in Figure~\ref{fig:ind-inf-rule}.

\begin{definition}[The logic \LDINinf]
  Given a valid set of definitions $\mc D$ and a valid set of inductive definitions $\mc I$ such that
  \begin{enumerate}
  \item $\mc D$ is weakly stratified, and
  \item $\mc I$ is (strictly) stratified
  \end{enumerate}
  Derivability in the logic $\text{\LDINinf}\!(\mc D,\mc I)$ is described inductively by the combination of the rules in Figures~\ref{fig:core-rules}, \ref{fig:struct-rules}, \ref{fig:defn-rules}, and \ref{fig:ind-rules} where \allR and \exL are replaced by those in Figures~\ref{fig:omega-rules}, and the \IL rule is replaced by that in Figure~\ref{fig:ind-inf-rule}. When $\mc D$ and $\mc I$ are clear from the context we simply write \LDINinf.
\end{definition}

}

\subsection{Interpreting \LDIN\ in \LDINinf}
\label{sec:interp}
{
The goal is now to define an interpretation of \LDIN in \LDINinf, which will relate the derivability of a sequent in \LDIN to the derivability of a ground sequent in \LDINinf. This will allow us to reduce the consistency of the former to that of the latter. We first define the \emph{height} of a derivation $\Pi$ with premise derivations $\{\Pi_i\}_{i\in\mc I}$ by
$$\height(\Pi) = \sup_{i\in\mc I}\{\height(\Pi_i) + 1\}.$$
In particular if $\Pi$ has no premise derivations, $\height(\Pi) = 0$, and furthermore we have $\height(\Pi_i) < \height(\Pi)$ for each $i\in\mc I$. Note that $\mc I$ may be infinite, so the height is an ordinal.

The interpretation is specified by the following lemma.
\begin{restatable}[Grounding Lemma]{lemma}{lemgroundinglemma}
  If $\mc Y;\Gamma\vdash C$ is derivable in \LDIN, then for any $\mc Y$-grounding substitution $\delta : \emptyset\to\mc Y$, the ground sequent $\Gamma\delta\vdash C\delta$ is derivable in \LDINinf. 
\end{restatable}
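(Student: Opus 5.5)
We argue by induction on the height $\height(\Pi)$ of an \LDIN derivation $\Pi$ of $\mc Y;\Gamma\vdash C$, proving simultaneously for all grounding substitutions $\delta:\emptyset\to\mc Y$ that $\Gamma\delta\vdash C\delta$ is derivable in \LDINinf. We case on the last rule of $\Pi$. First we check that grounding commutes with the syntactic operations: $(C[t/x])\delta = (C\delta)[t\delta/x]$, $\supp(C\delta)=\supp(C)$ (substitutions carry no nominals), $(A[\pi])\delta = (A\delta)[\pi]$, and atomicity is preserved. With this, the propositional rules, \cL, \wL, \axiom, \multc, \botL, \topR, \nabL, \nabR, \allL, \exR and \IR are handled directly: apply the inductive hypothesis to each premise with the same $\delta$ and reapply the same rule. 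For \allL and \exR we use the ground witness $t\delta$. For \nabL and \nabR the side condition $n\notin\supp(C\delta)$ continues to hold.

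For the eigenvariable rules \allR and \exL, the premise is $\mc Y,y;\Gamma\vdash C[y\ \vec n/x]$ with $\vec n=\supp(C)$. For each ground $t\in\ground(\alpha)$ we need a derivation of $\Gamma\delta\vdash (C\delta)[t/x]$. The key is that $t$ may contain the nominals $\vec n$, so we cannot simply substitute $t$ for $y$. Instead we choose the ground term $s = \lambda \vec z.\, t[\vec z/\vec n]$ (abstracting the nominals of $\supp(C)$ occurring in $t$). Then $(y\ \vec n)[s/y] =_\beta t$, and we apply the inductive hypothesis with the extended substitution $\delta\cup[s/y]$. One must check that $\Gamma$ and $C$ do not contain $y$, which holds because $y\notin\mc Y$. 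This raising step is what makes the infinitary rule exhaust all ground instances. \exL is handled symmetrically.

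For \DL, the conclusion is $\mc Y;\Gamma,A\vdash C$. Given $\delta$, the ground rule needs, for each $B'$ with $\dfn(H\defn_{\mc X} B, A\delta,\epsilon_\emptyset,B')$ (i.e.\ $H\rho=A\delta$ and $B'=B\rho$ with $\rho$ ground), a derivation of $\Gamma\delta,B'\vdash C\delta$. We exhibit such a premise of the original rule: take $\theta=\delta$ itself (viewed as a substitution into $\mc Z=\emptyset$, which is $\range(\delta)$ and has empty support since $\delta$ is nominal-free). Then $\dfn(H\defn B,A,\delta,B')$ holds with the same $\rho$, so the original \DL has the premise $\emptyset;\Gamma\delta,B'\vdash C\delta$, which has smaller height. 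Applying the inductive hypothesis with the empty grounding substitution gives the result. The only subtlety is confirming that the set of premises in \DL ranges over \emph{all} $\theta$ (not just most general unifiers). As written in Figure~\ref{fig:defn-rules} it does, and this is presumably why the rule is stated that way. \DR is immediate: $\dfn(H\defn B,A,\epsilon_{\mc Y},B')$ grounds to $\dfn(H\defn B,A\delta,\epsilon_\emptyset,B'\delta)$ via $\rho\delta$.

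Finally, for \IL with premises $\vec x;B\ S\ \vec x\vdash S\ \vec x$ and $\mc Y;\Gamma,S\ \vec t\vdash C$, we apply the inductive hypothesis to the first premise with every grounding $[\vec u/\vec x]$ to obtain the family $\{B\ S\ \vec u\vdash S\ \vec u\}_{\vec u}$. Since $S$ and $B$ are closed, nothing changes under substitution. We apply it to the second premise with $\delta$, then use the infinitary \IL of Figure~\ref{fig:ind-inf-rule}. The main obstacle is the raising argument for \allR/\exL, where we must ensure every ground instance, including those containing nominals from $\supp(C)$, is covered. A secondary check is that the \DL premise indexing genuinely includes the ground instance $\theta=\delta$.
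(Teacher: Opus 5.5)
The induction does not close, because of nominals in the grounding substitution. Your \axiom, \nabL/\nabR and \DL cases all use that $\delta$ is nominal-free: $\supp(C\delta)=\supp(C)$, $(A[\pi])\delta=(A\delta)[\pi]$, and freshness of the chosen nominal. In \DL this is essential, since $\theta:=\delta$ is a legitimate index of the \LDIN rule only because $\dfn$ demands $\supp(\theta)=\emptyset$ and you assume $\delta$ nominal-free.

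But the substitutions you hand to the induction hypothesis in \allR/\exL are not nominal-free. $\ground(\alpha)$ contains terms $t$ with nominals outside $\vec n=\supp(C)$: nominals of $\Gamma$, or the nominal that a \nabR higher up picks, which then breaks that \nabR's side condition. Your $s=\lambda\vec z.t[\vec z/\vec n]$ still contains all of them. The same happens in \IL, where $[\vec u/\vec x]$ ranges over all ground $\vec u$.

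You also cannot repair this by proving the statement for arbitrary $\delta$. That is the paper's route: it explicitly allows $\supp(\delta)\neq\emptyset$, uses the vacuous $\lambda\vec x.t$, and handles \nabR with the renaming lemma. However, its \DL step asserts $\dfn(H\defn B,A,\delta,B')$ without the support condition you correctly noticed. That stronger statement would make \LDINinf inconsistent. Take the clause $e\ X\ X\defn\top$. The sequent $y;e\ y\ n\vdash\bot$ is derivable in \LDIN by \DL with no premises, since $\dfn$ would need $y\theta=n$ with $\theta$ nominal-free. Yet its instance under $[n/y]$, cut against $\vdash e\ n\ n$ (by \DR and \topR), gives $\vdash\bot$ in \LDINinf.

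What is missing is an intermediate invariant, e.g.\ $\supp(\delta)\cap\supp(\Gamma,C)=\emptyset$, together with equivariance. Under this invariant the cases go as follows.
\begin{itemize}
\item \axiom and \DR go through as you wrote.
\item In \DL, replace the nominals $\vec m$ of $\delta$ by fresh variables $\vec w$. Then $\theta=\delta[\vec w/\vec m]$ is nominal-free, and $\dfn(H\defn B,A,\theta,B\rho[\vec w/\vec m])$ holds because $\vec m$ does not occur in $A$. Apply the hypothesis to that premise with $[\vec m/\vec w]$.
\item In \nabL, \nabR, \multc, \allL and \exR, first rename any nominal a premise introduces that clashes with $\supp(\delta)$. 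This uses equivariance of \LDIN derivations under a permutation applied to a whole sequent.
\item In \allR/\exL, your abstraction over $\vec n$ is the right choice for this invariant; the paper's vacuous one is not. Nominals of $t$ absent from the \LDIN sequent are harmless. Those occurring in the sequent but not in the quantified body must be renamed to fresh ones, and the resulting instance transported back by a cut against $F\vdash F[\pi]$. \LDINinf derives $F\vdash F[\pi]$ by cutting through $\nabla\vec x.F'$, where $F'$ abstracts the renamed nominals (\nabR picks the old names, \nabL the new ones).
\item In \IL, the invariant for the minor premise additionally needs $S$ to be nominal-free.
\end{itemize}
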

\begin{proof}
  The proof is by induction on the height of the derivation of $\mc Y;\Gamma\vdash C$. Note that we do not require $\supp(\delta)$ to be empty. We only show the cases for \allR and \DL below.
  \case{\allR} Suppose the derivation ends with \allR with $C = \forall_\alpha x.D$.
  Given $y\not\in\mc Y$ and $\vec n = \supp(D)$, suppose $\delta$ is an $\mc Y$-grounding substitution. Since $x$ may be renamed in $\forall_\alpha x.D$, we will assume that $x\not\in \mc Y$. Now for every $t\in\ground(\alpha)$, notice that
  $$(D[y\ \vec n/x])[\delta,(\lambda\vec x.t)/y] = (D[\delta,y\ \vec n/x])[\lambda\vec x.t/y] = (D\delta)[t/x].$$

  Since $\delta' = [\delta,(\lambda\vec x.t)/y]$ is an $\mc Y,y$-grounding substitution, we may apply it to the premise $\mc Y,y;\Gamma\vdash D[y\ \vec n/x]$ to obtain a ground derivation of $\Gamma\delta'\vdash (D\delta)[t/x]$ by the induction hypothesis, as follows
    \begin{mathpar}
    {\inferrule*[right=\allR]
      {\mc Y,y;\Gamma\vdash D[y\ \vec n/x]}{\mc Y;\Gamma\vdash \forall_\alpha x.D}}
    \quad\leadsto\quad
    {\inferrule*[right=\allR]
      {\left\{\Gamma\delta'\vdash (D\delta)[t/x])\right\}_{t\in\ground(\alpha)}}
      {\Gamma\delta\vdash \forall_\alpha x.(D\delta)}}
    \end{mathpar}

  \case{\DL} Suppose the derivation ends with \DL on the atom $A$ where $\Gamma = \Gamma', A$. It is clear that if $\dfn(H\defn_{\mc X} B, A\delta, \epsilon_\emptyset, B')$, then $\dfn(H\defn_{\mc X} B, A, \delta, B')$, and furthermore that $\range(\delta) = \emptyset$ since $\delta$ is a ground substitution. Thus an \LDINinf derivation of a premise $\Gamma'\delta, B'\vdash C\delta$ below is obtained by applying the induction hypothesis to the ground premise $\emptyset;\Gamma'\delta,B'\vdash C\delta$ with the empty grounding substitution. 

    \begin{mathpar}
    {\inferrule*[right=\DL]
      {\{\mc Z; \Gamma'\theta, B'\vdash C\theta\st \dfn(H\defn_{\mc X} B, A, \theta, B')
        \text{ and }\mc Z = \range(\theta)\}}
      {\mc Y; \Gamma', A \vdash C}
    }\quad\leadsto\quad
    {\inferrule*[right=\DL]
      {\{\Gamma'\delta, B' \vdash C\delta\st\dfn(H\defn_{\mc X} B, A\delta, \epsilon_\emptyset,B')\}}
      {\Gamma'\delta, A\delta \vdash C\delta}
    }
    \end{mathpar}

Finally, we remark that although the grounding substitution $\delta$ may have nonempty support, this is not a limitation for the \nabL and \nabR cases since we can always rename the nominals in a sequent while preserving derivability.
\end{proof}
A detailed proof of this lemma can be found in Appendix~\ref{appendix:grounding-lemma}.

\subsection{The consistency of \LDINinf implies consistency of \LDIN}

As a result of the grounding lemma, we obtain the following
\begin{lemma}[Interpretation lemma]
  \label{conldin}
  If \LDINinf is consistent, then so is \LDIN.
\end{lemma}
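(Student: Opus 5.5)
I will argue by contraposition: assuming \LDIN is inconsistent, I will show that \LDINinf is inconsistent too. First I need to fix what consistency means. I take a logic to be consistent when not every sequent is derivable. Equivalently, consistency means the sequent $\vdash\bot$ is not derivable, since any sequent follows from $\vdash\bot$ by a single nullary-context multicut with the \botL rule. Indeed, from $\vdash\bot$ and $\Gamma,\bot\vdash C$ (an instance of \botL), \multc yields $\Gamma\vdash C$.

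So suppose \LDIN is inconsistent. Then the sequent $\mc X;\cdot\vdash\bot$ is derivable for some variable context $\mc X$. In particular, taking $\mc X=\emptyset$, the sequent $\emptyset;\cdot\vdash\bot$ is derivable. If one prefers not to appeal directly to "every sequent is derivable", the same conclusion follows from the \botL-and-multicut argument above applied to that particular sequent.

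Next I apply the Grounding Lemma to this derivation, with $\mc Y=\emptyset$ and $\delta$ the empty (trivially grounding) substitution $\epsilon_\emptyset:\emptyset\to\emptyset$. Its conclusion is that the ground sequent $\cdot\,\delta\vdash\bot\delta$, which is just $\vdash\bot$, is derivable in \LDINinf. Because \LDINinf contains the \botL and \multc rules, the same nullary-context cut argument then derives every ground sequent $\Gamma\vdash C$. Hence \LDINinf is inconsistent, which is the contrapositive of the claim.

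There is no real obstacle here, since all the work has been done in the Grounding Lemma. The one point needing care is the choice of the empty grounding substitution, which is legitimate because $\mc Y=\emptyset$ and $\bot$ has empty support. If the underlying signature had some type with no ground terms, grounding substitutions for a nonempty $\mc Y$ might fail to exist. Working with $\mc Y=\emptyset$ avoids that issue entirely.
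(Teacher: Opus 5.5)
Your proof is correct and follows the paper's own argument. You apply the Grounding Lemma with $\mc Y=\emptyset$ and the empty substitution to carry a derivation of $\vdash\bot$ from \LDIN to \LDINinf, then conclude by contraposition; your extra \botL/\multc step relating ``every sequent is derivable'' to ``$\vdash\bot$ is derivable'' only reconciles two readings of consistency, while the paper simply takes consistency to mean non-derivability of $\vdash\bot$.
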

\begin{proof}
  From grounding lemma, if $\vdash \bot$ is derivable in \LDIN, then it is also derivable in \LDINinf. Thus, the contrapositive also holds: If $\vdash \bot$ is not derivable in \LDINinf, then it is not derivable in \LDIN.
\end{proof}
}

\section{Cut reductions for \LDINinf}
\label{sec:ldininf-cutred}
{

We now define the cut reduction relation on derivations in \LDINinf. This specifies how the cut rule may be propagated upwards in a derivation tree. In section~\ref{sec:cutelim} we will see how this relation can be used to obtain cut-free derivations.

Specifically, we define a {\em reduction} relation between derivations, following closely
the reduction relation in \cite{MDM00}. The \emph{redex}, that is the derivation to be reduced, is always a derivation $\Xi$ ending with the multicut rule
\begin{mathpar}
{\inferrule*[right=\multc]
        {\deriv{\Pi_1}{\Delta_1\vdash B_1}
        \\ \cdots
        \\ \deriv{\Pi_n}{\Delta_n \vdash B_n}
        \\ \deriv{\Pi}{B_1,\ldots,B_n,\Gamma\vdash C}
        }
        {\Delta_1,\ldots,\Delta_n,\Gamma\vdash C}}
\end{mathpar}

We refer to the formulas $B_1,\dots,B_n$ in the multicut as {\em cut formulas}. If a left or right rule introduces a cut formula, we say the rule is \emph{principal}. If $\Pi$ reduces to $\Pi'$, we say that $\Pi'$ is a \emph{reduct} of $\Pi$, and we may write $\Pi\to\Pi'$.

Specifically, the reduction relation relates a derivation $\Xi$ ending with the multicut rule \multc, to a new derivation $\Xi'$, which may again end with a multicut which we will see later is always of smaller complexity. The reduction relation is specified by case analysis on the last rule of $\Pi$ and $\Pi_i$ for a chosen cut formula $B_i$. The main idea is to group the cases into mutually exclusive {\em essential cases}, {\em inductive cases}, {\em left/right commutative cases}, {\em structural cases}, {\em left/right multicut cases}, and {\em left/right axiom cases}. We summarize these cases below. The full reduction relation is specified in detail Appendix~\ref{appendix:cut-red}.

\case{\em \underline{Essential cases}}
An essential case occurs when $\Pi_i$ ends with a right introduction rule and $\Pi$ ends with a left introduction rule. For example, here is the case when $\Xi$ is a cut for which $\Pi_1$ ends with \andR, and $\Pi$ ends with \andL{i}:

If $\Pi_1$ and $\Pi$ are
 \begin{mathpar}
    {\inferrule*[right=\andR]
      {\deduce{\Delta_1\vdash B_1^1}
        {\Pi_1^1}
        \\ \deduce{\Delta_1\vdash B_1^2}
           {\Pi_1^2}}
      {\Delta_1\vdash B_1^1 \wedge B_1^2}}
    \and
    {\inferrule*[right=\andL{i}]
      {\deduce{B_1^i,B_2,\ldots,B_n,\Gamma\vdash C}
        {\Pi'}}{B_1^1 \wedge B_1^2,B_2,\ldots,B_n,\Gamma\vdash C}}
  \end{mathpar}
  then $\Xi$ reduces to the derivation
  \begin{mathpar}
    {\inferrule*[right=\multc]
      {{\deriv{\Pi_1^i}{\Delta_1\vdash B_1^i}}\\ \ldots \\
       {\deriv{\Pi'}{B_1^i,B_2,\ldots,B_n,\Gamma\vdash C}}}
      {\Delta_1,\ldots,\Delta_n,\Gamma\vdash C}
    }
  \end{mathpar}
  which we denote $\multcut(\Pi_1^i,\Pi_2,\dots,\Pi_n,\Pi')$.
  Notice here (as with the other essential cases) that the cut now occurs on the formula $B_1^i$, which has complexity no greater than $B_1^1\wedge B_1^2$. In this case, the decreasing complexity is captured by the level of the cut formula.

  \case{\em \underline{Inductive cases}}
 An inductive case occurs when $\Pi$ ends with \muL introducing a cut formula $p\ \vec t$ for an inductive definition $p\ \vec x \ind_{\vec x} B\ p\ \vec x$. Suppose the cut formula being introduced is $B_1 = p\ \vec t$. In this case, $\Pi_1$ is a derivation of the sequent $\Delta_1\vdash p\ \vec t$ and $\Pi$ is the following derivation for some inductive invariant $S$
\begin{mathpar}
  {\inferrule*[right=\IL]
    {\left\{\deriv{\Pi^{\vec u}_S}{B\ S\ \vec u\vdash S\ \vec u}\right\}_{\vec u\in\ground(\vec \alpha)}\\
      \deriv{\Pi'}{S\ \vec t,B_2,\ldots,B_n,\Gamma\vdash C}}
    {\Seq{p\ \vec t,B_2,\ldots,B_n,\Gamma}{C}}}
\end{mathpar}
The key idea is that since we know that $S$ is a pre-fixed-point of $p$, then any time $p\ \vec t$ holds we should be able to obtain $S\ \vec t$. More specifically, the family of derivations $\{\Pi^{\vec u}_S\}_{\vec u}$, which we abbreviate by $\Pi_S$, may be used to obtain a derivation of $\Delta_1\vdash S\ \vec t$. We call this derivation the \emph{unfolding} of the derivation $\Pi_1$ with respect to $\Pi_S$, which we denote by $\mu(\Pi_1,\Pi_S)$. We note that unfolding a derivation crucially depends on the inductive definition $p$ to be strictly stratified.

We can now reduce $\Xi$ to the following

\begin{mathpar}
  {\inferrule*[right=\multc]
    {{\deriv{\mu(\Pi_1,\Pi_S)}{\Delta_1\vdash S\ \vec t}}\\
      \ldots\\ 
      {\deriv{\Pi'}{S\ \vec t,B_2,\ldots,B_n,\Gamma\vdash C}}}
    {\Delta_1,\ldots,\Delta_n, \Gamma\vdash C}}
\end{mathpar}

  \case{\em \underline{Left commutative cases}}
  A left commutative case occurs when $\Pi_i$ ends with a rule acting on a formula other than a cut formula. The idea is to push the cut above the last rule of $\Pi_i$. Consider for example the case in which $\Pi_i$ ends with \orL and $\Pi$ ends with some left introduction rule other than \cL or \wL, that is suppose $\Pi_1$ is
  \begin{mathpar}
\inferrule*[right=\orL]
{{\deriv{\Pi_1^1}{{\Delta_1', C_1}\vdash{B_1}}}\\
  {\deriv{\Pi_1^2}{{\Delta_1', C_2}\vdash{B_1}}}
}{\Delta_1', C_1\vee C_2\vdash B_1}
\end{mathpar}
then let $\Xi^i = mc(\Pi^i_1,\Pi_2,\dots,\Pi_n,\Pi)$, then $\Xi$ reduces to
\begin{mathpar}
\inferrule*[right=\orL]
{{\deriv{\Xi^1}{\Seq{\Delta_1', C_1,\Delta_2,\ldots,\Delta_n,\Gamma}{C}}}\\
  {\deriv{\Xi^2}{\Seq{\Delta_1', C_2,\Delta_2,\ldots,\Delta_n,\Gamma}{C}}}
}{\Seq{\Delta_1', C_1\vee C_2,\Delta_2,\ldots,\Delta_n,\Gamma}{C}}
\end{mathpar}

\case{\em \underline{Right commutative cases}}
  Similarly to a left commutative case, a right commutative case occurs when $\Pi$ ends with a rule acting on a formula other than a cut formula. In this case, the multicut is pushed above the last rule in $\Pi$, reducing the height of the major premise in the reduct.

\case{\em \underline{Structural cases}} This is the case in which $\Pi$ ends with a structural contraction or weakening rule (\cL or \wL) on a cut formula. In this case, the multicut is pushed above the structural rule in $\Pi$. For example, if $\Pi$ ends with a contraction on the cut formula $A_1$ with premise derivation $\Pi'$, then the multicut $\multcut(\Pi_1,\ldots,\Pi_n,\Pi)$ reduces to the multicut $\multcut(\Pi_1,\Pi_1,\ldots,\Pi_n,\Pi')$ followed by some contraction rules.

\case{\em \underline{Left multicut case}} This is the case in which $\Pi_i$ ends with a multicut. In this case, if $\Pi_i$ reduces to $\Pi_i'$, then $\Xi$ is reduced by reducing $\Pi_i$ to $\Pi_i'$. 

\case{\em \underline{Right multicut case}} This is the case in which $\Pi$ ends with a multicut, in which case the multicut in $\Pi$ is merged with the multicut in $\Xi$, reducing the height of the major premise in the reduct.

\case{\em \underline{Left axiom case}}
The left axiom case occurs when $\Pi_i$ ends with the axiom rule. In this case, the multicut $\multcut(\Pi_1,\ldots,\Pi_n,\Pi)$ reduces to $\multcut(\Pi_2,\ldots,\Pi_n,\Pi)$, reducing the number of cut formulas in the reduct.

\case{\em \underline{Right axiom case}}
The right axiom case occurs when $\Pi$ ends with the axiom rule. In this case, we must have $n = 1$, and the multicut $\multcut(\Pi_1,\Pi)$ simply reduces to $\Pi_1$, where nominals may have been renamed.

}

\section{Cut elimination for \LDINinf}
\label{sec:cutelim}
{
Our aim is now to show how the cut reduction relation can be used to eliminate cuts from a derivation. The key to doing this will be to show that the reduction relation is well-founded on \LDINinf derivations. We capture this idea with the intermediate notion of {\em normalizable} derivation. Showing that every derivation is normalizable will allow us to conclude at the end of this section that the (multi)cut rule may be eliminated from \LDINinf. The main technique used to prove normalizability is the notion of \emph{reducible} derivation, a technique inspired by Tait and Martin-L\"of (see \cite{Ta67,ML71}), and adapted to the current context by McDowell and Miller in \cite{MDM00}. Our proof closely follows that of \cite{Tiu12} and \cite{MDM00}, together with the treatment of induction from \cite{Tiu04,MT04}.

\subsection{Normalizability}
\label{sec:normalizability}
We start with the following inductive definition. Note here that by inductive definition, we mean the smallest set of derivations closed under the specified operation (see definition 1.1.1 in \cite{A77} for a precise definition).

\begin{definition}[normalizability]
  The set of normalizable derivations in \LDINinf is inductively defined as follows:
  \begin{enumerate}
  \item if $\Pi$ ends with a multicut, then $\Pi$ is normalizable if every reduct $\Pi'$ is normalizable
  \item otherwise, $\Pi$ is normalizable if each of its premise derivations is normalizable.
  \end{enumerate}
\end{definition}

The goal is to show that every derivation in \LDINinf is normalizable. 
To do this we need the intermediate notion of $\gamma$-reducibility, which is defined by transfinite recursion on the level $\gamma$ of a derivation, which is an ordinal. The \emph{level} of a derivation of a (ground) sequent $\Gamma\vdash C$ is defined to be $\lvl(C)$. Note therefore that if $\Pi$ reduces to $\Pi'$, then $\lvl(\Pi) = \lvl(\Pi')$. Furthermore, note that levels are defined so that every rule has a non-increasing consequent in each of its major premises. 
In the case of the \DR and \muR rules, this condition crucially depends on the weak and strict stratification conditions, respectively. Finally we note that below, $\gamma$-reducibility of a derivation ending with the \impR rule depends on $\alpha$-reducibility to already be defined for some $\alpha < \gamma$. These observations guarantee that the following is well defined.

\begin{definition}[reducibility]
  Define $\gamma$-reducibility of a derivation $\Pi$ inductively\footnote{In the notation of \cite{A77}, we recursively define a family of operators $\Phi_\gamma$ with the rules (1) $\{\Pi'\st \Pi\to\Pi'\}\rightarrowtriangle\Pi$, (2) $\{\Pi'\}\cup\{\multcut(\Upsilon,\Pi')\st \R_\alpha(\Upsilon)\}\rightarrowtriangle\Pi$ where $\R_\alpha(\Upsilon)$ means $\Upsilon$ is $\alpha$-reducible, and (3-5) $\{\Pi'\st\Pi'\triangleright\Pi\}\rightarrowtriangle\Pi$ where $\Pi'\triangleright\Pi$ means that $\Pi'$ is a major premise derivation of $\Pi$, and all of $\Pi$'s minor premise derivations are normalizable. The set of $\gamma$-reducible derivations is the least set closed under $\Phi_\gamma$.}
    as follows:
  \begin{enumerate}
  \item if $\Pi$ ends with a multicut, then $\Pi$ is $\gamma$-reducible if for every reduct $\Pi'$, $\Pi'$ is $\gamma$-reducible. 
  \item if $\Pi$ is
    $$\inferrule*[right=\impR]{{\deriv{\Pi'}{\Gamma,A\vdash B}}}{\Gamma\vdash A\imp B},$$
    then $\Pi$ is $\gamma$-reducible if $\lvl(\Pi)\leq \gamma$, $\Pi'$ is $\gamma$-reducible,
    and for every $\alpha$-reducible $\deriv{\Upsilon}{\Delta\vdash A}$ where $\alpha = \lvl(A)$, the derivation $\multcut(\Upsilon,\Pi')$ is $\gamma$-reducible.
  \item if $\Pi$ is $$\inferrule*[right=\impL]{\deriv{\Pi_1}{\Gamma\vdash A}\\ \deriv{\Pi_2}{\Gamma,B\vdash C}}{\Gamma,A\imp B\vdash C},$$ then $\Pi$ is $\gamma$-reducible if $\Pi_2$ is $\gamma$-reducible and $\Pi_1$ is normalizable.

  \item if $\Pi$ is $$\inferrule*[right=\IL]{\left\{\deriv{\Pi_{\vec u}}{B\ S\ \vec u\vdash S\ \vec u}\right\}_{\ground(\vec u)}\\ \deriv{\Pi'}{\Gamma,S\ \vec t\vdash C}}{\Gamma,p\ \vec t\vdash C}$$ then $\Pi$ is $\gamma$-reducible if $\Pi'$ is $\gamma$-reducible and each $\Pi_{\vec u}$ is normalizable.

  \item if $\Pi$ ends with any other rule, $\Pi$ is $\gamma$-reducible if each of its premise derivations $\Pi'$ is $\gamma$-reducible.
  \end{enumerate}
\end{definition}


We say a derivation $\Pi$ is \emph{reducible} if there exists a $\gamma$ such that $\Pi$ is $\gamma$-reducible. The lemma below asserts that reducibility implies normalizability. Since reducibility is a strengthening of normalizability, it is shown by straightforward induction on the $\gamma$-reducibility of a derivation.
\begin{restatable}[Normalization Lemma]{lemma}{lemnormalizationlemma}
  \label{normalization-lemma}
  If $\Xi$ is $\gamma$-reducible then $\Xi$ is normalizable.
\end{restatable}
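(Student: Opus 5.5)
The plan is to fix $\gamma$ and argue by induction on the definition of $\gamma$-reducibility. The set of $\gamma$-reducible derivations is the least set closed under the operator $\Phi_\gamma$ of the footnote. It therefore suffices to show that the set $N$ of normalizable derivations is closed under $\Phi_\gamma$. Equivalently, for each clause (1)--(5), if $\Pi$ is admitted by that clause and the derivations the clause quantifies over satisfy the induction hypothesis (they are normalizable) where the clause demands $\gamma$-reducibility, then $\Pi$ is normalizable. Minimality of the $\gamma$-reducible set then gives the lemma.

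The cases go as follows.

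In clause (1), $\Pi$ ends with a multicut and every reduct is $\gamma$-reducible. By the induction hypothesis every reduct is normalizable, so $\Pi$ is normalizable by clause 1 of the definition of normalizability.

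In clause (2), $\Pi$ ends with \impR. We simply discard the extra condition about $\multcut(\Upsilon,\Pi')$ and the level bound, and use only that $\Pi'$ is $\gamma$-reducible. By the induction hypothesis $\Pi'$ is normalizable, and since \impR is not a multicut, clause 2 of normalizability applies.

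In clause (3), $\Pi$ ends with \impL. The major premise $\Pi_2$ is $\gamma$-reducible, hence normalizable by the induction hypothesis. The minor premise $\Pi_1$ is normalizable by assumption. So every premise derivation is normalizable.

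Clause (4), for \IL, is identical. $\Pi'$ is normalizable by the induction hypothesis, and each $\Pi_{\vec u}$ is assumed normalizable; this is an infinite family, which is harmless since normalizability is defined premise-wise.

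In clause (5), all premises are $\gamma$-reducible, hence normalizable by the induction hypothesis.

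The only point requiring care is making the induction principle legitimate. $\gamma$-reducibility is defined as a least fixed point, and its clause (2) refers to $\alpha$-reducibility for $\alpha=\lvl(A)$, possibly smaller than $\gamma$. This occurs only as a side condition that we never use, so no transfinite induction on $\gamma$ is needed. For the same reason, the noted non-monotonicity concerns (negative occurrences of reducibility) do not arise: the footnote's operator is monotone in the $\gamma$-reducible set itself, with $\R_\alpha$ fixed beforehand. I expect this bookkeeping, rather than any case, to be the main (minor) obstacle. Once it is stated that we induct on the inductive generation of $\gamma$-reducible derivations (as in \cite{A77}), the lemma follows, and since $\gamma$ was arbitrary, every reducible derivation is normalizable.
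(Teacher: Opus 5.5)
Your proposal is correct and follows the paper's proof: induction on the inductive generation of $\gamma$-reducibility, with the same five cases, using only reducibility of the major premises and the assumed normalizability of the minor premises of \impL and \IL. As a small precision, $\alpha=\lvl(A)$ is always strictly below $\gamma$ in the \impR clause, because of the side condition $\lvl(\Pi)\leq\gamma$; this, rather than $\alpha$ being merely ``possibly smaller,'' is what justifies treating $\R_\alpha$ as fixed beforehand.
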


\subsection{Reducibility lemma}
\label{sec:reducibility}

We now state and sketch a proof of the main lemma needed to prove normalizability of derivations in \LDINinf. We first define a new measure on derivations which is crucial for dealing with the inductive case of the lemma. The \emph{index} of a derivation $\Pi$ with premise derivations $\{\Pi_i\}_{i\in\mc I}$ is defined
\begin{mathpar}
  \Ind(\Pi) = \begin{cases}
    \sup_{i\in\mc I}\{\Ind(\Pi_i)\}+1\quad&\text{if $\Pi$ ends with \IL}\\
    \sup_{i\in\mc I}\{\Ind(\Pi_i)\}&\text{otherwise}
  \end{cases}
\end{mathpar}
this counts the maximum number of occurrences of \IL in a branch of a derivation. The main lemma is the following, whose detailed proof may be found in Appendix~\ref{appendix:reducibility-lemma}.

\begin{restatable}[Reducibility Lemma]{lemma}{lemreducibilitylemma}
  \label{reducibility-lemma}
  Given $n$ derivations $\Pi_1,\ldots,\Pi_n$ such that $\Pi_i$ is $\alpha_i$-reducible, and any derivation $\Pi$ the multicut
  \begin{mathpar}
    {\inferrule*[right=\multc]
      {\deriv{\Pi_1}{\Delta_1\vdash B_1}\\\ldots\\\deriv{\Pi_n}{\Delta_n\vdash B_n}\\
      \deriv{\Pi}{\Gamma,B_1,\ldots,B_n\vdash C}}
      {\Delta_1,\ldots,\Delta_n,\Gamma\vdash C}
    }
  \end{mathpar}
  which we denote by $\Xi$, is $\lvl(C)$-reducible.
\end{restatable}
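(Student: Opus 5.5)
We argue by a lexicographic induction on a tuple of measures. The outer induction is on $\lvl(C)$-style data of the cut formulas: we fix the multiset of levels $\{\lvl(B_1),\ldots,\lvl(B_n)\}$ and first induct on the ordinal $\sup_i \lvl(B_i)$ (more precisely on the multiset of cut-formula levels under the multiset extension of the ordinal order). Inside that we induct on $\sum_i \Ind(\Pi_i)$ (natural sum of ordinals). Then we induct on the reducibility derivations of $\Pi_1,\ldots,\Pi_n$, that is, on the least-fixed-point structure witnessing that each $\Pi_i$ is $\alpha_i$-reducible. Finally we induct on $\height(\Pi)$. Since $\Xi$ ends in a multicut, clause 1 of the definition of reducibility says we must show that every reduct $\Xi'$ of $\Xi$ is $\lvl(C)$-reducible. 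We proceed by case analysis on the reduction rule used, following the groups of Section~\ref{sec:ldininf-cutred}.

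\emph{Routine cases.} In the right commutative, structural, and right multicut cases, the reduct consists of multicuts whose right premise has smaller height, or in the contraction case duplicated left premises, followed by a rule on the conclusion. We apply the innermost induction hypothesis to these multicuts and then the appropriate clause of reducibility for the last rule. When that rule is \impR, \impL or \IL, we need minor premises to be normalizable, which follows from the Normalization Lemma~\ref{normalization-lemma}. For \impR we additionally need the clause about cutting against an arbitrary $\alpha$-reducible $\Upsilon$; we obtain it by merging that cut into a single multicut and using the height induction again.

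In the left commutative and left multicut cases, the reduct replaces $\Pi_i$ by a premise or reduct of it, which is smaller in the reducibility derivation. The left axiom case drops a cut formula, and the right axiom case returns $\Pi_1$, which is reducible by hypothesis. For left \impR commutations we use the stronger hypothesis that $\Pi_i$'s premise is reducible.

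\emph{Essential cases.} The cut formulas in the reduct have level no greater than the original. In the \impR/\impL case, the $\Pi_1$ premise of the left rule is only normalizable, not reducible. The standard trick from \cite{MDM00} handles this: clause 2 of reducibility for $\Pi_1 = \impR(\Pi_1')$ gives us that $\multcut(\Upsilon,\Pi_1')$ is reducible whenever $\Upsilon$ is $\lvl(A)$-reducible, and $\lvl(A) < \lvl(A\imp B)$. The minor premise $\Gamma\vdash A$ of \impL is first cut into a reducible derivation by the outer induction hypothesis on strictly smaller cut level, since $\lvl(A)<\lvl(A\imp B)$, and the result is then fed to clause 2. For \DR/\DL, weak stratification gives $\lvl(B')\le\lvl(A)$. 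If the level is equal, we need the reducibility derivation of the \DR premise to be smaller, which it is. Quantifier and \nabla cases are similar, with nominal renaming preserving reducibility (a small auxiliary lemma).

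\emph{Inductive case (main obstacle).} Here the reduct cuts $\mu(\Pi_1,\Pi_S)$ against $\Pi'$. We must know that the unfolding $\mu(\Pi_1,\Pi_S)$ is reducible, even though the $\Pi_S$ derivations are merely normalizable. I would first prove a separate unfolding lemma: if $\Pi_1\vdash p\,\vec t$ is reducible and each $\Pi_S^{\vec u}$ is normalizable, then $\mu(\Pi_1,\Pi_S)$ is reducible. Its proof goes by induction on the reducibility of $\Pi_1$, and uses the present lemma at strictly smaller index, since the $\Pi_S$ sit above a \IL and so have smaller $\Ind$. It also uses the fact that strict stratification keeps $\lvl(S\,\vec u)$ fixed. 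This is where the $\Ind$ component of the measure is consumed, and the two lemmas should be proved simultaneously, by mutual induction on the lexicographic measure. Getting the ordering of measures right so that both directions decrease is the step I expect to be hardest. Normalizability of the $\Pi_S$ would then have to be strengthened to reducibility by the same mutual induction.
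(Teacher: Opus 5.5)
\textbf{The gap is in your induction measure.} You flagged the ordering of measures as the step you were unsure of, and as stated it does not work.

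With cut-formula levels outermost, then $\sum_i\Ind(\Pi_i)$, then the reducibility of the $\Pi_i$, and $\height(\Pi)$ innermost, you can only use the height hypothesis when none of the outer components grows. Yet in many cases the height of the major premise is the \emph{only} quantity that decreases:
\begin{itemize}
\item In the contraction case the reduct cuts $\Pi_1$ twice, so the cut-level multiset grows.
\item In the right multicut case the new cut formulas $D^1,\ldots,D^m$ have arbitrary levels.
\item For the \impR clause in the right commutative case ($\Pi$ ends in \impR over $\Pi'$), the merged multicut $\multcut(\Pi_1,\ldots,\Pi_n,\Upsilon,\Pi')$ has the extra cut formula $C_1$ and a left premise $\Upsilon$ of arbitrary index.
\item In the \impR/\impL case ($B_1=B_1'\imp B_1''$, with $\Pi_1$ ending in \impR over $\Pi_1'$ and $\Pi$ ending in \impL over $\Pi'$ and $\Pi''$), suppose $\lvl(B_1'')=\lvl(B_1)$. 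Then the levels do not drop. The new left premise $\multcut(\Xi_1,\Pi_1')$ contains $\Pi'$ and $\Pi_2,\ldots,\Pi_n$, so its index, and with it your second component, may grow.
\item You do not treat the case $n=0$, but it is what yields the corollary that every derivation is reducible. There, when $\Pi$ itself ends in a multicut, you need the hypothesis for that multicut, whose cut formulas are all new.
\end{itemize}

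\textbf{The inductive case breaks for the same reason, and there your index is on the wrong side of the cut.} The reduct's cut formula is $S\,\vec t$. Since the invariant $S$ is arbitrary, $\lvl(S\,\vec t)$ can lie far above $\lvl(p\,\vec t)$; strict stratification constrains $p$, not $S$, so your remark that it keeps $\lvl(S\,\vec u)$ fixed is not right.

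Unfolding needs $\multcut(\Upsilon,\Pi_S^{\vec t})$ and $\multcut(\Upsilon,\textsc{Init})$ to be reducible for every reducible $\Upsilon$. What decreases there is the index of the \emph{major} premise: $\Ind(\Pi_S^{\vec t})<\Ind(\Pi)$, and the \init derivation has index $0$. Meanwhile $\Ind(\Upsilon)$ and $\lvl(B\,S\,\vec t)$ are unconstrained, so neither of your first two components decreases.

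\textbf{The fix is the paper's ordering.} Induct on $\Ind(\Pi)$, then on $\height(\Pi)$, then on $n$, then on the reducibility of the $\Pi_i$, with no induction on levels at all. Level information is carried entirely by clause~2 of $\gamma$-reducibility. For example, in the \impR/\impL case:
\begin{itemize}
\item $\Xi_1=\multcut(\Pi_2,\ldots,\Pi_n,\Pi')$ is $\lvl(B_1')$-reducible by the height hypothesis.
\item Clause~2 for $\Pi_1$ then makes $\multcut(\Xi_1,\Pi_1')$ reducible.
\item The final reduct is covered by the height hypothesis because $\height(\Pi'')<\height(\Pi)$.
\end{itemize}
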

\begin{proof}
  We proceed first by induction on the \textlabel{index $\Ind(\Pi)$}{ind-index}, then on the \textlabel{height $\height(\Pi)$}{ind-height}, then by subordinate induction on the \textlabel{number of cut formulas $n$}{ind-cut-formulas}.

  \case{Case $n = 0$} In this case $\Xi$ reduces to $\Pi$, so it suffices to show that $\Pi$ is reducible. We use the following trick: for any derivation $\Psi$ with height less than $\height(\Pi)$, induction hypothesis~\ref{ind-height} applies to the nullary multicut $\multcut(\Psi)$, and since this reduces to $\Psi$ it must be that $\Psi$ is reducible. It is then shown by case analysis on the last rule of $\Pi$ that $\Pi$ is reducible.
  
  \case{Case $n > 0$}
  We proceed by subordinate induction on the \textlabel{$\alpha_i$-reducibility of $\Pi_i$}{ind-reducibility} for $i = 1\ldots n$, and proceed by case analysis on the reduction of $\Xi$. We argue that this applies to each of the cases described in Section~\ref{sec:ldininf-cutred}.
  
  \subcase{\em \underline{Essential cases}} In each of these cases, the reduction decreases the height of the major premise derivation $\Pi$, allowing us to apply the induction hypothesis \ref{ind-height}. We show the \andR/\andL{i} case below as example:

    If $\Pi_1$ and $\Pi$ are
  \begin{mathpar}
    {\inferrule*[right=\andR]
      {\deduce{\Seq{\Delta_1}{B_1^1}}
        {\Pi_1^1}
        \\ \deduce{\Seq{\Delta_1}{B_1^2}}
           {\Pi_1^2}}
      {\Seq{\Delta_1}{B_1^1 \land B_1^2}}}\and
    {\inferrule*[right=\andL{i}]
      {\deduce{\Seq{B_1^i,B_2,\ldots,B_n,\Gamma}{C}}
        {\Pi'}}{\Seq{B_1^1 \land B_1^2,B_2,\ldots,B_n,\Gamma}{C}}
    }
  \end{mathpar}
  then the reduction $mc(\Pi_1^i,\Pi_2,\dots,\Pi_n,\Pi')$ is reducible from the induction hypothesis~\ref{ind-height} because $\Pi_1^i$ is reducible and $\height(\Pi')<\height(\Pi)$.

  \subcase{\em \underline{Inductive cases}}
  For this case to hold, we notice first that the reduction decreases the height of the major premise derivation $\Pi$, so by induction hypothesis~\ref{ind-height} it suffices to show that the unfolded derivation $\mu(\Pi_i,\Pi_S)$ is reducible whenever $\Pi_i$ is. This step depends on showing that for any reducible derivation $\Upsilon$, $\multc(\Upsilon,\init)$ is reducible, which crucially depends on induction hypothesis~\ref{ind-index}.
  
  \subcase{\em \underline{Left commutative cases}}
  In the case of a left commutative reduction, the multicut is pushed above the last rule of a minor premise $\Pi_i$. In particular, the new multicut involves a major premise $\Pi_i'$ of $\Pi_i$, which is reducible because $\Pi_i$ is reducible. Thus the new multicut $\multc(\Pi_1,\ldots,\Pi_i',\ldots,\Pi_n,\Pi)$ is reducible by the induction hypothesis~\ref{ind-reducibility}.

  \subcase{\em \underline{Right commutative cases}}
  In the case of a right commutative reduction, the multicut is pushed above the last rule of the major premise $\Pi$, so the height of the major premise is reduced. Thus the new multicut $\multc(\Pi_1,\ldots,\Pi_n,\Pi')$ is reducible by the induction hypothesis~\ref{ind-height}.
    
  \subcase{\em \underline{Structural cases}}
  In the case of a structural reduction, the height of the major premise is reduced. So even though the number of cut formulas may increase if $\Pi$ ends with a contraction, we may use induction hypothesis~\ref{ind-height}.
  
  \subcase{\em \underline{Left multicut case}}
  Similarly to the left commutative cases, the minor premise $\Pi_i$ reduces to a reducible derivation $\Pi_i'$, thus the reduct $\multc(\Pi_1,\ldots,\Pi_i',\ldots,\Pi_n,\Pi)$ is reducible by the induction hypothesis~\ref{ind-reducibility}.

  \subcase{\em \underline{Right multicut case}}
  In this case, the multicut in the major premise $\Pi$ is merged with the multicut in $\Xi$, but since the height of the major premise derivation decreases, induction hypothesis~\ref{ind-height} may be applied to obtain the reducibility of the reduct.
  
  \subcase{\em \underline{Left axiom case}}
  In this case, the number of cut formulas decreases, allowing us to apply induction hypothesis~\ref{ind-cut-formulas}.
  
  \subcase{\em \underline{Right axiom case}}
  In this case, the multicut $\multc(\Pi_1,\Pi)$ reduces to a nominal renaming $\Pi_1'$ of $\Pi_1$. Since renaming preserves reducibility, and $\Pi_1$ is reducible, the reduct is reducible.

\end{proof}

\subsection{Cut elimination}
\label{sec:cut-elimination}
We now proceed to show how the multicut rule may be eliminated from any \LDINinf derivation.
\begin{corollary}
  \label{reducibility-corollary}
  Every derivation $\Pi$ in \LDINinf is reducible.
\end{corollary}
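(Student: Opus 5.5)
The plan is to apply the Reducibility Lemma (Lemma~\ref{reducibility-lemma}) in the degenerate case of zero cut formulas. Given an arbitrary \LDINinf derivation $\Pi$ of $\Gamma\vdash C$, I form the nullary multicut
$$\Xi = \multcut(\Pi),$$
which is a well-formed instance of the \multc rule, since the rule is explicitly allowed to have $n=0$ minor premises. Its conclusion is the same sequent $\Gamma\vdash C$. With $n=0$, the hypothesis of the Reducibility Lemma that each $\Pi_i$ be $\alpha_i$-reducible holds vacuously. The lemma therefore gives that $\Xi$ is $\lvl(C)$-reducible.

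Next I transfer reducibility from $\Xi$ back to $\Pi$. Since $\Xi$ ends with a multicut, clause~1 of the definition of reducibility says that $\Xi$ is $\gamma$-reducible exactly when every reduct of $\Xi$ is $\gamma$-reducible. A nullary multicut has no cut formulas, so none of the essential, inductive, left commutative, left multicut or axiom cases can apply. I must check in the full reduction relation (Appendix~\ref{appendix:cut-red}) that the only reduction of $\multcut(\Pi)$ is $\multcut(\Pi)\to\Pi$. This is the reduction the proof of the Reducibility Lemma uses in its ``Case $n=0$''.

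It then follows that $\Pi$ is $\lvl(C)$-reducible, and hence reducible.

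The only point needing care is this check on the reduction relation: that $\multcut(\Pi)$ reduces to $\Pi$ itself, rather than only to some derivation obtained by pushing the empty cut upward. If the appendix instead defined right commutative or right multicut reductions for nullary cuts, I would not go through $\Xi$. I would instead reuse the argument inside the ``Case $n=0$'' of the Reducibility Lemma's proof, which establishes directly that $\Pi$ is reducible.

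Everything else is immediate. Combined with the Normalization Lemma (Lemma~\ref{normalization-lemma}), this corollary yields that every derivation is normalizable, which is what the subsequent cut-elimination argument needs.
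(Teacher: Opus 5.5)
Your proposal is correct and matches the paper's proof. Both apply the Reducibility Lemma to the nullary multicut $\multcut(\Pi)$ and then use that this multicut reduces to $\Pi$, which Appendix~\ref{appendix:cut-red} stipulates for $n=0$; you only need $\Pi$ to be some reduct, not the unique one, since clause~1 of reducibility quantifies over every reduct.
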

\begin{proof}
  Consider the nullary multicut $\multcut(\Pi)$, which is reducible by Lemma~\ref{reducibility-lemma}. Since $\multcut(\Pi)$ reduces to $\Pi$ it follows that $\Pi$ is reducible.
\end{proof}

\begin{lemma}[Normal form lemma]
  \label{normal-form-lemma}
  If a derivation $\Pi$ of a sequent $\Gamma\vdash C$ is normalizable, then there exists a cut-free derivation $\hat\Pi$ of $\Gamma\vdash C$, which we call a \emph{normal form} for $\Pi$.
\end{lemma}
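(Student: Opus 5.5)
We argue by induction on the normalizability of $\Pi$, that is, by induction on the inductive definition of the set of normalizable derivations. The induction hypothesis is: every normalizable derivation obtained at an earlier stage of that definition has a cut-free derivation of the same end-sequent. Because the definition generates the least set closed under the two clauses, this induction principle is available even though derivations may be infinitely branching (\allR, \exL, \DL, \IL).

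\case{Case $\Pi$ ends with a multicut} By clause 1 of the definition, every reduct $\Pi'$ of $\Pi$ is normalizable. The key point is that every redex has \emph{at least one} reduct. We check this by inspecting the case split of Section~\ref{sec:ldininf-cutred} (detailed in Appendix~\ref{appendix:cut-red}). That split is organised so that, for any multicut $\multcut(\Pi_1,\ldots,\Pi_n,\Pi)$, the last rules of $\Pi$ and of the $\Pi_i$ always place us in some case:
\begin{itemize}
\item if $n=0$, the multicut reduces to $\Pi$;
\item otherwise we choose a cut formula and look at the last rule of $\Pi$ and of the corresponding $\Pi_i$, which gives an essential, inductive, commutative, structural, multicut or axiom case.
\end{itemize}
The reduction relation preserves the end-sequent, possibly up to renaming of nominals in the right axiom case. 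So we pick any reduct $\Pi'$ and apply the induction hypothesis to obtain a cut-free derivation of $\Gamma\vdash C$.

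For the renaming subtlety we use the observation already used in the Reducibility Lemma: a cut-free derivation of a nominal-renamed sequent can be renamed back and stays cut-free. In the right axiom case, $\multcut(\Pi_1,\Pi)$ with $\Pi$ an axiom $A\vdash A[\pi]$ reduces to a renaming of $\Pi_1$ whose conclusion is $\Delta_1\vdash A[\pi]$. That is exactly the end-sequent, so nothing further is needed there.

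\case{Case $\Pi$ ends with another rule} By clause 2, every premise derivation $\Pi_i$ is normalizable. By the induction hypothesis each has a cut-free derivation $\hat\Pi_i$ of the same premise sequent. We reapply the same last rule to the family $\{\hat\Pi_i\}$ to obtain $\hat\Pi$. This covers the infinitary rules, the \DL rule with its possibly infinite family of premises, and the minor premises of \impL and \IL, since each family member is handled by the induction hypothesis. Side conditions such as nominal freshness or the $\dfn$ relation depend only on the sequents, not on the derivations, so they still hold.

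\textbf{Main obstacle.} The only delicate step is the exhaustiveness claim in the multicut case: every multicut derivation admits at least one reduct. If some redex had no reduct, clause 1 would make it vacuously normalizable while it could still have no cut-free form. I would verify exhaustiveness against the appendix's case analysis. In particular, if the chosen $\Pi_i$ ends with a left rule, a left commutative case or the left multicut case applies. If it ends with a right rule, we look at $\Pi$:
\begin{itemize}
\item if $\Pi$ is principal on $B_i$, an essential or inductive case applies;
\item if $\Pi$ acts on a non-cut formula, a right commutative case applies;
\item if $\Pi$ ends with a structural rule, a multicut or an axiom, the corresponding structural, right multicut or right axiom case applies.
\end{itemize}
One subcase needs a specific check. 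When $\Pi$ ends with \DL on a cut formula $A$ and no clause matches, \DL has zero premises. An essential case must still produce a reduct here, presumably a \DL on the side context or a left commutative reduction into $\Pi_i$ ending in \DR. That reduct exists because $\Pi_i$ ends with \DR on the same $A$, which means some clause does match $A$, and this contradicts the assumption that no clause matches.
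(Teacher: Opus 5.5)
Your proof is correct and follows the paper's argument: induction on normalizability, using in the multicut case that at least one reduct always exists (the paper simply asserts this), and otherwise rebuilding the cut-free normal forms of all premises, minor ones included, under the same last rule. Your extra exhaustiveness check is right in substance, but one detail is off. The appendix branches first on the last rule of $\Pi$, so a left-commutative reduction only arises when $\Pi$ ends with a non-structural left rule on $B_i$; otherwise a right-commutative, structural, axiom or multicut case applies. Your \DL aside just shows that a zero-premise \DL on $B_i$ can never face a $\Pi_i$ ending in \DR.
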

\begin{proof}
  By induction on normalizability. If $\Pi$ ends with a cut, first note that there necessarily exists a reduction $\Pi\to\Pi'$, and furthermore that $\Pi'$ is normalizable. By the induction hypothesis, $\Pi'$ has a cut-free normal form $\hat\Pi'$, which is also a cut-free normal form for $\Pi$.

  If $\Pi$ ends with any other rule $\circ$, then each of its premise derivations
  $$\left\{\deriv{\Pi_i}{\Gamma_i\vdash C_i}\right\}_{i\in\mc I}$$
  is normalizable, and therefore has a cut-free normal form $\{\hat\Pi_i\}_{i\in\mc I}$ of the same sequent. It follows that there exists a cut-free derivation $\hat\Pi$ obtained by applying $\circ$ to the premise derivations $\{\hat\Pi_i\}_{i\in\mc I}$.
\end{proof}

Since every derivation in \LDINinf is reducible by Corollary~\ref{reducibility-corollary}, reducibility implies normalizability by Lemma~\ref{normalization-lemma}, and a normalizable derivation has a cut-free normal form by the previous lemma, we obtain our main theorem
\begin{theorem}[Cut admissibility for \LDINinf]
  \label{main-theorem}
  Every derivation of a ground sequent $\Gamma\vdash C$ in \LDINinf admits a cut-free derivation of the same sequent.
\end{theorem}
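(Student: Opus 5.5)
The plan is to chain the three results just established: Corollary~\ref{reducibility-corollary} gives reducibility of every derivation, Lemma~\ref{normalization-lemma} turns reducibility into normalizability, and Lemma~\ref{normal-form-lemma} turns normalizability into a cut-free derivation of the same sequent. Concretely, fix an arbitrary \LDINinf derivation $\Pi$ of the ground sequent $\Gamma\vdash C$. By Corollary~\ref{reducibility-corollary}, $\Pi$ is $\gamma$-reducible for some ordinal $\gamma$; the natural choice is $\gamma = \lvl(C)$, obtained by applying Lemma~\ref{reducibility-lemma} to the nullary multicut $\multcut(\Pi)$ and using that its unique reduct is $\Pi$ itself. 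Lemma~\ref{normalization-lemma} then makes $\Pi$ normalizable, and Lemma~\ref{normal-form-lemma} produces a cut-free derivation $\hat\Pi$ of $\Gamma\vdash C$.

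Two points need checking along the way. First, the conclusion sequent must be preserved by reduction, so that the normal form proves exactly $\Gamma\vdash C$. This means verifying, case by case over the reduction cases of Section~\ref{sec:ldininf-cutred}, that every reduct of a multicut has the same end-sequent as the redex, up to the nominal renaming in the right axiom case. For that case I would invoke the equivariance of derivability under permutations of nominals, and if needed append an \axiom/\init-style adjustment so that the literal sequent is recovered. Second, Lemma~\ref{normal-form-lemma} relies on every derivation ending in \multc having at least one reduct. I would confirm this by noting that the reduction cases are exhaustive for a multicut: either $n=0$, or some $\Pi_i$ ends with an axiom, a multicut, or a rule that is non-principal (left commutative) or principal; in the principal subcase, $\Pi$ is either principal on the cut formula (essential, inductive, or structural case) or not (right commutative, right multicut, or right axiom case).

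The main obstacle is already concentrated in Lemma~\ref{reducibility-lemma}, in particular the inductive (\muL) case and the unfolding $\mu(\Pi_i,\Pi_S)$, which depend on strict stratification of $\mc I$ and on induction over $\Ind(\Pi)$. Given that lemma, the theorem is essentially bookkeeping. The step I would scrutinize most is the exhaustiveness of the reduction relation in the presence of the infinitary \allR, \exL and \IL rules, since the normal-form argument silently requires a reduct to exist. I would handle it by noting that each of these rules falls under the commutative or inductive cases exactly as its finitary counterpart does.
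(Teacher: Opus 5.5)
Your chain of Corollary~\ref{reducibility-corollary}, Lemma~\ref{normalization-lemma} and Lemma~\ref{normal-form-lemma} is exactly the paper's proof. You are also right that the real content lies in the requirement that every multicut have a reduct with the \emph{same} end-sequent. But the way you propose to secure this in the right axiom case does not work. There $\Pi$ is \axiom on $B_1\vdash B_1[\pi]$, and the reduct must derive $\Delta_1\vdash B_1[\pi]$ with $\Delta_1$ unchanged. Equivariance under a permutation applied to the whole sequent only yields $\Delta_1[\pi]\vdash B_1[\pi]$. Appending an \axiom or \init step below $\Pi_1$ is a \multc, i.e.\ the redex itself. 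Separately, your exhaustiveness split should be organized on the last rule of $\Pi$ first, as in Appendix~\ref{appendix:cut-red}. The left-commutative, left-axiom and left-multicut reductions exist only when $\Pi$ ends in a left rule principal on $B_i$; otherwise the case is right-commutative, structural, right-axiom or right-multicut.

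The paper gets the needed one-sided renaming from the Renaming Lemma (Lemma~\ref{renaming-lemma}). Its \axiom case, however, infers an \axiom instance from $B_i'\approx C'$, which the side condition $\pi\in\Perm(\supp(A))$ does not license. With that side condition as written, the theorem fails. Take nominals $n_1\neq n_2$ of the same type and the clauses $q\ X\defn q\ X$ and $w\ X\ Y\defn q\ X$, which are stratified with all levels $0$. Then:
\begin{itemize}
\item $q\ n_1\vdash w\ n_1\ n_2$ holds by \DR and \axiom;
\item $w\ n_1\ n_2\vdash w\ n_2\ n_1$ is an instance of \axiom, since the swap lies in $\Perm(\{n_1,n_2\})$;
\item $w\ n_2\ n_1\vdash q\ n_2$ holds by \DL and \axiom.
\end{itemize}
So two multicuts derive $q\ n_1\vdash q\ n_2$. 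Yet this sequent has no cut-free derivation. \axiom would need a permutation in $\Perm(\{n_1\})$, which contains only the identity. \DL, \DR, \cL and \wL only lead to sequents whose antecedent consists of copies of $q\ n_1$ and whose succedent is $q\ n_2$.

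Concretely, the right-axiom reduct of the first multicut would have to be \DR above $q\ n_1\vdash q\ n_2$. So that multicut has no reduct, and the Normal Form Lemma's step ``a reduct always exists'' breaks. The repair is to let \axiom accept an arbitrary type-preserving permutation of nominals, as the identity rule of \G does. Then the \axiom case of the Renaming Lemma follows from transitivity of $\approx$, the right-axiom reduct has exactly the end-sequent $\Delta_1\vdash B_1[\pi]$, and your chain goes through as in the paper.
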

\begin{proof}
  By the previous observation, any derivation making use of cut is normalizable, and therefore has a cut-free normal form by Lemma~\ref{normal-form-lemma}.
\end{proof}

\resetitemnum
}

\section{Applications}
\label{sec:applications}
{We now obtain the following important consequences of the cut elimination theorem.

\subsection{Consistency}
\begin{corollary}[Consistency of \LDINinf]
  \label{conldininf}
  \LDINinf is consistent.
\end{corollary}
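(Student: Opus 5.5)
The plan is to show that the empty-context sequent $\vdash\bot$ has no derivation in \LDINinf. I would argue by contradiction. Suppose some derivation $\Pi$ of $\vdash\bot$ exists. By Theorem~\ref{main-theorem} (cut admissibility), there is then a cut-free derivation $\hat\Pi$ of the same sequent $\vdash\bot$. The remaining task is to inspect the last rule of $\hat\Pi$ and check that no cut-free rule can have conclusion $\vdash\bot$.

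The case analysis runs over the rules of \LDINinf, split by where they act.

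\textbf{Left rules.} Every left introduction rule (\botL, \impL, \andL{i}, \orL, \allL, \exL, \nabL, \DL, \IL) needs a formula in the context of its conclusion. The same holds for \cL and \wL and for \axiom, whose conclusion $A\vdash A[\pi]$ has a nonempty context. Since the context here is empty, none of these rules can end $\hat\Pi$.

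\textbf{Right rules.} The right introduction rules \topR, \impR, \andR, \orR{i}, \allR, \exR and \nabR each have a conclusion whose consequent is headed by the corresponding connective, never by $\bot$. \DR and \IR have an atomic consequent $p\ \vec t$. This is never $\bot$, because $\bot$ is a logical constant and not a predicate constant defined in $\mc D$ or $\mc I$.

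\textbf{Multicut.} The multicut rule \multc is excluded because $\hat\Pi$ is cut-free.

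Hence no rule can conclude $\hat\Pi$, a contradiction, so $\vdash\bot$ is underivable and \LDINinf is consistent. Any formula is then underivable in the sense relevant here.

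I do not expect a real obstacle, since all the hard work is in Theorem~\ref{main-theorem}. The only step that needs care is confirming the case analysis is exhaustive. In particular, I need to check that \DR cannot produce $\bot$, i.e.\ that there is no clause with head $\bot$. This holds because heads are atomic formulas $p\ \vec t$ for predicate constants $p$.
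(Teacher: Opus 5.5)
Your proposal is correct and follows the paper's own proof. Like the paper, you use Theorem~\ref{main-theorem} to get a cut-free derivation of $\vdash\bot$ and then rule out every possible last rule by case analysis; you just spell that case analysis out in more detail. One small wording slip: the closing remark that ``any formula is then underivable'' should instead say that not every formula is derivable, since for example $\vdash\top$ is derivable by \topR.
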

\begin{proof}
  It suffices to notice by case analysis that there are no cut-free derivations of $\vdash \bot$. Since every derivation has a cut-free normal form by Theorem~\ref{main-theorem}, $\vdash\bot$ is not derivable in \LDINinf.
\end{proof}

\begin{corollary}[Consistency of \LDIN]
  \LDIN is consistent.
\end{corollary}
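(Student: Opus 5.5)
The plan is to combine the two results already in hand: Corollary~\ref{conldininf} (consistency of \LDINinf) and Lemma~\ref{conldin} (the Interpretation lemma). Lemma~\ref{conldin} says exactly that consistency of \LDINinf transfers to \LDIN, so the corollary follows by modus ponens.

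To make the argument explicit, I would argue by contradiction. Suppose $\emptyset;\cdot\vdash\bot$ is derivable in \LDIN. The sequent has an empty variable context, so the only grounding substitution is the empty one, $\epsilon_\emptyset$. Applying the Grounding Lemma with $\delta=\epsilon_\emptyset$ gives a derivation of the ground sequent $\vdash\bot$ in \LDINinf. This contradicts Corollary~\ref{conldininf}, which rests on Theorem~\ref{main-theorem}: a cut-free derivation of $\vdash\bot$ would have to end in some rule. A case check rules out every option: there are no left rules since the context is empty, there is no right rule for $\bot$, and \axiom needs an atomic formula on the left.

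One should also note where ``consistent'' is meant in the sense of non-derivability of $\vdash\bot$. Since \botL together with a cut would yield every formula, this notion coincides with ``not all formulas are provable.'' That takes one line: if every formula were provable in \LDIN, then in particular $\vdash\bot$ would be.

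I do not expect a real obstacle at this level, because all the substance sits in the Grounding Lemma and the cut elimination theorem. The only step that needs care is confirming that the Grounding Lemma applies to a sequent with empty eigenvariable context, where $\bot\delta=\bot$ trivially. I would also check that the \LDINinf logic is built from the same $\mc D$ and $\mc I$, with $\mc D$ weakly stratified and $\mc I$ strictly stratified, so that the hypotheses match.
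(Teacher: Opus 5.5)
Your proposal is correct and matches the paper's proof: it combines the Interpretation lemma (the Grounding Lemma applied with the empty substitution) with the consistency of \LDINinf, which comes from cut elimination. The extra case check you spell out, that no cut-free rule can conclude $\vdash\bot$, is the same argument the paper gives for Corollary~\ref{conldininf}.
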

\begin{proof}
  This follows from the interpretation lemma (Lemma~\ref{conldin}) and the consistency of \LDINinf.
\end{proof}

\subsection{Proof search}
 An important consequence of the cut admissibility theorem is the following. A rule is \emph{invertible} (or \emph{asynchronous}) if the cut-free derivability of its conclusion implies the cut-free derivability of its premises. Furthermore, we say that a clause $H\defn B$ is invertible if $H$ is not unifiable with any other clause head $H'$
. The following corollary gives us completeness guarantees for proof search procedures in \LDINinf. 

\begin{corollary}[Inversion theorem for \LDINinf]
  \label{inversion-theorem}
   The logic \LDINinf satisfies the following meta-theoretic properties:
  \begin{enumerate}
  \item\label{inversion-theorem:1} The rules \andR, \andL*, \orL,\impR,\allR, \exL,\nabR,\nabL, \DL, 
    and \muR are invertible.
  \item\label{inversion-theorem:2} If a clause $H\defn B$ is invertible, then its corresponding \DR rule is invertible.
  \item\label{inversion-theorem:3} When the context is empty, we obtain the following partial invertability results for the rules \orR, \exR, and \DR:
    \begin{enumerate}
    \item If $\vdash A\vee B$ has a cut-free derivation, then either $\vdash A$ or $\vdash B$ have a cut-free derivation.
    \item If $\vdash \exists_\alpha x.C$ has a cut-free derivation, then $\vdash C[t/x]$ has a cut-free derivation for some $t\in\ground(\alpha)$.
    \item If $\vdash p\ \vec t$ has a cut-free derivation for a definition $p$, then there exists a clause $H\defn B$ and a formula $B'$ such that $\dfn(H\defn B,p\ \vec t,\epsilon_\emptyset,B')$ and $\vdash B'$ has a cut-free derivation.
  \end{enumerate}
\end{enumerate}
\end{corollary}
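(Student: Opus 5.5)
The approach is the standard one: use the cut elimination theorem (Theorem~\ref{main-theorem}) to reduce invertibility of a rule to the construction of a derivation using a single cut, and then normalize that derivation. So suppose a rule has conclusion $\Gamma'\vdash C'$ with a cut-free derivation $\Pi$ and premise $\Gamma\vdash C$. It suffices to build some \LDINinf derivation of the premise, possibly with cuts. Cut admissibility then yields a cut-free derivation of the same sequent.

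For part~\ref{inversion-theorem:1}, I will handle the right rules and the left rules separately.

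\emph{Right rules.} For a rule such as \andR, I cut $\Pi$ against an \init derivation followed by the appropriate left rule. Concretely, for the $i$-th premise I take $\multcut(\Pi,\Psi)$, where $\Psi$ derives $B_1\wedge B_2\vdash B_i$ by \andL{i} and \init. The cases \impR, \allR, \nabR and \muR are analogous:
\begin{itemize}
\item for \impR, $\Psi$ is $A\imp B, A\vdash B$, obtained by \impL from two \init derivations;
\item for \allR, $\Psi$ derives $\forall x.C\vdash C[t/x]$ by \allL;
\item for \nabR, $\Psi$ derives $\nabla x.C\vdash C[n/x]$ by \nabL;
\item for \muR, I need $p\ \vec t\vdash B\ p\ \vec t$, which comes from \muL with invariant $S=B\ p$. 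The invariant premises $B\ (B\ p)\ \vec u\vdash B\ p\ \vec u$ are obtained by monotonicity of $B$ (derivable by induction on $B$ via \init and \muR, using strict stratification). The remaining premise $B\ p\ \vec t\vdash B\ p\ \vec t$ is an instance of \init.
\end{itemize}

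\emph{Left rules.} For \andL*, \orL, \exL, \nabL and \DL, I cut the premise's formula back in. For example, for \orL with conclusion $\Gamma,B\vee C\vdash D$, I take $\multcut(\Upsilon,\Pi)$ with $\Upsilon : B\vdash B\vee C$ derived by \orR1 and \init. For \andL* the statement is read as: from $\Gamma,B\wedge C\vdash D$ we get $\Gamma,B,C\vdash D$, via \andR on two \init derivations. For \exL I use \exR, and for \nabL I use \nabR, taking care that the chosen $n$ is fresh and renaming nominals otherwise. For \DL, given a premise $B'$ with $\dfn(H\defn B,A,\epsilon_\emptyset,B')$, the derivation $B'\vdash A$ is obtained by \DR and \init.

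For part~\ref{inversion-theorem:2}, I cut $\Pi$ with $A\vdash B'$. This is obtained by \DL: since $H$ is not unifiable with any other head, the only premises of \DL come from this clause, and by uniqueness of the grounding matcher $\rho$ (the pattern-fragment argument in Section~\ref{sec:ldininf}) the single premise is $B'\vdash B'$, closed by \init.

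For part~\ref{inversion-theorem:3}, I do a direct case analysis on the last rule of a cut-free derivation of $\vdash F$ with empty context. No left rule, structural rule or \axiom can apply, since the context is empty, and \multc is excluded because the derivation is cut-free. So the last rule is a right rule for the main connective: \orR{i}, \exR, or \DR. Its premise is exactly the required cut-free sequent.

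The main obstacle is the \muR inversion, specifically checking that monotonicity of $B$ is derivable with the chosen invariant, since it needs positivity (strict stratification) of $p$ in $B$. The \DL and \DR cases also require care with the $\dfn$ matching.
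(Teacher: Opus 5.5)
Your proposal is correct and follows essentially the same approach as the paper. Each inversion is obtained by cutting the given derivation against a small \init-based derivation and invoking Theorem~\ref{main-theorem}: \muR via \muL on the invariant $B\ p$ plus monotonicity (the paper's Lemma~\ref{inductive-fixpoint}), \DR via a single-premise \DL justified by pattern uniqueness, and part~(3) by direct case analysis. The only slip is cosmetic: \init has a single hypothesis, so the sequents $A,B\vdash B$ and $B,C\vdash B$ in your \impR and \andL* constructions each need an extra \wL step.
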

\begin{proof}
  This makes essential use of the cut admissibility property.

  For (\ref{inversion-theorem:1}), we only show the case of \andR, as all other cases are similar (except for \muR, which requires the intermediate theorem that $\Gamma,B\ p\ \vec t\vdash C$ implies $\Gamma,p\ \vec t\vdash C$, see Lemma~\ref{inductive-fixpoint}). Suppose that $\Gamma\vdash A\wedge B$ is derivable, then the following derivations may be obtained:
  \begin{mathpar}
    {\inferrule*[right=\multc]
      {\Gamma\vdash A\wedge B\qquad
        \inferrule*[Right=\andL1]
        {\inferrule*[Right=\init]
          { }{A\vdash A}}
        {A\wedge B\vdash A}}
      {\Gamma\vdash A}}\and
        {\inferrule*[right=\multc]
      {\Gamma\vdash A\wedge B\qquad
        \inferrule*[Right=\andL2]
        {\inferrule*[Right=\init]
          { }{B\vdash B}}
        {A\wedge B\vdash B}}
      {\Gamma\vdash B}}
  \end{mathpar}
  Since \multc can be eliminated, there exists cut-free proofs of $\Gamma\vdash A$ and $\Gamma\vdash B$.
  
  For (\ref{inversion-theorem:2}), first suppose that $A = H\rho$, and that $\Gamma\vdash A$ is derivable. We note that the only way for $\dfn(H'\defn B', A, \theta, B_0)$ to hold is if $A\theta = H'\rho'$ for some $\rho'$, and $B_0 = B'\rho'$. However this means $H'\rho' = A\theta = (H\rho)\theta$, which is not possible unless $H = H'$ (and $B = B'$) since $H\defn B$ is invertible. By Lemma~\ref{definition-uniqueness-lemma}, the $L_\lambda$ pattern condition on $H$ ensures that $\rho'$ is unique, and thus $\rho' = \theta\circ\rho$, from which it follows that $B_0 = (B'\rho') = (B\rho)\theta$, and thus we obtain the derivation
  \begin{mathpar}
    \inferrule*[right=\multc]
    {\Gamma\vdash A\\
      {\inferrule*[Right=\DL, rightskip=8em]
      {\left\{\inferrule*[right=\init]{ }{B_0 \vdash (B\rho)\theta}\st
          \dfn(H'\defn B', A, \theta, B_0)\right\}}
      {A\vdash B\rho}
    }}
  {\Gamma\vdash B\rho}
\end{mathpar}
Since cut may be eliminated, it follows that $\Gamma\vdash B\rho$ has a cut free derivation.

  For (\ref{inversion-theorem:3}), this follows by case analysis on the last rule of the derivation, which can only be a right introduction rule. This does not require the cut admissibility property, and as such also holds in \LDIN.
\end{proof}

We note that this theorem only applies to the infinitary logic \LDINinf, and not to \LDIN. Extending this theorem to \LDIN requires a cut elimination theorem for \LDIN, which is presently out of reach.
}

\section{Related and future work}
\label{sec:conclusion}
{
In this work, we've shown how to extend the logic \LD to a new logic \LDIN with inductive definitions and nabla quantification, which we've proven to be consistent. The addition of weak stratification allows us to make definitions commonly used to reason about computer programs, such as logical relations. The eventual aim of this work is to extend the logic \G, underlying Abella, with this broader body of definitions.

In another line of work towards this goal, Baelde and Nadathur have developed the \NJI calculus \cite{BaNa12}, which is an extension of deduction modulo (see eg. Dowek \cite{Do14}) with inductive and co-inductive definitions. Their work similarly shows that the same kind of definitions are permissible, but using a technique involving a rewriting system on propositions. Consistency is obtained by proving strong normalization for \NJI.

However, there are several future extensions to this work that seem more promising from the direction of \LD. First, although \NJI includes co-induction, it is not clear how to define nominal abstraction in the natural deduction context, whereas we believe that adding co-induction and nominal abstraction to \LDIN will not add much complexity to our present work. Adding these features into the logic is a necessary step to fully justify extending the Abella proof assistant. 

Second, we desire a cut elimination theorem for our target extension of \G. We saw with Corollary~\ref{inversion-theorem} that a cut elimination theorem allows us to obtain certain desirable properties about a logic. However, we are presently only able to obtain cut elimination for the ground logic \LDINinf, and not the logic \LDIN. Similarly, \NJI does not help us in this regard, since the structure of natural deduction derivations is quite different from that of the sequent calculus.

Third, we envision extending the logic to allow for higher order quantification. This would on the one hand allow for much greater generality in the statement of theorems and definitions, and on the other, the increased strength of our logic would allow us to prove consistency theorems for higher order logics. In this regard, the \NJI calculus's introduction of rewriting into the system affects the intensional treatment of syntax that is crucial to preserving our desired interpretation of equality. This is not an immediate concern in a first order system for which rewriting is restricted to propositions. However, this leads to further complications for a higher order system.

In addition to these key research directions, we envision two other possibilities for future work. First, the logics \LD and \LDIN both require restricting definition clauses $H\defn_{\mc X} B$ to those in which $H$ must lie in the $L_\lambda$ pattern fragment. We believe this is not an essential restriction, and that our current proof can be adapted by a slight modification of the ground \DL rule to remove this restriction. Finally, we aspire to eventually formalize this work. One could imagine formalizing this work inside the logic we are defining itself, which, in order to define reducibility as we are using it would require a notion of induction compatible with weak stratification. Note that \`a priori this poses no G\"odelian circularities since the logic of definitions is not a single logic, but a family of logics indexed by a set of definitions whose stratification must be proven at the meta-theoretic level.



\section{Aknowledgements}
The idea of expanding upon Tiu's work was motivated and proposed by my advisor, Gopalan Nadathur, who's advising, through patient explanation and many discussions, has been invaluable to the development of this work. In addition, the present document has been thoroughly reviewed by my advisor, whose many suggestions have greatly improved this work. Finally, the present counter-example in section~\ref{examples:ind-counterexample} is due to Nadathur, who found a way to eliminate an assumption about fixed-points from a previous example.
}

\bibliographystyle{abbrv}
\bibliography{biblio}

\appendix
{
\section{Proofs of lemmas}
\label{appendix:lemmas}
\subsection{The renaming lemma}
\label{appendix:renaming-lemma}

Denote $A\approx A'$ if there exists a permutation $\pi\in\Perm$ such that $A' = A[\pi]$. Note that this is an equivalence relation. More generally, say $A_1,\ldots,A_n\approx A_1',\ldots,A_n'$ if $A_i \approx A_i'$ for each $i$. We now define $\Pi\sim \Pi'$ if both of the following conditions are satisfied
\begin{enumerate}
\item $\Pi$ and $\Pi'$ each end with sequents $\Gamma \vdash C$ and $\Gamma' \vdash C'$ such that  $\Gamma\approx \Gamma'$ and $C \approx C'$.
\item $\Pi$ and $\Pi'$ end with the same rule and $\Pi_i\sim\Pi_i'$ for all the premise derivations $\{\Pi_i\}_{i\in\mc I}$ of $\Pi$ and $\{\Pi_i'\}_{i\in\mc I}$ of $\Pi'$.
\end{enumerate}
We note in particular that if $\Pi\sim\Pi'$, then $\height(\Pi) = \height(\Pi')$. Furthermore, $\sim$ is an equivalence relation.

\begin{lemma}[Renaming Lemma]
  \label{renaming-lemma}
Let $\Pi$ be a derivation of $B_1,\ldots, B_n\vdash C$, and suppose $B_i\approx B_i'$ and $C\approx C'$, then there exists a derivation $\Pi'$ of $B_1',\ldots,B_n'\vdash C'$ such that $\Pi\sim\Pi'$.
\end{lemma}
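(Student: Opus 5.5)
We prove the Renaming Lemma by induction on the height $\height(\Pi)$, transporting the last rule of $\Pi$ along the given renamings and applying the induction hypothesis to each premise derivation. The key observation is that every rule of \LDINinf is equivariant under nominal permutations. If a rule instance has conclusion $\Gamma\vdash C$ and premises $\{\Gamma_i\vdash C_i\}$, then for any permutation $\pi$ the instance with conclusion $\Gamma[\pi]\vdash C[\pi]$ and premises $\{\Gamma_i[\pi]\vdash C_i[\pi]\}$ is again an instance of the same rule. This is why the conditions on each rule are stated in terms of supports and are invariant under renaming of nominals.

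The plan has three steps.

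First, we isolate the nominals that actually matter at the conclusion. The renamings $B_i\approx B_i'$ and $C\approx C'$ may use \emph{different} permutations $\pi_1,\ldots,\pi_n,\pi_C$, one for each formula. Since $\approx$ only constrains each formula up to its own permutation, we work formula by formula. The premises of each rule are built from the principal formula (transformed by the rule) together with the side formulas, which are copied unchanged. Side formulas in the premises are renamed by the same permutation as in the conclusion. The principal formula's subformulas in the premises are renamed by the permutation chosen for the principal formula.

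Second, we do a case analysis on the last rule of $\Pi$.
\begin{itemize}
\item For the propositional rules, \cL, \wL, \botL and \topR this is immediate, since subformulas of $B[\pi]$ are the corresponding renamings.
\item For \allL, \exR, \allR, \exL and \IL, we rename the instantiating ground terms $t$ by the same permutation. In the infinitary rules the family indexed by $\ground(\alpha)$ is closed under renaming, so we reindex by $t\mapsto t[\pi]$. The invariant premises $B\,S\,\vec u\vdash S\,\vec u$ of \IL are left unchanged.
\item For \nabL and \nabR, with fresh $n\notin\supp(C)$, we extend $\pi$ to $\pi'$ sending $n$ to some nominal $n'$ that is fresh for $C[\pi]$; such an $n'$ exists because supports are finite. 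Then $C[n/x][\pi']=C[\pi][n'/x]$.
\item For \DL and \DR, definition clauses have empty support, so $\dfn(H\defn B,A,\epsilon_\emptyset,B')$ holds iff $\dfn(H\defn B,A[\pi],\epsilon_\emptyset,B'[\pi])$ holds, via $\rho[\pi]$. This gives a bijection between premise sets.
\item For \axiom, if $A'\approx A$ and $A''\approx A[\sigma]$, then $A''$ is a permutation of $A'$, so \axiom still applies; here we use that $\approx$ is an equivalence relation.
\item For \multc, the cut formulas are internal to the rule, so we simply keep them and rename only the side contexts.
\end{itemize}

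Third, in each case we apply the induction hypothesis to every premise; this is legitimate since premise heights are strictly smaller. We then reassemble with the same rule, which gives $\Pi\sim\Pi'$ directly from the definition of $\sim$.

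The main obstacle is the interaction of multiple independent permutations with rules that relate several formulas at once. These are \axiom and multicut, and also \impL, where the minor premise's consequent $B$ comes from the principal formula $B\imp C$ while its context is $\Gamma$. In these cases we must check that each formula in each premise is $\approx$-related to the corresponding original, rather than demanding a single global permutation. Because $\sim$ and the Lemma's hypotheses are stated per formula, this works. The one delicate point is freshness in the \nabla rules, which is resolved by the choice of $\pi'$ described above.
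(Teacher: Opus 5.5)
Your proposal follows the paper's route: induction on the derivation, rebuilding the last rule. As in the paper, \axiom is handled by transitivity of $\approx$, and \multc by keeping the cut formulas fixed (reflexivity of $\approx$). The $\nabla$ rules are handled by picking a nominal $n'$ fresh for the renamed formula, with a permutation that agrees with the given one on $\supp(C)$ and sends $n$ to $n'$. You also spell out cases the paper omits, and one of them needs a repair.

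In the \IL case you keep the invariant $S$ fixed. The new major premise is then $\Gamma', S\,(\vec u[\pi])\vdash C'$, and the induction hypothesis applies only if $S\,(\vec u[\pi])\approx S\,\vec u$. That holds when $\supp(S)=\emptyset$, but the paper only requires $S$ to be closed, so it may contain nominal constants. For example, take $S=\lambda x.\,q\,x\,n_1$, $\vec u=n_1$ and $\pi$ the swap of $n_1,n_2$. Then $S\,n_2=q\,n_2\,n_1$ is not a renaming of $S\,n_1=q\,n_1\,n_1$, so your step fails.

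The fix is to switch to the invariant $S[\pi]$. Then $S[\pi]\,(\vec u[\pi])=(S\,\vec u)[\pi]\approx S\,\vec u$. Each new minor premise $B\,S[\pi]\,\vec t\vdash S[\pi]\,\vec t$ is obtained by the induction hypothesis from the original premise indexed by $\vec t[\pi^{-1}]$. This requires the operator $B$ to have empty support, as clause bodies in $\mc D$ do. The ``same rule'' clause in the definition of $\sim$ must also tolerate this change of invariant.

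Separately, your \axiom step, exactly like the paper's, needs \axiom to accept $A\vdash A'$ for every $A\approx A'$. Independent renamings of the two sides can move nominals outside $\supp(A)$, so the literal side condition $\pi\in\Perm(\supp(A))$ would not suffice.
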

\begin{proof}
  By induction on the derivation $\Pi$. We only do the cases for \multc, \axiom, and \nabR.
  \case{\axiom} Suppose $\Pi$ ends with \axiom, in particular $C \approx B_i$ for some $i$. Then by transitivity, $C' \approx C\approx B_i\approx B_i'$, thus $\Pi'$ is again \axiom.

  \case{\multc} Suppose $\Pi$ ends with \multc, with cut formulas $B_1,\ldots,B_n$ and premise derivations $\Pi_1,\ldots,\Pi_n$ and $\Pi$. Then by the induction hypothesis, there are derivations $\Pi_i'$ of $\Delta'\vdash B_i$ and $\Pi'$ of $\Gamma',B_1,\ldots, B_n\vdash C'$ since $\approx$ is reflexive. So $\Pi'$ is \multc with those premise derivations and the same cut formulas.

  \case{\nabR} Suppose $\Pi$ ends with \nabR, introducing the formula $\nabla x.C$. Then the premise derivation $\Pi_1$ is a derivation of $\Gamma\vdash C[n/x]$ where $n\not\in\supp(C)$. Suppose $n'\not\in\supp(C')$, then by the induction hypothesis there exists a derivation $\Pi_1'$ of $\Gamma'\vdash C'[n'/x]$, and $\Pi'$ is thus again \nabR.
\end{proof}

The next lemma shows that the renaming relation is a congruence with respect to height, levels, reduction, normalizability, and reducibility. 
\begin{lemma}[Renaming congruence lemma]
  \label{renaming-congruence-lemma}
  We make the following observations about the renaming relation $\Xi\sim\Xi'$:
  \begin{enumerate}
  \item If $\Pi\sim\Pi'$, then $\height(\Pi)=\height(\Pi')$ and $\lvl(\Pi) = \lvl(\Pi')$.
  \item If $\Pi\to\Xi$, then for any $\Pi'$ such that $\Pi\sim\Pi'$, there exists $\Xi'$ such that $\Pi'\to\Xi'$ and $\Xi\sim\Xi'$.
  \item If $\Pi$ is normalizable and $\Pi\sim\Pi'$, then $\Pi'$ is normalizable.
    
  \item If $\Pi$ is $\gamma$-reducible and $\Pi\sim\Pi'$, then $\Pi'$ is $\gamma$-reducible.
  \end{enumerate}
\end{lemma}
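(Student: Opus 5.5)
We treat the four items in order, each one feeding the next.

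\textbf{Item 1.} Both equalities are direct. $\height(\Pi)=\height(\Pi')$ holds by induction on $\Pi$, since $\sim$ requires the same rule and pairwise related premise derivations indexed by the same set $\mc I$. For levels, $\lvl(\Pi)=\lvl(C)$ and $C'=C[\pi]$. So it suffices to show $\lvl(F[\pi])=\lvl(F)$ for any ground formula $F$. We prove this by induction on the structure of $F$; the quantifier cases also use the fact that $t\mapsto t[\pi]$ is a bijection on $\ground(\alpha)$. This reduces to a mild equivariance assumption on the atomic level assignment, namely $\lvl(A[\pi])=\lvl(A)$. The paper implicitly relies on this assumption: in its \nabla clause, $\lvl(C[n/x])$ must not depend on the choice of $n$. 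We make it explicit.

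\textbf{Item 2.} We argue by case analysis on the reduction $\Pi\to\Xi$, following the case groups of Section~\ref{sec:ldininf-cutred}. Since $\Pi\sim\Pi'$, the redex $\Pi'$ ends with a multicut whose premise derivations are related to those of $\Pi$ and end with the same rules. Hence the same case of the reduction relation applies to $\Pi'$. The reduct $\Xi'$ is built by the same recipe: for essential, commutative and structural cases, we reassemble premise derivations of $\Pi'$ using the same rules. In each case we check that $\Xi\sim\Xi'$ via the Renaming Lemma (Lemma~\ref{renaming-lemma}), which provides renamed copies of any subderivation that must be placed at a new endsequent. For example, in the \nabR/\nabL essential case the eigen-nominals on the two sides may be chosen differently, and we rename one side to match.

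The delicate cases are the following.
\begin{itemize}
\item The \emph{inductive case}: we need $\mu(\Pi_1,\Pi_S)\sim\mu(\Pi_1',\Pi_S')$, which we prove by induction on $\Pi_1$, following the construction of unfolding.
\item The \emph{right axiom case}: $\Xi$ is already a renaming of $\Pi_1$, so $\Xi'$ is the corresponding renaming of $\Pi_1'$, and we use transitivity of $\sim$.
\item The \emph{left multicut case}: we apply the induction hypothesis (on height) to the inner reduction.
\end{itemize}

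\textbf{Items 3 and 4.} Both follow by induction on the inductive definitions.

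For normalizability: if $\Pi$ ends with a multicut, every reduct of $\Pi'$ corresponds, by item 2 applied with $\sim$ symmetric, to a $\sim$-related reduct of $\Pi$, which is normalizable. Otherwise the premise derivations are related pairwise, and the induction hypothesis applies.

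For $\gamma$-reducibility we induct on the $\gamma$-reducibility of $\Pi$, using item 1 for the level condition and item 3 for the normalizable minor premises in the \impL and \IL clauses. The \impR clause is the hard one. Given an $\alpha$-reducible $\Upsilon$ for the renamed antecedent $A'=A[\pi]$, we must show $\multcut(\Upsilon,\Pi'_0)$ is $\gamma$-reducible. We first rename $\Upsilon$ to $\Upsilon^*$ proving $A$; we may do this if reducibility is preserved under renaming, which is what we are proving. This circularity is the main obstacle. We plan to break it by proving item 4 simultaneously for all $\gamma$, by transfinite induction on $\gamma$. Since $\alpha=\lvl(A)<\lvl(\Pi)\le\gamma$, the needed renaming of $\Upsilon^*$ falls under the outer hypothesis at level $\alpha$. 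Then $\multcut(\Upsilon^*,\Pi_0)\sim\multcut(\Upsilon,\Pi'_0)$, and the inner induction hypothesis finishes the step.
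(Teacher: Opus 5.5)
Your decomposition is the one the paper records (it omits all details): case analysis with induction on height for (2), induction on normalizability using (2) for (3), and transfinite induction on $\gamma$ followed by induction on $\gamma$-reducibility for (4). Your two refinements are correct. First, (1) does need $\lvl(A[\pi])=\lvl(A)$ for atoms; otherwise the axiom derivations of $A\vdash A$ and $A[\pi]\vdash A[\pi]$ are $\sim$-related but have different levels. Second, your handling of the \impR clause, using $\lvl(A)<\lvl(A\imp B)\le\gamma$ to invoke the outer hypothesis, is exactly why the induction on $\gamma$ must come first.

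The step that fails is the inductive case of (2), namely the claim $\mu(\Pi_1,\Pi_S)\sim\mu(\Pi_1',\Pi_S')$. At an \IR node with conclusion $p\ \vec t$, unfolding plugs in the particular family member $\Pi_S^{\vec t}$. At the corresponding node of $\Pi_1'$ it plugs in $(\Pi_S')^{\vec t'}$, where you only know $p\ \vec t'\approx p\ \vec t$. But $\sim$ relates family members index by index, and an \IL family in \LDINinf is an arbitrary, non-uniform collection of derivations.

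Here is a concrete counterexample. Take $q\ x\defn\top$, $p\ x\ind(\lambda P.\lambda x.\top)\ p\ x$, and invariant $S=q$. Let $\Pi_S^{n_1}$ be \DR over \topR, while every other $\Pi_S^{u}$ is \DR over \wL over \topR; all of them prove $\top\vdash q\ u$. Let $\Pi$ cut $\vdash p\ n_1$ (by \IR over \topR) against $p\ n_1\vdash q\ n_1$ (by \IL with this family and \axiom). Let $\Pi'$ be the same derivation with $n_2$ in place of $n_1$, its \IL rule using the same family. Then $\Pi\sim\Pi'$, and each has only the inductive reduct. The two reducts contain $\Pi_S^{n_1}$ and $\Pi_S^{n_2}$ in corresponding positions, and the premises of these end with different rules. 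So no reduct of $\Pi'$ is $\sim$-related to the reduct of $\Pi$. The same happens at deeper \IR nodes, e.g.\ below an \exR whose witness is renamed, even when the two \IL nodes coincide.

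Hence (2) is false for $\sim$ as defined, and your (3) and (4), whose multicut clauses rest on (2), are not established. The paper's sketch has the same hole, so this is a defect of the lemma rather than a deviation from its route. A repair needs a genuinely new ingredient. One option is an equivariance condition on \IL families, with $\sim$ re-indexing infinitary premises along the renaming, shown to be stable under renaming and reduction. Another is a proof of (3) and (4) that does not go through (2).
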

\begin{proof}
  We omit the details of this proof. Part (1) is clear by induction on the height of the derivation. Part (2) is shown by induction on the height of the derivation and by case analysis on the reduction relation. Part (3) is shown by induction on normalizability, together with the result from part (2). Part (4) is shown first by transfinite induction on the level $\gamma$ and then by subordinate induction on $\gamma$-reducibility, together with the results from parts (2) and (3).
\end{proof}

\subsection{The grounding lemma}
\label{appendix:grounding-lemma}
\lemgroundinglemma*
\begin{proof}
  The proof is by induction on the height of the derivation of $\mc X;\Gamma\vdash C$. Note that we do not require $\supp(\delta)$ to be empty. This is straightforward for all rules except the following (we omit the symmetric cases \exL, \exR, and \nabL):
  \case{\allR} Suppose the derivation ends with \allR with $C = \forall_\alpha x.D$.
  Given $y\not\in\mc Y$ and $\vec n = \supp(D)$, suppose $\delta$ is an $\mc Y$-grounding substitution. Since $x$ may be renamed in $\forall_\alpha x.D$, we will assume that $x\not\in \mc Y$. Now for every $t\in\ground(\alpha)$, notice that
  $$(D[y\ \vec n/x])[\delta,(\lambda\vec x.t)/y] = (D[\delta,y\ \vec n/x])[\lambda\vec x.t/y] = (D[\delta,x/x])[t/x].$$

  Since $\delta' = [\delta,(\lambda\vec x.t)/y]$ is an $\mc Y,y$-grounding substitution, we may apply it to the premise ${\mc Y,y;\Gamma\vdash D[y\ \vec n/x]}$ to obtain a ground derivation of $\Gamma\delta'\vdash (D[\delta,x/x])[t/x]$ by the induction hypothesis, as follows
    \begin{mathpar}
    {\inferrule*[right=\allR]
      {\mc Y,y;\Gamma\vdash D[y\ \vec n/x]}{\mc Y;\Gamma\vdash \forall_\alpha x.D}}
    \quad\leadsto\quad
    {\inferrule*[right=\allR]
      {\left\{\Gamma\delta'\vdash (D[\delta,x/x])[t/x])\right\}_{t\in\ground(\alpha)}}
      {\Gamma\delta\vdash (\forall_\alpha x.D)\delta}}
  \end{mathpar}
  where we note that $(\forall_\alpha x.D)\delta = \forall_\alpha x.(D[\delta,x/x])$.
   \case{\allL}
   Suppose the derivation ends with \allL with $\Gamma = \Gamma', \forall_\alpha x.B$,
   and suppose $t : \alpha$ is a term in context $\mc Y$. Since $x$ may be renamed in $\forall x.B$, we will assume that $x\not\in\mc Y$.
   We apply the induction hypothesis to the premise with the ground substitution $\delta$, and since
   $$(B[t/x])\delta = B[\delta,t\delta/x] = (B[\delta,x/x])[t\delta/x],$$
   we obtain a ground derivation of $\Gamma\delta, (\forall_\alpha x.B)\delta \vdash C\delta$ as follows:
  \begin{mathpar}
    {\inferrule*[right=\allL]
      {\mc Y;\Gamma', B[t/x]\vdash C\\
      \mc Y\vdash t : \alpha}
      {\mc Y;\Gamma',\forall_\alpha x.B\vdash C}}\quad\leadsto\quad
    {\inferrule*[right=\allL]
      {\Gamma'\delta, (B[\delta,x/x])[t\delta/x]\vdash C\delta\\
      \vdash t\delta : \alpha}
      {\Gamma'\delta,(\forall_\alpha x.B)\delta\vdash C\delta}}
  \end{mathpar}
  where we note that $(\forall_\alpha x.B)\delta = \forall_\alpha x.(B[\delta,x/x])$.

  \case{\DL} Suppose the derivation ends with \DL on the atom $A$ where $\Gamma = \Gamma', A$. It is clear that if $\dfn(H\defn_{\mc X} B, A\delta, \epsilon_\emptyset, B')$, then $\dfn(H\defn_{\mc X} B, A, \delta, B')$, and furthermore that $\range(\delta) = \emptyset$ since $\delta$ is a ground substitution. Thus an \LDINinf derivation of a premise $\Gamma'\delta, B'\vdash C\delta$ below is obtained by applying the induction hypothesis to the ground premise $\emptyset;\Gamma'\delta,B'\vdash C\delta$ with the empty grounding substitution. 

    \begin{mathpar}
    {\inferrule*[right=\DL]
      {\{\mc Z; \Gamma'\theta, B'\vdash C\theta\st \dfn(H\defn_{\mc X} B, A, \theta, B')
        \text{ and }\mc Z = \range(\theta)\}}
      {\mc Y; \Gamma', A \vdash C}
    }\quad\leadsto\quad
    {\inferrule*[right=\DL]
      {\{\Gamma'\delta, B' \vdash C\delta\st\dfn(H\defn_{\mc X} B, A\delta, \epsilon_\emptyset,B')\}}
      {\Gamma'\delta, A\delta \vdash C\delta}
    }
    \end{mathpar}
We note that this has the effect of pruning branches from \DL.

  \case{\DR} Suppose the derivation ends with \DR with $C = A$ for an atom $A$. Note that if $A = H\rho$, then $A\delta = (H\rho)\delta$, thus $\dfn(H\defn B,A,\epsilon_\emptyset,B')$ implies $\dfn(H\defn B,A\delta,\epsilon_\emptyset,B'\delta)$, thus by applying the induction hypothesis to the premise $\mc Y;\Gamma\vdash B'$ with the ground substitution $\delta$, we obtain
  \begin{mathpar}
    {\inferrule*[right=\DR]
      {\mc Y;\Gamma\vdash B'}
      {\mc Y;\Gamma\vdash A}}
    \quad\leadsto\quad
    {\inferrule*[right=\DR]
      {\Gamma\delta\vdash B'\delta}
      {\Gamma\delta\vdash A\delta}}
  \end{mathpar}
  

  \case{\nabR} Suppose the derivation ends with \nabR with $C = \nabla_\alpha x. D$. Since $x$ may be renamed in $\nabla_\alpha x.D$, we may assume that $x\not\in\mc Y$. Now since $\delta$ may have nonempty support, we first pick $m\not\in\supp(D[\delta,x/x])$. Then, we apply the renaming lemma (\ref{renaming-lemma}) to the premise $\mc Y;\Gamma\vdash D[n/x]$ to obtain a derivation of ${\mc Y;\Gamma[m/n]\vdash D[m/x]}$ of the same height as the original premise derivation. We can thus apply it to the induction hypothesis with the ground substitution $\delta$ to obtain the ground premise $\Gamma[m/n]\vdash D[\delta,x/x][m/x]$ to obtain
  \begin{mathpar}
    {\inferrule*[right={\nabR\quad$n\not\in\supp(D)$}]
      {\mc Y;\Gamma\vdash D[n/x]}
      {\mc Y;\Gamma\vdash \nabla_\alpha x.D}}
    \quad\leadsto\quad
    {\inferrule*[right={\nabR\quad$m\not\in\supp(D[\delta,x/x])$}]
      {\Gamma[m/n]\delta\vdash (D[\delta,x/x])[m/x]}
      {\Gamma[m/n]\delta\vdash (\nabla_\alpha x.D)\delta}}
  \end{mathpar}
  where we note $(\nabla_\alpha x.D)\delta = \nabla_\alpha x.(D[\delta,x/x])$. A second application of the renaming lemma (\ref{renaming-lemma}) gives us a ground derivation of $\Gamma\delta\vdash(\nabla_\alpha x.D)\delta$.

  \case{\IL} Suppose the derivation ends with \muL with $\Gamma = \Gamma', p\ \vec t$ for an inductive definition $p\ \vec x \ind B\ p\ \vec x$. Here we must first apply the induction hypothesis to the derivation of $\vec x; B\ S\ \vec x\vdash S\ \vec x$ with the $\vec x$-grounding substitution $[\vec t/\vec x]$ for every $\vec t\in\ground(\vec\alpha)$, and then apply the induction hypothesis to the premise $\mc X;\Gamma,S\ \vec t\vdash C$ with the ground substitution $\delta$ to obtain
  \begin{mathpar}
    {\inferrule*[right=\IL]
      {{\vec x; B\ S\ \vec x\vdash S\ \vec x}\\
        {\mc X; \Gamma',S\ \vec t\vdash C}}
      {{\mc X;\Gamma', p\ \vec t\vdash C}}}
    \quad\leadsto\quad
    {\inferrule*[right=\IL]
      {\left\{B\ S\ \vec t\vdash S\ \vec t\right\}_{\vec t\in\ground(\vec\alpha)}\\
        \Gamma'\delta,S\ \vec t\delta\vdash C\delta}
      {\Gamma'\delta,(p\ \vec t)\delta\vdash C\delta}}
  \end{mathpar}

\end{proof}

\subsection{The unfolding lemma}
\label{appendix:unfolding-lemma}
We first make a technical definition. Given an inference rule, if it is a right introduction rule, all of its premises are called \emph{major} premises. If it is a left introduction rule or a multicut, then only those with the same consequent as its conclusion are major premises. All other premises are \emph{minor} premises. The only rules with minor premises are \impL, \IL, and \multcut, which will often require special attention.

\begin{lemma}
    Suppose $p\ \vec x\ind B\ p\ \vec x$ is an inductive definition, then for any derivation $\Xi$ of $\Delta\vdash C\ p$ where $p$ does not occur in $C$ and occurs only positively in $C\ p$ (ie. does not occur to the left of an implication), there exists a derivation $\mu(\Xi,\Pi_S)$ of $\Delta \vdash C\ S$.
\end{lemma}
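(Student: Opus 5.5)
We construct $\mu(\Xi,\Pi_S)$ by induction on the height of $\Xi$, generalising the statement so it goes through the whole derivation. Here $\Pi_S=\{\Pi_S^{\vec u}\}_{\vec u}$ is the family of derivations of $B\ S\ \vec u\vdash S\ \vec u$, and we work in \LDINinf, so every sequent is ground.

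The generalised statement concerns a derivation $\Xi$ of $\Delta, D_1\ p,\ldots,D_k\ p\vdash C\ p$. We allow $p$ to occur in the left-hand formulas $D_j\ p$ only negatively, and in $C\ p$ only positively. Throughout, $C,D_j$ are $p$-free abstractions and $\Delta$ is $p$-free; if $p$ occurs in $\Delta$ we abstract it as well. We then build a derivation of $\Delta, D_1\ S,\ldots,D_k\ S\vdash C\ S$. The hypotheses are needed because \impL and \impR swap polarities: the minor premise of \impL has a negatively-occurring left formula moved to the right, where it becomes positive. So the construction must treat both sides simultaneously, and the lemma as stated is the case $k=0$.

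The construction proceeds by case analysis on the last rule of $\Xi$. Logical rules, structural rules, \multc, \nabL/\nabR, \allR and \exL (with infinitely many premises), and \DL/\DR on predicates other than $p$ are all rebuilt from the recursively transformed premises. These cases work because substituting $S$ for $p$ commutes with the connectives. For \multc, the cut formulas may contain $p$ in mixed polarity. There I would first try to leave the cut formulas unchanged: they are treated as $p$-free parts of the context, i.e.\ $p$ is abstracted only at the designated occurrences. This is legitimate because the abstraction $C\ p$ need only capture the chosen positive occurrences, and the same holds for the $D_j$. The \axiom case on a $p$-atom needs care. If the occurrence of $p$ is designated on the right, the polarity condition forces the matching atom on the left to be a $D_j$-occurrence or to lie in $\Delta$. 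The mixed situation, where a designated $S$ on one side faces an undesignated $p$ on the other, is exactly the central case below.

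The key case is when $\Xi$ ends with \muR introducing a designated occurrence $C\ p=p\ \vec t$, with premise $\Delta\vdash B\ p\ \vec t$. Because $\mc I$ is strictly stratified, $p$ occurs only positively in $B\ p\ \vec t$. The induction hypothesis therefore applies and gives $\Delta\vdash B\ S\ \vec t$. We then cut this against $\Pi_S^{\vec t}$ to obtain $\Delta\vdash S\ \vec t$. The axiom $p\ \vec t\vdash p\ \vec t[\pi]$ with the right occurrence designated requires a derivation of $p\ \vec t\vdash S\ \vec t[\pi]$. We obtain it by a \muL step with invariant $S$ and premise family $\Pi_S$, closed by \init. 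This is where the earlier \init lemma is used, and it is also why the reducibility proof later needs $\multc(\Upsilon,\init)$ and the index measure.

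The main obstacle is the bookkeeping of polarities across \impL and \multc, not the \muR step. Positive and negative designations must stay consistent, and every designated occurrence that the derivation eventually introduces must be introduced by \muR on the right or by \muL/\axiom on the left. If designated negative occurrences on the left are also introduced by \muL or \DL, I would handle them by leaving them undesignated. This is possible because only the target occurrences in the conclusion need replacing, so the generalisation can designate occurrences minimally and track them upward.
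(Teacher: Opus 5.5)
Your essential cases match the paper's. Every other rule is rebuilt from its unfolded major premise. A designated \IR becomes a \multc of the unfolded premise $\Delta\vdash B\ S\ \vec t$ against $\Pi_S^{\vec t}$, with strict stratification keeping $p$ positive in $B\ p\ \vec t$. The \axiom $p\ \vec t\vdash (p\ \vec t)[\pi]$ becomes an \IL with invariant $S$ closed by \init. Cut formulas are left alone.

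What differs is the two-sided generalization, and the reason you give for it is mistaken. Under the lemma's hypothesis ($p$ never occurs to the left of an implication in $C\ p$), a designated occurrence can never reach the antecedent:
\begin{itemize}
\item In the \impR case the antecedent $C_1$ is $p$-free.
\item The minor premises of \impL, \IL and \multc have consequents other than the designated formula, so there is nothing in them to unfold and they are copied unchanged.
\item In particular, an \impL in $\Xi$ always decomposes a formula of $\Delta$, never a designated one.
\end{itemize}
This is exactly how the paper proceeds: $\Delta$ and all minor premises stay verbatim, and only the chain of major premises is unfolded. Your construction at $k=0$ never creates a designated left formula, so it reduces to the same thing.

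The extra generality would give a monotonicity result for $p$ under nested implications. But then polarity must be defined by parity on both sides. Under the strict reading, \impL on a designated $(p\ \vec t\imp\bot)\imp\bot$ yields the right formula $p\ \vec t\imp\bot$, which is not strictly positive, so your invariant breaks. Two side remarks should also be dropped:
\begin{itemize}
\item Abstracting every occurrence of $p$ out of $\Delta$ is unsound. Replacing an antecedent atom $p\ \vec t$ by $S\ \vec t$ would need $S\ \vec t$ to entail $p\ \vec t$, which is false for an arbitrary pre-fixed point $S$.
\item ``Leaving designated occurrences undesignated'' mid-derivation is not a legal move, since the target sequent fixes which occurrences become $S$. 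Under a parity notion the situation is vacuous anyway, because a designated atom can never be a top-level antecedent.
\end{itemize}

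More importantly, the narrow form is what the paper's later argument relies on. In the lemma on reducibility of unfolded derivations, the minor premises of $\mu(\Xi,\Pi_S)$ are literally those of $\Xi$. Their normalizability is therefore inherited from the $\alpha$-reducibility of $\Xi$. The unfolding commutation lemma likewise holds because unfolding touches only major premises. A transformation that rewrote minor premises would need reducibility of the rewritten ones. The definition of $\gamma$-reducibility only guarantees normalizability of minor premises, so that would not be available.
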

\begin{proof}
  By induction on the height of the derivation $\Xi$. If $C\ p = C\ (\lambda \vec x.\top)$ (that is $C\ p$ does not depend on $p$) then $\mu(\Xi,\Pi_S) := \Xi$. If $\Xi$ ends with any rule (including a multicut) except \axiom, \impR, or \IR, then $\mu(\Xi,\Pi_S)$ is obtained by unfolding each major premise derivation. For example, 
  \case{\impL} If $\Xi$ ends with \impL, then $\mu(\Xi,\Pi_S)$ is obtained as follows
  \begin{mathpar}
    {\inferrule*[right=\impL]
      {\deriv{\Xi_1}{\Delta\vdash D_1}\\
        \deriv{\Xi_2}{\Delta,D_2\vdash C\ p}}
      {\Delta, D_1\imp D_2\vdash C\ p}}\qquad\leadsto\qquad
    {\inferrule*[right=\impL]
      {\deriv{\Xi_1}{\Delta\vdash D_1}\\
        \deriv{\mu(\Xi_2,\Pi_S)}{\Delta,D_2\vdash C\ S}}
      {\Delta,D_1\imp D_2\vdash C\ S}}
  \end{mathpar}

 We must treat specially the case in which $\Xi$ ends with \axiom, \impR, or \IR.

  \case{\axiom} If $\Xi$ ends with \axiom on the fixpoint $p$, then $\mu(\Xi,\Pi_S)$ is obtained as follows
  \begin{mathpar}
    {\inferrule*[right=\axiom]{ }
               {p\ \vec t\vdash (p\ \vec t)[\pi]}}\qquad\leadsto\qquad
    {\inferrule*[right=\IL]
      {\left\{\deriv{\Pi_S^{\vec u}}{B\ S\ \vec u\vdash S\ \vec u}\right\}_{\vec u\in\ground(\alpha)}\\
        {\raisebox{-.8em}{$\inferrule*[Right=\init]{ }{S\ \vec t\vdash (S\ \vec t)[\pi]}$}}}
      {p\ \vec t \vdash (S\ \vec t)[\pi]}}
  \end{mathpar}

  \case{\impR} If $\Xi$ ends with \impR, then $\mu(\Xi,\Pi_S)$ is obtained as follows
  \begin{mathpar}
    {\inferrule*[right=\impR]
      {{\deriv{\Xi'}{\Delta, C_1\vdash C_2\ p}}}
      {\Delta\vdash C_1\imp C_2\ p}}\qquad\leadsto\qquad
    {\inferrule*[right=\impR]
      {{\deriv{\mu(\Xi',\Pi_S)}{\Delta,C_1\vdash C_2\ S}}}
      {\Delta\vdash C_1\imp C_2\ S}}
  \end{mathpar}
  where we note that since $p$ occurs only positively in $C\ p$, $p$ may not appear in $C_1$.



  \case{\IR} If $\Xi$ ends with \IR for the fixpoint $p$, then $\mu(\Xi,\Pi_S)$ is obtained as follows
  \begin{mathpar}
    {\inferrule*[right=\IR]
      {{\deriv{\Xi'}{\Delta\vdash B\ p\ \vec t}}}
      {\Delta\vdash p\ \vec t}}\qquad\leadsto\qquad
    {\inferrule*[right=\multc]
      {\deriv{\mu(\Xi',\Pi_S)}{\Delta\vdash B\ S\ \vec t}\\
        \deriv{\Pi_S^{\vec t}}{B\ S\ \vec t\vdash S\ \vec t}}
      {\Delta\vdash S\ \vec t}}
  \end{mathpar}
  Note here that we must use the constraint on inductive definitions being (strictly) stratified which guarantees that $p$ occurs only positively in $B\ p\ \vec t$, allowing us to apply the inductive hypothesis.


\end{proof}

\subsection{The normalization lemma}
\label{appendix:normalization-lemma}

\lemnormalizationlemma*
\begin{proof}
  We proceed by induction on the $\gamma$-reducibility of $\Xi$.
  
  \case{Case 1} Suppose $\Xi$ is a multicut with $\gamma$-reducible reduct $\Xi'$, then by the induction hypothesis $\Xi'$ is normalizable, and therefore $\Xi$ is normalizable.
  
  \case{Case 2} Suppose $\Xi$ ends with \impR introducing the formula $A\imp B$. By definition, $\Xi$ has $\gamma$-reducible premise derivation $\Pi'$, so by the induction hypothesis, $\Pi'$ is normalizable, and thus $\Xi$ is normalizable.

  \case{Case 3} Suppose $\Xi$ ends with \impL with premise derivations $\Pi_1$ and $\Pi_2$. By definition, $\Pi_1$ is already normalizable. Similarly, $\Pi_2$ is $\gamma$-reducible, thus $\Pi_2$ is normalizable by the induction hypothesis, therefore $\Xi$ is normalizable.

  \case{Case 4} Suppose $\Xi$ ends with \IL with minor premise derivations $\{\Pi_{\vec u}\}_{\ground(\vec u)}$ and the major premise derivation $\Pi'$, then the $\Pi_{\vec u}$ are already nomalizable and $\Pi'$ is $\gamma$-reducible. By the induction hypothesis it follows that $\Pi'$ is normalizable, therefore $\Xi$ is normalizable.

  \case{Case 5} Suppose $\Xi$ ends with any other rule, then each of its premises $\Pi_i$ is a major premise which is $\gamma$-reducible, and therefore normalizable by the induction hypothesis. It follows that $\Xi$ is normalizable.
\end{proof}

\subsection{Reducibility of unfolded derivations}
  \label{appendix:reducibility-of-unfolded-derivations}

 The following lemma shows that the unfolding operation $\mu(\Pi,\Pi_S)$ commutes with the reduction relation. The results in this section follow closely those of \cite{Tiu04}.
  
 \begin{lemma}[Unfolding commutation lemma]
   \label{unfolding-commutation-lemma}
    Suppose $\Pi^{\vec u}_S$ is a family of derivations of $B\ S\ \vec u\vdash S\ \vec u$ for an inductively defined predicate $p$, then for any derivations $\Pi_1,\ldots, \Pi_n$, and $\Pi$, where $\Pi$ does not end with \axiom or \muR with consequent $p\ \vec t$, we have $\mu(\multcut(\Pi_1,\ldots,\Pi_n,\Pi),\Pi_S) \to \Psi$ if and only if there exists $\Psi'$ such that $\Psi \sim \mu(\Psi',\Pi_S)$ and $\multcut(\Pi_1,\ldots,\Pi_n,\Pi)\to\Psi'$.
\end{lemma}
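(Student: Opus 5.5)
The plan is to read off the shape of $\mu(\Xi,\Pi_S)$ for $\Xi = \multcut(\Pi_1,\ldots,\Pi_n,\Pi)$ and then run a case analysis on the reduction rules of Appendix~\ref{appendix:cut-red}. The right-hand side of $\Xi$ has consequent $C\ p$, and by hypothesis $\Pi$ does not end with \axiom or \muR on $p\ \vec t$. So the unfolding lemma treats $\Xi$ through its generic case: it unfolds only the major premise. This gives
$$\mu(\Xi,\Pi_S) = \multcut(\Pi_1,\ldots,\Pi_n,\mu(\Pi,\Pi_S)),$$
with the minor premises $\Pi_1,\ldots,\Pi_n$ unchanged, and the same cut formulas. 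Both multicuts have the same left premises and the same cut formulas. Their right premises $\Pi$ and $\mu(\Pi,\Pi_S)$ differ only in the consequent, and only along the positive spine leading to the occurrences of $p$. The exception is $\Pi$ ending with \impR: there $\mu$ just recurses, again without touching any cut formula.

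With this in hand, I would prove both directions at once by checking that every reduction rule applicable to one multicut has a matching rule applicable to the other. The key property is that applicability of a reduction rule depends only on the last rules of $\Pi_i$ and $\Pi$ and on which formulas they act upon. Unfolding preserves the last rule of $\Pi$ and the formula it acts on, since the cases excluded by hypothesis are exactly those where $\mu$ changes the last rule. I then go through the case groups.
\begin{enumerate}
\item In the \emph{left commutative}, \emph{left multicut} and \emph{left axiom} cases, the reduct only rebuilds the left side and re-uses $\Pi$ inside new multicuts. Pushing $\mu$ through the new outer rule and those inner multicuts gives $\Psi = \mu(\Psi',\Pi_S)$ up to renaming. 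Here I apply the defining clauses of $\mu$ recursively on major premises.
\item In the \emph{essential}, \emph{inductive} and \emph{structural} cases, $\Pi$ ends with a left rule on a cut formula, and $\mu(\Pi,\Pi_S)$ ends with the same rule with unfolded premise. The reduct is a multicut on the unfolded premise, and this is again $\mu$ of the corresponding reduct. For the \impR/\impL essential case, I would use that $\mu$ passes the minor premise $\Xi_1$ of \impL through unchanged, as in the unfolding lemma.
\item In the \emph{right commutative} and \emph{right multicut} cases, $\mu$ commutes with the outer rule of $\Pi$ by definition, so the reduct again matches.
\item In the \emph{right axiom} case, $\Pi$ is \axiom on a cut formula. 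This happens only when $C\ p$ does not depend on $p$, since the \axiom case on $p$ is excluded, and then $\mu$ is the identity. The two reducts agree up to the nominal renaming $\sim$, which is where $\sim$ (rather than equality) enters, using the renaming congruence lemma (Lemma~\ref{renaming-congruence-lemma}).
\end{enumerate}

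The main obstacle is a base-case subtlety. When the reduct $\Psi'$ has a subderivation that now ends with \axiom or \muR on $p\ \vec t$, the unfolding $\mu(\Psi',\Pi_S)$ inserts an \IL or a multicut with $\Pi_S^{\vec t}$ there. For the desired equation I must check that the reduct $\Psi$ of the unfolded multicut produces the same insertion. My first attempt will be to observe that such \axiom/\muR leaves of $\Psi'$ already occur as leaves of $\Pi$ or of some $\Pi_i$. Leaves of $\Pi$ are unfolded identically on both sides. Leaves of a $\Pi_i$ can reach the consequent only via a right axiom reduction, and that reduction is excluded because $\Pi$ is not \axiom on $p$. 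I would also confirm that the \muL cases do not unfold the minor premises $\Pi_S$ themselves, because their consequents are $S\ \vec u$, which do not mention $p$.
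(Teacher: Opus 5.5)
Your proposal is correct and follows essentially the paper's argument: a case analysis on the reduction rules. It rests on the fact that $\mu$ only unfolds major premises, so $\mu(\multcut(\Pi_1,\ldots,\Pi_n,\Pi),\Pi_S)=\multcut(\Pi_1,\ldots,\Pi_n,\mu(\Pi,\Pi_S))$, and $\mu(\Pi,\Pi_S)$ ends with the same rule as $\Pi$ precisely because the \axiom/\muR-on-$p$ cases are excluded. You spell out more than the paper, which simply calls the remaining cases mechanical; one small correction is that $\sim$ is genuinely needed in the left-axiom case, where $\mu$ must be commuted with the renamed copy of $\Pi$, whereas in the right-axiom case $\mu$ is the identity on both sides and the reducts coincide.
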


\begin{proof}
    This must be done by case analysis on the cut reduction $\mu(\multcut(\Pi_1,\ldots,\Pi_n,\Pi),\Pi_S)\to \Psi$. Note that unfolding is defined to be simply the unfolding of major premise derivations except for the cases in which $\Pi$ ends with \muR or \axiom with consequent $p\ \vec t$. This observation means the lemma may be checked in a very mechanical manner for every case except the right commutative and right axiom cases (-/\muR) and (-/\axiom).
\end{proof}

The next lemma characterizes sufficient conditions for unfolded derivations to be reducible.
\begin{lemma}[Reducibility of unfolded derivations]
  \label{unfolded-reducibility-lemma}
Given an inductive definition $p\ \vec x \ind B\ p\ \vec x$, an inductive invariant $S$ of the same type as $p$, and a family of derivations $\Pi^{\vec u}_S$ of $B\ S\ \vec u\vdash S\ \vec u$ for any ground $\vec u$, suppose
\begin{enumerate}
\item \label{unfold-red:ass:1}Each derivation $\Pi^{\vec u}_S$ is normalizable.
\item \label{unfold-red:ass:2}For any reducible derivation $\Upsilon$, the derivation
\begin{mathpar}
  {\inferrule*[right=\multc]
    {\deriv{\Upsilon}{\Delta\vdash B\ S\ \vec t}\\
      \deriv{\Pi_S^{\vec t}}{B\ S\ \vec t\vdash S\ \vec t}}
    {\Delta\vdash S\ \vec t}
  }
\end{mathpar}
is reducible.
\item \label{unfold-red:ass:3}For any formula $C$ and any reducible $\Upsilon$, the derivation
  \begin{mathpar}
    \inferrule*[right=\multc]
    {\deriv{\Upsilon}{\Delta\vdash C}\\
      {\raisebox{-.6em}{$\inferrule*[Right=\init]{ }{C\vdash C[\pi]}$}}}
    {\Delta\vdash C[\pi]}
  \end{mathpar}
is reducible.
\end{enumerate}
Then for any $\alpha$-reducible derivation $\Xi$ in which $p$ occurs only positively in its conclusion, the unfolding $\mu(\Xi,\Pi_S)$ is reducible.
\end{lemma}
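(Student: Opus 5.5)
The plan is to prove the statement by induction on the $\alpha$-reducibility of $\Xi$, following the clauses of the definition of reducibility. In each case we show that $\mu(\Xi,\Pi_S)$ is reducible (for some level, typically $\lvl$ of its conclusion). The unfolding is defined by recursion on major premises only, so in most cases the minor premises of $\Xi$ are carried over unchanged. Minor premises of a reducible derivation are normalizable by definition (\impL, \IL) or reducible in a multicut. Hence the corresponding clause of reducibility for $\mu(\Xi,\Pi_S)$ follows from the induction hypothesis applied to the unfolded major premises.

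\textbf{Multicut case.} If $\Xi=\multcut(\Pi_1,\ldots,\Pi_n,\Pi)$, we must show that every reduct $\Psi$ of $\mu(\Xi,\Pi_S)$ is reducible.
\begin{enumerate}
\item When $\Pi$ does not end with \axiom or \muR on $p\ \vec t$, the Unfolding commutation lemma (Lemma~\ref{unfolding-commutation-lemma}) gives $\Psi\sim\mu(\Psi',\Pi_S)$ with $\Xi\to\Psi'$. Here $\Psi'$ is reducible by the definition of reducibility of $\Xi$, and is smaller in the inductive order. So $\mu(\Psi',\Pi_S)$ is reducible by the induction hypothesis, and $\Psi$ is reducible by the Renaming congruence lemma (Lemma~\ref{renaming-congruence-lemma}).
\item When $\Pi$ ends with \axiom or \muR on $p\ \vec t$, the unfolding of $\Xi$ itself ends with an \IL/\init pattern or a multicut against $\Pi_S^{\vec t}$. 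I would reorganize these cases so they are treated like the base cases below, applied to the reducible derivation $\Xi$ viewed as a whole. This may require defining $\mu$ on multicuts to unfold the major premise and then checking the reduct shapes by hand.
\end{enumerate}

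\textbf{Base cases.}
\begin{enumerate}
\item If $\Xi$ is \axiom on $p\ \vec t$, then $\mu(\Xi,\Pi_S)$ is an \IL with minor premises $\Pi_S^{\vec u}$, which are normalizable by assumption~(\ref{unfold-red:ass:1}), and major premise \init. The major premise is reducible because $\init$ is $\multcut(\Upsilon,\init)$ for the axiom $\Upsilon$ on $S\ \vec t$, which is reducible by assumption~(\ref{unfold-red:ass:3}) since axioms are trivially reducible. Alternatively, I would first establish separately that \init derivations are reducible.
\item If $\Xi$ ends with \muR, the unfolding is $\multcut(\mu(\Xi',\Pi_S),\Pi_S^{\vec t})$. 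The inner derivation is reducible by the induction hypothesis, because $p$ occurs only positively in $B\ p\ \vec t$ by strict stratification. Assumption~(\ref{unfold-red:ass:2}) then concludes.
\end{enumerate}

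\textbf{Implication right.} If $\Xi$ ends with \impR on $C_1\imp C_2\ p$, then $p\notin C_1$. Reducibility requires the premise to be reducible, which holds by the induction hypothesis. It also requires that for every $\lvl(C_1)$-reducible $\Upsilon$, the derivation $\multcut(\Upsilon,\mu(\Xi',\Pi_S))$ is reducible. I would show that this multicut coincides, up to $\sim$, with $\mu(\multcut(\Upsilon,\Xi'),\Pi_S)$, since unfolding only touches the major premise. The derivation $\multcut(\Upsilon,\Xi')$ is reducible by the \impR clause for $\Xi$ and is earlier in the induction, so the induction hypothesis applies. A level mismatch appears here: the level of $C_2\ S$ may differ from that of $C_2\ p$. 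For this reason I would state the induction claim as ``reducible'' (some level) rather than at a fixed $\gamma$, and check that the level bound $\lvl(\Pi)\leq\gamma$ can be met by choosing $\gamma=\lvl(C_1\imp C_2\ S)$.

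\textbf{Main obstacle.} I expect the main difficulty to be the multicut case where the unfolding does not commute with reduction, namely the right axiom and right-commutative cases onto \muR/\axiom on $p$. In those cases one must verify directly that the reducts of $\mu(\Xi,\Pi_S)$ are reducible using assumptions~(\ref{unfold-red:ass:2}) and~(\ref{unfold-red:ass:3}).
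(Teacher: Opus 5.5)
Your skeleton matches the paper's proof:
\begin{itemize}
\item induction on the $\alpha$-reducibility of $\Xi$;
\item the unfolding commutation lemma (Lemma~\ref{unfolding-commutation-lemma}) plus renaming congruence for generic multicuts;
\item assumption~(1) together with \init{} for a bare axiom;
\item assumption~(2) for a bare \muR;
\item the identity $\mu(\multcut(\Upsilon,\Xi'),\Pi_S)=\multcut(\Upsilon,\mu(\Xi',\Pi_S))$ for \impR.
\end{itemize}
Dropping the paper's outer induction on $\alpha$ is harmless, because every clause of $\gamma$-reducibility refers back to the same $\gamma$.

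The gap is the case you yourself call the main obstacle, which you leave as ``verify directly''. It is exactly where assumptions~(2) and~(3) are needed, and your picture of it is off. When the major premise $\Pi$ of $\Xi$ ends with \axiom{} or \muR{} on $p\ \vec t$, the unfolding is $\multcut(\Pi_1,\ldots,\Pi_n,\mu(\Pi,\Pi_S))$. This still ends with the outer multicut, so you have to compute that multicut's reducts.

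\emph{Axiom subcase.} Here $\mu(\Xi,\Pi_S)$ cuts $\Pi_1$ against the \IL{} derivation with minor premises $\Pi_S^{\vec u}$ and major premise \init. Its reduct (inductive case) is $\multcut(\mu(\Pi_1,\Pi_S),\init)$. This is \emph{not} the unfolding of $\Xi$'s reduct $\Pi_1'\sim\Pi_1$, which is why assumption~(3) is in the statement. It also requires applying the induction hypothesis to the \emph{minor} premise $\Pi_1$. That is legitimate only because $\Pi_1'\sim\Pi_1$ is an $\alpha$-reducible reduct of $\Xi$, together with Lemma~\ref{renaming-congruence-lemma}.

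\emph{\muR{} subcase, with premise $\Xi'$.} $\Xi$ reduces to \muR{} applied to $\multcut(\Pi_1,\ldots,\Pi_n,\Xi')$, so the latter is $\alpha$-reducible and earlier in the induction. The only reduct of $\mu(\Xi,\Pi_S)=\multcut(\Pi_1,\ldots,\Pi_n,\multcut(\mu(\Xi',\Pi_S),\Pi_S^{\vec t}))$ is the right-multicut merge $\multcut(\multcut(\Pi_1,\ldots,\Pi_n,\mu(\Xi',\Pi_S)),\Pi_S^{\vec t})$. Its left premise is $\mu(\multcut(\Pi_1,\ldots,\Pi_n,\Xi'),\Pi_S)$, which is reducible by the induction hypothesis, and assumption~(2) concludes. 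Here your ``reduce to the base case'' idea does work, even on the nose: this reduct is literally the unfolding of $\Xi$'s reduct.

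Two smaller points.

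First, in the bare axiom case your first justification that \init{} is reducible fails. $S\ \vec t$ need not be atomic, so there is no axiom on it. Also, reducibility of $\multcut(\Upsilon,\init)$ says nothing about \init{} itself. Reducibility of \init{} must be supplied separately. The paper only asserts it (``because it is cut-free''). At the point of use, it comes from the index hypothesis of the Reducibility Lemma via the nullary cut $\multcut(\init)$, since $\Ind(\init)=0$.

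Second, for the level mismatch in \impR, fixing $\gamma=\lvl(C_1\imp C_2\ S)$ is not enough, because your induction hypothesis only yields reducibility at \emph{some} level. You need that $\gamma$-reducibility is upward monotone in $\gamma$ (a routine induction, since $\gamma$ enters only through the bound $\lvl(\Pi)\leq\gamma$), and then take a supremum.
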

\begin{proof}
  
  We proceed first by transfinite induction on $\alpha$, then by induction on the $\alpha$-reducibility of $\Xi$, and finally by case analysis on the last rule of $\Xi$. If $\Xi$ ends with any rule except \axiom, \impR, \muR, or \multc, recall that $\mu(\Xi,\Pi_S)$ is obtained by unfolding each major premise derivation. For example,
  \case{\impL} If $\Xi$ ends with \impL with premise derivations $\Xi_1$ and $\Xi_2$, then $\mu(\Xi,\Pi_S)$ is
  \begin{mathpar}
    \inferrule*[right=\impL]
    {\deriv{\Xi_1}{\Delta\vdash D_1}\\
      \deriv{\mu(\Xi_2,\Pi_S)}{\Delta, D_2\vdash C\ S}}
    {\Delta,D_1\imp D_2\vdash C\ S}
  \end{mathpar}
  which is $\beta$-reducible if $\mu(\Xi_2,\Pi_S)$ is $\beta$-reducible and $\Xi_1$ is normalizable. Since $\Xi$ is $\alpha$-reducible it follows that $\Xi_1$ is normalizable and $\Xi_2$ is $\alpha$-reducible, so by the second induction hypothesis it follows that $\mu(\Xi_2,\Pi_S)$ is $\beta$-reducible, as desired. More generally, $\beta$-reducibility of the major premise derivations follows by the second inductive hypotheses, and normalizability of the minor premise derivations follows from the $\alpha$-reducibility of $\Xi$.

  We now treat specially the cases \axiom, \impR, \muR, and \multc.
  
  \case{\axiom} Suppose $\Xi$ is \axiom, then $\mu(\Xi,\Pi_S)$ is
  \begin{mathpar}
    \inferrule*[right=\IL]
    {\left\{\deriv{\Pi_S^{\vec u}}{B\ S\ \vec u\vdash S\ \vec u}\right\}_{\vec u\in\ground(\alpha)}\\
      {\raisebox{-.8em}{$\inferrule*[Right=\init]
      { }{S\ \vec t\vdash(S\ \vec t)[\pi]}$}}}
    {p\ \vec t\vdash (S\ \vec t)[\pi]}
  \end{mathpar}
  which is $\beta$-reducible because the $\Pi_S^{\vec u}$ are normalizable by assumption~(\ref{unfold-red:ass:1}), and \init is $\alpha$-reducible because it is cut-free.
  
  \case{\impR} Suppose $\Xi$ ends with the \impR rule with premise derivation $\Xi'$ and end sequent $\Gamma\vdash C_1 \imp C_2\ p$, then $\mu(\Xi,\Pi_S)$ is
  \begin{mathpar}
    \inferrule*[right=\impR]
    {{\deriv{\mu(\Xi',\Pi_S)}{\Gamma,C_1 \vdash C_2\ p}}}
    {\Gamma\vdash C_1\imp C_2\ p}
  \end{mathpar}
  Now suppose that $\alpha_1 = \lvl(C_1)$ and $\alpha_2 = \lvl(C_2\ p)$, then by the $\alpha$-reducibility of $\Xi$, we know that $\Xi'$ is $\alpha_2$-reducible. So if $\alpha_2 = \alpha$, then $\mu(\Xi',\Pi_S)$ is reducible by the second inductive hypothesis. If $\alpha_2 < \alpha$ then $\mu(\Xi',\Pi_S)$ is reducible by the first induction hypothesis. Finally, we know that for any $\alpha_1$-reducible $\Upsilon$, the derivation
  \begin{mathpar}
    \inferrule*[right=\multc]
    {\deriv{\Upsilon}{\Delta\vdash C_1}\\\deriv{\Xi'}{\Gamma,C_1\vdash C_2\ p}}
    {\Gamma,\Delta\vdash C_2\ p}
  \end{mathpar}
  is $\alpha_2$-reducible. We wish to show that $\multcut(\Upsilon,\mu(\Xi',\Pi_S))$ is reducible. If $\alpha_2 < \alpha$, then applying the first induction hypothesis tells us that $\mu(\multcut(\Upsilon,\Xi'),\Pi_S) = \multcut(\Upsilon,\mu(\Xi',\Pi_S))$ is reducible. If $\alpha_2 = \alpha$, then it follows from the second induction hypothesis that $\mu(\multcut(\Upsilon,\Xi'),\Pi_S) = \multcut(\Upsilon,\mu(\Xi',\Pi_S))$ is reducible.

  \case{\muR} If $\Xi$ ends with \muR with premise derivation $\Xi'$, then $\mu(\Xi,\Pi_S)$ is
  \begin{mathpar}
    \inferrule*[right=\multc]
    {\deriv{\mu(\Xi',\Pi_S)}{\Gamma\vdash B\ S\ \vec t}\\
      \deriv{\Pi_S^t}{B\ S\ \vec t\vdash S\ \vec t}}
    {\Gamma\vdash S\ \vec t}
  \end{mathpar}
  which is reducible from assumption~(\ref{unfold-red:ass:2}) of the lemma provided $\mu(\Xi',\Pi_S)$ is reducible. If $\lvl(B\ S\ \vec t) < \lvl(S\ \vec t)$, then this follows from the first inductive hypothesis, otherwise $\lvl(B\ S\ \vec t) = \lvl(S\ \vec t)$ and the result follows from the second inductive hypothesis.
  
  \case{\multc} Suppose $\Xi$ is a multicut with minor premises $\Pi_1,\ldots, \Pi_n$ and major premise $\Xi'$ with end sequent $\Delta_1,\ldots,\Delta_n,\Gamma\vdash C\ p$. We first suppose that the major premise $\Xi'$ does not end with \axiom or \muR. Now suppose that $\mu(\Xi,\Pi_S) \to\Psi$, then by Lemma~\ref{unfolding-commutation-lemma}, there exists $\Psi'$ such that $\Psi \sim \mu(\Psi',\Pi_S)$, and $\Xi\to\Psi'$. In particular, $\mu(\Psi',\Pi_S)$ is reducible by the induction hypothesis which implies that $\mu(\Xi,\Pi_S)$ is reducible.

  We must now consider the remaining cases in which $\Xi'$ ends with \axiom or \muR, in which case $C\ p = (p\ \vec t)[\pi]$.

  \subcase{\underline{If $\Xi'$ ends with \axiom}} Suppose $\Xi$ is
  \begin{mathpar}
    \inferrule*[right=\multc]
    {\deriv{\Pi_1}{\Delta_1\vdash p\ \vec t}\\
      \raisebox{-.7em}{$\inferrule*[Right=\axiom]
      { }{p\ \vec t\vdash (p\ \vec t)[\pi]}$}}
    {\Delta_1\vdash (p\ \vec t)[\pi]}
  \end{mathpar}
  which reduces to a derivation $\Pi_1'$ with $\Pi_1'\sim\Pi_1$. By assumption, $\Pi_1'$ (and thus $\Pi_1$ by Lemma~\ref{renaming-congruence-lemma}) is $\alpha$-reducible, and the second induction hypothesis implies $\mu(\Pi_1,\Pi_S)$ is reducible. Now $\mu(\Xi,\Pi_S)$ is the following
  \begin{mathpar}
    \inferrule*[right=\multc]
    {\raisebox{.5em}{$\deriv{\Pi_1}{\Delta_1\vdash p\ \vec t}$}\\
      \inferrule*[Right=\muL,rightskip=7em]
      {\left\{\deriv{\Pi_S^{\vec u}}{B\ S\ \vec u\vdash S\ \vec u}\right\}_{\vec u\in\ground(\alpha)}\\
        {\raisebox{-1em}{$\inferrule*[Right=\init]
        { }{S\ \vec t\vdash (S\ \vec t)[\pi]}$}}}
      {p\ \vec t\vdash (S\ \vec t)[\pi]}}
    {\Delta\vdash (S\ \vec t)[\pi]}
  \end{mathpar}

  which itself reduces to
  \begin{mathpar}
    \inferrule*[right=\multc]
    {\deriv{\mu(\Pi_1,\Pi_S)}{\Delta_1\vdash S\ \vec t}\\
      \raisebox{-.7em}{$\inferrule*[Right=\init]
      { }{S\ \vec t\vdash (S\ \vec t)[\pi]}$}}
    {\Delta_1\vdash (S\ \vec t)[\pi]}
  \end{mathpar}
  which is reducible by assumption~(\ref{unfold-red:ass:3}).

  \subcase{\underline{If $\Xi'$ ends with \muR}} Suppose $\Xi$ is
  \begin{mathpar}
    \inferrule*[right=\multc]
    {{\deriv{\Pi_1}{\Delta_1\vdash B_1}}\\\ldots\\
      {\deriv{\Pi_n}{\Delta_n\vdash B_n}}\\
      {\raisebox{-.7em}{$\inferrule*[Right=\muR]
      {{\deriv{\Xi'}{\Gamma,B_1,\ldots,B_n\vdash B\ p\ \vec t}}}
      {\Gamma,B_1,\ldots,B_n\vdash p\ \vec t}$}}}
    {\Delta_1,\ldots,\Delta_n,\Gamma\vdash p\ \vec t}
  \end{mathpar}
  which reduces to $\text{\muR}(\multcut(\Pi_1,\ldots,\Pi_n,\Xi'))$. Since $\Xi$ is $\alpha$-reducible, it must be that the premise derivation $\multcut(\Pi_1,\ldots,\Pi_n,\Xi')$ is $\alpha$-reducible. Now the unfolding $\mu(\Xi,\Pi_S)$ is the multicut
  $$\multcut(\Pi_1,\dots,\Pi_n,\mu(\text{\muR}(\Xi'),\Pi_S)) = \multcut(\Pi_1,\dots,\Pi_n,\multcut(\mu(\Xi',\Pi_S),\Pi_S^t))$$
  which reduces to
  \begin{equation}\tag{$\star$}\label{eq:red-unfold}
    \inferrule*[right=\multc]
    {\inferrule*[right=\multc,leftskip=7em]
      {\deriv{\Pi_1}{\Delta_1\vdash B_1}\\\ldots
        \\ \deriv{\Pi_n}{\Delta_n\vdash B_n}\\
        \deriv{\mu(\Xi',\Pi_S)}{B_1,\ldots,B_n,\Gamma\vdash B\ S\ \vec t}}
      {\Delta_1,\ldots,\Delta_n,\Gamma\vdash B\ S\ \vec t}\\
      {\raisebox{.5em}{$\deriv{\Pi_S^{\vec t}}{B\ S\ \vec t\vdash S\ \vec t}$}}}
    {\Delta_1,\ldots,\Delta_n,\Gamma\vdash S\ \vec t}
  \end{equation}
  Now because $\multcut(\Pi_1,\ldots,\Pi_n,\Xi')$ is $\alpha$-reducible, it follows from our second induction hypothesis that
  $$\mu(\multcut(\Pi_1,\ldots,\Pi_n,\Xi'),\Pi_S) =\multcut(\Pi_1,\ldots,\Pi_n,\mu(\Xi',\Pi_S))$$
  is reducible, so it follows from assumption~(\ref{unfold-red:ass:2}) that the multicut above (\ref{eq:red-unfold}) is reducible. It then follows that $\mu(\Xi,\Pi_S)$ is reducible since its only reduction is reducible.
\end{proof}

\section{Cut reductions}
\label{appendix:cut-red}

If $n=0$, $\Xi$ reduces to the premise derivation $\Pi$.

For $n > 0$ we specify the reduction relation by case analysis on the last rule of $\Pi$:
\begin{itemize}
\item If $\Pi$ ends with a non-structural left rule (ie. not \cL or \wL) introducing the cut formula $B_i$ , we have the following subcases on the last rule of $\Pi_i$:
\begin{itemize}
\item {\em Essential case:} This is when $\Pi_i$ ends with a right rule.
\item {\em Inductive case:} This is when $\Pi$ ends with \IL.
\item {\em Left-commutative case:} This is when $\Pi_i$ ends with a left rule.
\item {\em Left-axiom case:} This is when $\Pi_i$ ends with an axiom rule.
\item {\em Left-multicut case:} This is when $\Pi_i$ ends with a multicut.
\end{itemize}

\item {\em Right-commutative case:} This is the case when 
$\Pi$ ends with a rule acting on a formula other than a cut formula.

\item {\em Structural case:} This is the case when $\Pi$ ends with 
a structural rule (\cL or \wL) on a cut formula. 

\item {\em Right-axiom case:} This is when $\Pi$ ends with an axiom rule.
\item {\em Right-multicut case:} This is when $\Pi$ ends with a multicut.
\end{itemize}

\subsection{\em \underline{Essential cases:}}
\emph{This is when $\Pi$ ends with a principal left rule and $\Pi_1$ ends with a (principal) right rule.}

\case[red-1-1]{\andR/\andL{i}}
If $\Pi_1$ and $\Pi$ are
 \begin{mathpar}
    {\inferrule*[right=\andR]
      {\deduce{\Delta_1\vdash B_1^1}
        {\Pi_1^1}
        \\ \deduce{\Delta_1\vdash B_1^2}
           {\Pi_1^2}}
      {\Delta_1\vdash B_1^1 \wedge B_1^2}}
    \and
    {\inferrule*[right=\andL{i}]
      {\deduce{B_1^i,B_2,\ldots,B_n,\Gamma\vdash C}
        {\Pi'}}{B_1^1 \wedge B_1^2,B_2,\ldots,B_n,\Gamma\vdash C}}
  \end{mathpar}
 then $\Xi$ reduces to $mc(\Pi_1^i,\Pi_2,\dots,\Pi_n,\Pi')$.

 \case[red-1-2]{\orR{i}/\orL}
If $\Pi_1$ and $\Pi$ are

\begin{mathpar}
  {\inferrule*[right=\orR{i}]
    {\deduce{\Delta_1\vdash B_1^i}{\Pi_1'}}
    {\Delta_1\vdash B_1^1 \vee B_1^2}}
\and
{\inferrule*[right=\orL]
  {\deduce{B_1^1,B_2,\ldots,B_n,\Gamma\vdash C}
    {\Pi^1}
    \\ \deduce{B_1^2,B_2,\ldots,B_n,\Gamma\vdash C}
       {\Pi^2}}
  {B_1^1 \vee B_1^2,B_2,\ldots,B_n,\Gamma\vdash C}}
\end{mathpar}

then $\Xi$ reduces to $mc(\Pi_1',\Pi_2,\dots,\Pi_n,\Pi^i)$.

\case[red-1-3]{\impR/\impL}
Suppose $\Pi_1$ and $\Pi$ are
\begin{mathpar}
\inferrule*[right=\impR]
        {\deduce{\Seq{B_1',\Delta_1}{B_1''}}
                {\Pi_1'}}{\Seq{\Delta_1}{B_1' \imp B_1''}}
\and
\inferrule*[right=\impL]
        {\deduce{\Seq{B_2,\ldots,B_n,\Gamma}{B_1'}}
                {\Pi'}
        \\ \deduce{\Seq{B_1'',B_2,\ldots,B_n,\Gamma}{C}}
                {\Pi''}}{\Seq{B_1' \imp B_1'',B_2,\ldots,B_n,\Gamma}{C}}
\end{mathpar}
Let $\Xi_1 = \multcut(\Pi_2,\dots,\Pi_n,\Pi')$, then $\Xi$ reduces to
\begin{mathpar}
    {\inferrule*[right=\multc]
      {{\raisebox{-1em}{${\inferrule*[right=\multc,leftskip=4em]
          {\deriv{\Xi_1}{\Delta_2,\ldots,\Delta_n,\Gamma\vdash B_1'}\\
            \deriv{\Pi_1'}{B_1',\Delta_1\vdash B_1''}}
          {\Delta_1,\ldots,\Delta_n,\Gamma\vdash B_1''}}$}}
        \quad
        \left\{{\deriv{\Pi_i}{\Seq{\Delta_i}{B_i}}}\right\}_{i \in \{2..n\}}
        \quad
        {\deriv{\Pi''}{\Seq{B_1'',B_2,\ldots,B_n,\Gamma}{C}}}}
      {\inferrule*[right=\cL *]
        {\Seq{\Delta_1,\ldots,\Delta_n,\Gamma,
          \Delta_2,\ldots,\Delta_n,\Gamma}{C}}
        {\Seq{\Delta_1,\ldots,\Delta_n,\Gamma}{C}}}}
\end{mathpar}

\case[red-1-4]{\allR/\allL}
Suppose $\Pi_1$ and $\Pi$ are
\begin{mathpar}
{\inferrule*[right=\allR]
  {\left\{\deriv{\Pi_1^s}{\Seq{\Delta_1}{B_1'[s/x]}}\right\}_{s \in \ground(\tau)}}
  {\Seq{\Delta_1}{\forall_\tau x.B_1'}}}\and
{\inferrule*[right=\allL]
  {\deduce{\Seq{B_1'[t/x],B_2,\ldots,B_n,\Gamma}{C}}{\Pi'}}
  {\Seq{\forall_\tau x.B_1',B_2,\ldots,B_n,\Gamma}{C}}}
\end{mathpar}
The derivation $\Xi$ reduces to
$mc(\Pi_1^t, \Pi_2,\dots,\Pi_n,\Pi').$

\case[red-1-5]{\exR/\exL}
Suppose $\Pi_1$ and $\Pi$ are
\begin{mathpar}
{\inferrule*[right=\exR]
  {{\deriv{\Pi_1'}{\Seq{\Delta_1}{B_1'[t/x]}}}}
  {\Seq{\Delta_1}{\exists_\tau x.B_1'}}
}\and
{\inferrule*[right=\exL]
  {\left\{\deriv{\Pi^s}
    {\Seq{B_1'[s/x],B_2,\ldots,B_n,\Gamma}{C}}\right\}_{s \in \ground(\tau)}
  }
  {\Seq{\exists_\tau x.B_1',B_2,\ldots, B_n,\Gamma}{C}}}
\end{mathpar}
Then $\Xi$ reduces to $mc(\Pi_1',\Pi_2,\dots,\Pi_n,\Pi^t).$

\case[red-1-6]{\DR/\DL}
Suppose $\Pi_1$ and $\Pi$ are
\begin{mathpar}
  {\inferrule*[right=\DR]
    {{\deriv{\Pi_1'}{\Seq{\Delta_1}{B'}}}}
    {\Seq{\Delta_1}{A}}}\quad\raisebox{1em}{$\dfn(H\defn B,A,\epsilon_\emptyset,B')$}\and
  {\inferrule*[right=\DL]
    {\left\{{\deriv{\Pi^{H\defn B}_{B'}}{\Seq{B',B_2,\ldots,B_n,\Gamma}{C}}}\st
      \dfn(H\defn B,A,\epsilon_\emptyset,B')\right\}}
    {\Seq{A,B_2,\ldots,B_n,\Gamma}{C}}}
\end{mathpar}
then $\Xi$ reduces to $\multcut(\Pi_1',\Pi_2,\ldots,\Pi_n,\Pi')$ where $\Pi'$ is $\Pi^{H\defn B}_{B'}$ for any clause $H\defn B$ such that $\dfn(H\defn B,A,\epsilon_\emptyset,B')$.


\case[red-1-7]{\nabR/\nabL}
Given $n,m\not\in\supp(B_1')$, if $\Pi_1$ and $\Pi$ are
\begin{mathpar}
  {\inferrule*[right=\nabR]
    {{\deriv{\Pi_1'}{\Delta_1\vdash B_1'[n/x]}}}
    {\Delta_1\vdash \nabla_\tau x.B_1'}}\and
  {\inferrule*[right=\nabL]
    {{\deriv{\Pi'}{B_1'[m/x],B_2,\ldots,B_n,\Gamma\vdash C}}}
    {\nabla_\tau x.B_1',B_2,\ldots,B_n,\Gamma\vdash C}}
\end{mathpar}
then by the renaming lemma (Lemma~\ref{renaming-lemma}), there exists a derivation $\Pi_1'' \sim\Pi_1'$ of $\Delta_1\vdash B_1'[m/x]$, and $\Xi$ reduces to $\multcut(\Pi_1'',\Pi_2,\ldots,\Pi_n,\Pi')$.

\subsection{\em \underline{Inductive case}}
\case[red-1-8]{$-$/\IL ({\em Inductive case})}
Suppose $\Pi_1$ is any derivation with end sequent $\Delta_1\vdash p\ \vec t$ and $\Pi$ is
\begin{mathpar}
  {\inferrule*[right=\IL]
    {\left\{\deriv{\Pi^{\vec t}_S}{B\ S\ \vec t\vdash S\ \vec t}\right\}_{\ground(t)}\\
      \deriv{\Pi'}{S\ \vec t,B_2,\ldots,B_n,\Gamma\vdash C}}
    {\Seq{p\ \vec t,B_2,\ldots,B_n,\Gamma}{C}}}
\end{mathpar}
then $\Xi$ reduces to 

\begin{mathpar}
  {\inferrule*[right=\multc]
    {{\deriv{\mu(\Pi_1,\Pi_S)}{\Delta_1\vdash S\ \vec t}}\\
      \left\{\deriv{\Pi_i}{\Delta_i\vdash B_i}\right\}_{i\in\{2\ldots n\}}\\
      {\deriv{\Pi'}{S\ \vec t,B_2,\ldots,B_n,\Gamma\vdash C}}}
    {\Delta_1,\ldots,\Delta_n, \Gamma\vdash C}}
\end{mathpar}


\subsection{\em \underline{Left-commutative cases:}}
\emph{This is when $\Pi$ is a principal left rule and $\Pi_1$ ends with a left rule.}

\case[red-2-1]{$\bulletL/\circL$}
Suppose $\Pi$ ends with a left rule other than \cL or \wL
acting on $B_1$ and $\Pi_1$ is
\begin{mathpar}
\inferrule*[right=$\bulletL$]
{\left\{
  \deriv{\Pi_1^i}{\Seq{\Delta_1^i}{B_1}}\right\}
}{\Seq{\Delta_1}{B_1}}
\end{mathpar}
where $\bulletL$ is any left rule except \impL and \IL.
Let $\Xi^i = mc(\Pi^i_1,\Pi_2,\dots,\Pi_n,\Pi)$.

Then $\Xi$ reduces to

\begin{mathpar}
\inferrule*[right=$\bulletL$]
{\left\{\deriv{\Xi^i}{\Seq{\Delta_1^i,\Delta_2,\ldots,\Delta_n,\Gamma}{C}}\right\}
}{\Seq{\Delta_1,\Delta_2,\ldots,\Delta_n,\Gamma}{C}}
\end{mathpar}

\case[red-2-2]{\impL$/\circL$}
Suppose $\Pi$ ends with a left rule other than \cL or \wL acting
on $B_1$ and $\Pi_1$ is
\begin{mathpar}
\inferrule*[right=\impL]
{\deduce{\Seq{\Delta_1'}{D_1'}}
                {\Pi_1'}
        \\ \deduce{\Seq{D_1'',\Delta_1'}{B_1}}
                {\Pi_1''}}{\Seq{D_1' \imp D_1'',\Delta_1'}{B_1}}
\end{mathpar}
Let $\Xi_1$ be  $mc(\Pi_1'',\Pi_2,\dots,\Pi_n,\Pi).$
Then $\Xi$ reduces to
\begin{mathpar}
\inferrule*[right=\impL]
{
 \infer=[$\wL$]
  {\Seq{\Delta_1',\Delta_2,\ldots,\Delta_n,\Gamma}{D_1'}}
  {\deduce{\Seq{\Delta_1'}{D_1'}}{\Pi_1'}}
  \\
\deduce{\Seq{D_1'',\Delta_1',\Delta_2,\ldots,\Delta_n,\Gamma}{C}}
                {\Xi_1}
}{\Seq{D_1' \imp D_1'',\Delta_1',\Delta_2,\ldots,\Delta_n,\Gamma}{C}}
\end{mathpar}

\case[red-2-3]{\IL$/\circL$}
Suppose $\Pi$ ends with a left rule other than \cL or \wL acting
on $B_1$ and $\Pi_1$ is
\begin{mathpar}
  {\inferrule*[right=\IL]
    {\left\{{\deriv{\Pi_S^t}{\Seq{B\ S\ \vec t}{S\ \vec t}}}\right\}_{\ground(t)}
      \\ \deriv{\Pi_1'}{\Seq{S\ \vec t,\Delta_1'}{B_1}}}
    {\Seq{p\ \vec t,\Delta_1'}{B_1}}}
\end{mathpar}
Let $\Xi_1$ be  $mc(\Pi_1',\Pi_2,\dots,\Pi_n,\Pi).$
Then $\Xi$ reduces to
\begin{mathpar}
  {\inferrule*[right=\IL]
    {\left\{{\deriv{\Pi_S^t}{\Seq{B\ S\ \vec t}{S\ \vec t}}}\right\}_{\ground(t)}
      \\
      \deriv{\Xi_1}{\Seq{S\ \vec t,\Delta_1',\Delta_2,\ldots,\Delta_n,\Gamma}{C}}}
    {\Seq{p\ \vec t,\Delta_1',\Delta_2,\ldots,\Delta_n,\Gamma}{C}}}
\end{mathpar}


\subsection{\em \underline{Right-commutative cases:}}
\emph{This is the case when $\Pi$ ends with a non-principal rule.}

\case[red-3-1]{$-/\circL$}
Suppose $\Pi$ is
\begin{mathpar}
\inferrule*[right=$\circL$]
{\left\{{\deriv{\Pi^i}{B_1,\ldots,B_n,\Gamma^i\vdash C}}\right\}}
{B_1,\ldots,B_n,\Gamma\vdash C}
\end{mathpar}
where $\circL$ is any left rule other than \impL and \IL (but including \cL and \wL) acting on a formula other than $B_1, \ldots, B_n$. 
Let $\Xi^i = mc(\Pi_1,\dots,\Pi_n,\Pi^i)$. 

The derivation $\Xi$ reduces to
\begin{mathpar}
  {\inferrule*[right=$\circL$]
    {\left\{
      {\deriv{\Xi^i}{\Seq{\Delta_1,\ldots,\Delta_n,\Gamma^i}{C}}}
      \right\}}
    {\Seq{\Delta_1,\ldots,\Delta_n,\Gamma}{C}}
  }
\end{mathpar}

\case[red-3-2]{$-/$\impL}
Suppose $\Pi$ is
\begin{mathpar}
\inferrule*[right=\impL]
        {\deduce{\Seq{B_1,\ldots,B_n,\Gamma'}{D_1}}
                {\Pi'}
        \\ \deduce{\Seq{B_1,\ldots,B_n,D_2,\Gamma'}{C}}
                {\Pi''}}{\Seq{B_1,\ldots,B_n, D_1 \imp D_2,\Gamma'}{C}}
\end{mathpar}
Let $\Xi_1 = mc(\Pi_1,\dots,\Pi_n,\Pi')$
and let $\Xi_2 = mc(\Pi_1,\dots,\Pi_n,\Pi'').$

Then $\Xi$ reduces to
\begin{mathpar}
\inferrule*[right=\impL]
        {\deduce{\Seq{\Delta_1,\ldots,\Delta_n,\Gamma'}{D_1}}
                {\Xi_1}
        \\ \deduce{\Seq{\Delta_1,\ldots,\Delta_n,D_2,\Gamma'}{C}}
                {\Xi_2}}{\Seq{\Delta_1,\ldots,\Delta_n,D_1 \imp D_2,\Gamma'}{C}}
\end{mathpar}

\case[red-3-3]{$-/$\IL}
Suppose $\Pi$ is
\begin{mathpar}
{\inferrule*[right=\IL]
  {\left\{{\deriv{\Pi_S^t}{\Seq{B\ S\ \vec t}{S\ \vec t}}}\right\}_{\ground(t)}
    \\
    \deriv{\Pi'}{\Seq{B_1,B_2,\ldots,B_n,S\ \vec t,\Gamma'}{C}}
  }
  {\Seq{B_1,\ldots,B_n, p\ \vec t,\Gamma'}{C}}
  }
\end{mathpar}
Let $\Xi' = mc(\Pi_1,\dots,\Pi_n,\Pi')$, then $\Xi$ reduces to
\begin{mathpar}
{\inferrule*[right=\IL]
  {\left\{{\deriv{\Pi_S^t}{\Seq{B\ S\ \vec t}{S\ \vec t}}}\right\}_{\ground(t)}
    \\
    \deriv{\Xi'}{\Seq{\Delta_1,\Delta_2,\ldots,\Delta_n,S\ \vec t,\Gamma'}{C}}
  }
  {
    {\Seq{\Delta_1,\ldots,\Delta_n,p\ \vec t,\Gamma'}{C}}
  }
}
\end{mathpar}

\case[red-3-4]{$-/$\impR}
Suppose $\Pi$ is
\begin{mathpar}
\inferrule*[right=\impR]
{{\deriv{\Pi'}{\Seq{B_1,\ldots,B_n,\Gamma,C_1}{C_2}}}}
{\Seq{B_1,\ldots,B_n,\Gamma}{C_1\imp C_2}}
\end{mathpar}
Let $\Xi' = mc(\Pi_1,\dots,\Pi_n,\Pi'),$
then $\Xi$ reduces to
\begin{mathpar}
\inferrule*[right=\impR]
{{\deriv{\Xi'}{\Seq{\Delta_1,\ldots,\Delta_n,\Gamma,C_1}{C_2}}}}
{\Seq{\Delta_1,\ldots,\Delta_n,\Gamma}{C_1\imp C_2}}
\end{mathpar}

\case[red-3-5]{$-/\circR$}
Suppose $\Pi$ is
\begin{mathpar}
\inferrule*[right=$\circR$]
{\left\{{\deriv{\Pi^i}{\Seq{B_1,\ldots,B_n,\Gamma^i}{C^i}}}\right\}}
{\Seq{B_1,\ldots,B_n,\Gamma}{C}}
\end{mathpar}
where $\circR$ is any right rule except for \impR. 
Let $\Xi^i = mc(\Pi_1,\dots,\Pi_n,\Pi^i),$
then $\Xi$ reduces to
\begin{mathpar}
\inferrule*[right=$\circR$]
{\left\{
\deriv{\Xi^i}{\Seq{\Delta_1,\ldots,\Delta_n,\Gamma^i}{C^i}}
\right\}}
{\Seq{\Delta_1,\ldots,\Delta_n,\Gamma}{C}}
\end{mathpar}


\subsection{\em \underline{Multicut cases:}}
\emph{This is the case when either $\Pi$ or $\Pi_1$ ends with a multicut.}

\case[red-4-1]{$\multcut/\circL$ (\emph{Left-multicut case})}
If $\Pi$ ends with a left rule other than \cL or \wL acting on $B_1$, and
$\Pi_1$ ends with a multicut and reduces to $\Pi_1'$,
then $\Xi$ reduces to
$\multcut(\Pi_1',\Pi_2,\dots,\Pi_n,\Pi).$

\case[red-4-2]{$-/\multcut$ (\emph{Right-multicut case})}

Suppose $\Pi$ is
\begin{mathpar}
  {\inferrule*[right=\multc]
    {\left\{{\deriv{\Pi^j}{\Seq{\{B_i\}_{i \in \mc I^j},\Gamma^j}{D^j}}}\right\}_{j \in [m]}
      \\ {\deriv{\Pi'}{\Seq{D^1,\ldots,D^m,\{B_i\}_{i \in \mc I'},\Gamma'}{C}}
        }}
    {\Seq{B_1,\ldots,B_n,\Gamma^1,\ldots,\Gamma^m,\Gamma'}{C}}}
\end{mathpar}
where $\mc I^1,\ldots,\mc I^m,\mc I'$ partition the formulas $B_1,\ldots, B_n$ among the premise derivations $\Pi^1, \ldots, \Pi^m$,$\Pi'$.
For $1 \leq j \leq m$ let $\Xi^j$ be
\begin{mathpar}
{\inferrule*[right=\multc]
  {\left\{{\deriv{\Pi_i}{\Seq{\Delta_i}{B_i}}}\right\}_{i \in \mc I^j}
    \\ {\deriv{\Pi^j}{\Seq{\{B_i\}_{i \in \mc I^j},\Gamma^j}{D^j}}}}
  {\Seq{\{\Delta_i\}_{i \in \mc I^j},\Gamma^j}{D^j}}}
\end{mathpar}
Then $\Xi$ reduces to

\begin{mathpar}
{\inferrule*[right=\multc]
  {\left\{{\deriv{\Xi^j}{\Seq{\{\Delta_i\}_{i\in\mc I^j},\Gamma^j}{D^j}}}\right\}_{j \in [m]}
    \\ \left\{{\deriv{\Pi_i}{\Seq{\Delta_i}{B_i}}
    }\right\}_{i \in \mc I'}
    \\ {\deriv{\Pi'}{\Seq{D^1,\ldots,D^m,\{B_i\}_{i\in\mc I'},\Gamma'}{C}}}}
  {\Seq{\Delta_1,\ldots,\Delta_n,\Gamma^1,\ldots\Gamma^m,\Gamma'}{C}}}
\end{mathpar}


\subsection{{\em \underline{Structural cases:}}}
\emph{This is the case when $\Pi$ is a principal structural rule.}

\case[red-5-1]{$-/$\cL}
Supose $\Pi$ is
\begin{mathpar}
  {\inferrule*[right=\cL]
    {\deduce{\Seq{B_1,B_1,B_2,\ldots,B_n,\Gamma}{C}}
      {\Pi'}}
    {\Seq{B_1,B_2,\ldots,B_n,\Gamma}{C}}
    }
\end{mathpar}

Let $\Xi' = mc(\Pi_1,\Pi_1,\Pi_2,\dots,\Pi_n,\Pi').$ 
Then $\Xi$ reduces to
\begin{mathpar}
  {\inferrule*[right=\cL]
    {\deduce{\Seq{\Delta_1,\Delta_1,\Delta_2,\ldots,\Delta_n,\Gamma}{C}}{\Xi'}}
    {\Seq{\Delta_1,\Delta_2,\ldots,\Delta_n,\Gamma}{C}}}
\end{mathpar}

\case[red-5-2]{$-/$\wL}
Suppose $\Pi$ is 
\begin{mathpar}
  {\inferrule*[right=\wL]    
    { \deduce{\Seq{B_2,\dots,B_n,\Gamma}{C}}{\Pi'}}
    {\Seq{B_1,B_2,\dots,B_n,\Gamma}{C}}}
\end{mathpar}

Let $\Xi' = mc(\Pi_2,\dots,\Pi_n,\Pi').$
Then $\Xi$ reduces to:

\begin{mathpar}
{\inferrule*[right=\wL]
  { \deduce{\Seq{\Delta_2,\ldots,\Delta_n,\Gamma}{C}}{\Xi'}}
  {\Seq{\Delta_1,\Delta_2,\ldots,\Delta_n,\Gamma}{C}}
  }
\end{mathpar}


\subsection{\em \underline{Axiom cases:}}
\emph{This is the case where either $\Pi$ or $\Pi_1$ ends with \axiom.}

\case[red-6-1]{\axiom$/\circL$ (\emph{Left-axiom case})}

Suppose $\Pi$ ends with a left rule on $B_1$ and 
$\Pi_1$ ends with $\axiom$. This means that $\Delta_1 = \{B_1'\}$ for some $B_1'\approx B_1$, thus $\Xi$ reduces to
\begin{mathpar}
  {\inferrule*[right=\multc]
    {\left\{\deriv{\Pi_i}{\Delta_i\vdash B_i}\right\}_{i\in[2,n]}\\
      \deriv{\Pi'}{\Gamma,B_1',\ldots,B_n\vdash C}}
    {\Gamma,B_1',\Delta_2,\ldots,\Delta_n\vdash C}}
\end{mathpar}
where $\Pi'\sim\Pi$ is the derivation obtained from the renaming lemma (Lemma~\ref{renaming-lemma}).

\case[red-6-2]{$-/$\axiom (\emph{Right-axiom case})}

If $\Pi$ ends with the $\axiom$ rule, then $n=1$,
$\Gamma = \emptyset$ and $C = B_1'$ for $B_1\approx B_1'$.
In this case, $\Xi$ reduces to $\Pi_1',$ where $\Pi_1'\sim\Pi_1$ is obtained from the renaming lemma (Lemma~\ref{renaming-lemma}).

\section{The reducibility lemma}
\label{appendix:reducibility-lemma}

\lemreducibilitylemma*
\begin{proof}
  We proceed first by induction on the \textlabel{index $\Ind(\Pi)$}{ind-index}, then on the \textlabel{height $\height(\Pi)$}{ind-height}, then by subordinate induction on the \textlabel{number of cut formulas $n$}{ind-cut-formulas}.

  \case{Case $n = 0$} In this case $\Xi$ reduces to $\Pi$, so it suffices to show that $\Pi$ is reducible. We use the following trick: for any derivation $\Psi$ with height less than $\height(\Pi)$, the second induction hypothesis applies to the multicut $\multcut(\Psi)$, and since this reduces to $\Psi$ it must be that $\Psi$ is reducible. We can now show $\Pi$ is reducible by case analysis on $\Pi$:

  \subcase{Subcase \impR} Suppose $\Pi$ is a derivation of $\Gamma\vdash A\imp B$ with premise derivation $\Pi'$, let $\alpha = \lvl(A)$, and let $\Upsilon$ be an $\alpha$-reducible derivation of $\Delta\vdash A$, then since $\height(\Pi')<\height(\Pi)$, $\Pi'$ is reducible, and furthermore the second induction hypothesis applies to $\multcut(\Upsilon,\Pi')$, so $\Pi$ is reducible.

  \subcase{Subcase \impL} If $\Pi$ ends with \impL with major premise derivation $\Pi'$ and minor premise derivation $\Psi$, then since $\height(\Pi'),\height(\Psi)<\height(\Pi)$, both $\Psi$ and $\Pi'$ are reducible, and by the normalization lemma (Lemma~\ref{normalization-lemma}) $\Psi$ must be normalizable, thus $\Pi$ is reducible.

  \subcase{Subcase \IL} If $\Pi$ ends with \IL with major premise derivation $\Pi'$ and minor premise derivations $\{\Pi_{\vec u}\}_{\ground(\vec u)}$, then since $\height(\Pi'),\height(\Pi_{\vec u})<\height(\Pi)$, all $\Pi_{\vec u}$ and $\Pi'$ are reducible, and by the normalization lemma (Lemma~\ref{normalization-lemma}) $\Pi_{\vec u}$ must be normalizable, thus $\Pi$ is reducible.

  \subcase{Subcase \multc} If $\Pi$ is a multicut $\multcut(\Psi_1,\ldots,\Psi_n,\Pi')$, then each $\Psi_i$ is reducible since $\height(\Psi_i)<\height(\Pi)$, so the second induction hypothesis applies.

  \subcase{Otherwise} If $\Pi$ has premise derivations $\{\Psi_i\}_{i\in\mc I}$, then since $\height(\Psi_i)<\height(\Pi)$ then $\Psi_i$ are reducible and so is $\Pi$.
  
  \case{Case $n > 0$}
  We proceed by subordinate induction on the \textlabel{$\alpha_i$-reducibility of $\Pi_i$}{ind-reducibility} for $i=1\ldots n$, and proceed by case analysis on the reduction of $\Xi$. In what follows, when $\Pi$ ends with a principal left rule acting on $B_i$, we assume for clarity and without loss of generality that $i = 1$.

  \subsection{\em \underline{Essential cases:}}

  \emph{Suppose $\Pi$ ends with a principal left rule and $\Pi_1$ ends with a (principal) right rule.}

  \refcase[red-1-1]{\andR/\andL}
  If $\Pi_1$ and $\Pi$ are
  \begin{mathpar}
    {\inferrule*[right=\andR]
      {\deduce{\Seq{\Delta_1}{B_1^1}}
        {\Pi_1^1}
        \\ \deduce{\Seq{\Delta_1}{B_1^2}}
           {\Pi_1^2}}
      {\Seq{\Delta_1}{B_1^1 \land B_1^2}}}    \and
    {\inferrule*[right=\andL{i}]
      {\deduce{\Seq{B_1^i,B_2,\ldots,B_n,\Gamma}{C}}
        {\Pi'}}{\Seq{B_1^1 \land B_1^2,B_2,\ldots,B_n,\Gamma}{C}}
    }
  \end{mathpar}
  then the reduction $mc(\Pi_1^i,\Pi_2,\dots,\Pi_n,\Pi')$ is reducible from the induction hypothesis \ref{ind-height} because $\Pi_1^i$ is reducible and $\height(\Pi')<\height(\Pi)$.

\refcase[red-1-2]{\orR{i}/\orL}
If $\Pi_1$ and $\Pi$ are
\begin{mathpar}
\inferrule*[right=\orR{i}]
        {\deduce{\Seq{\Delta_1}{B_1^i}}
                {\Pi_1'}}{\Seq{\Delta_1}{B_1^1 \lor B_1^2}}
\and
\inferrule*[right=\orL]
        {\deduce{\Seq{B_1^1,B_2,\ldots,B_n,\Gamma}{C}}
                {\Pi^1}
        \\ \deduce{\Seq{B_1^2,B_2,\ldots,B_n,\Gamma}{C}}
                {\Pi^2}}{\Seq{B_1^1 \lor B_1^2,B_2,\ldots,B_n,\Gamma}{C}}
\end{mathpar}
then the reduction $mc(\Pi_1',\Pi_2,\dots,\Pi_n,\Pi^i)$ is reducible from the induction hypothesis \ref{ind-height} because $\Pi_1'$ is reducible and $\height(\Pi^i)<\height(\Pi)$.

\refcase[red-1-3]{\impR/\impL}
Suppose $\Pi_1$ and $\Pi$ are
\begin{mathpar}
\inferrule*[right=\impR]
        {\deduce{\Seq{B_1',\Delta_1}{B_1''}}
                {\Pi_1'}}{\Seq{\Delta_1}{B_1' \imp B_1''}}
\and
\inferrule*[right=\impL]
        {\deduce{\Seq{B_2,\ldots,B_n,\Gamma}{B_1'}}
                {\Pi'}
        \\ \deduce{\Seq{B_1'',B_2,\ldots,B_n,\Gamma}{C}}
                {\Pi''}}{\Seq{B_1' \imp B_1'',B_2,\ldots,B_n,\Gamma}{C}}
\end{mathpar}
then to show that the reduction $\multcut(\Xi_1',\Pi_2,\ldots,\Pi_n,\Pi'')$ is reducible it sufficies to show by the induction hypothesis \ref{ind-height} that the multicut $\Xi_1' = \multcut(\Xi_1,\Pi_1')$ is reducible. However, this follows from the reducibility of $\Pi_1$, given that the multicut $\Xi_1 = \multcut(\Pi_2,\ldots,\Pi_n,\Pi')$ is reducible, which itself follows from the induction hypothesis \ref{ind-height}, where we note that $\height(\Pi'),\height(\Pi'') < \height(\Pi)$.

\refcase[red-1-4]{$\forallR/\forallL$}
Suppose $\Pi_1$ and $\Pi$ are
\begin{mathpar}
{\inferrule*[right=\allR]
  {\left\{\deriv{\Pi_1^s}{\Seq{\Delta_1}{B_1'[s/x]}}\right\}_{s \in \ground(\tau)}}
  {\Seq{\Delta_1}{\forall_\tau x.B_1'}}}\and
{\inferrule*[right=\allL]
  {\deduce{\Seq{B_1'[t/x],B_2,\ldots,B_n,\Gamma}{C}}{\Pi'}}
  {\Seq{\forall_\tau x.B_1',B_2,\ldots,B_n,\Gamma}{C}}}
\end{mathpar}
then the reduction
$mc(\Pi_1^t, \Pi_2,\dots,\Pi_n,\Pi')$
is reducible by the induction hypothesis \ref{ind-height}, since the reducibility of $\Pi_1^t$ follows from the reducibility of $\Pi_1$, and $\height(\Pi')<\height(\Pi)$.

\refcase[red-1-5]{$\existsR/\existsL$}
Suppose $\Pi_1$ and $\Pi$ are
\begin{mathpar}
{\inferrule*[right=\exR]
  {{\deriv{\Pi_1'}{\Seq{\Delta_1}{B_1'[t/x]}}}}
  {\Seq{\Delta_1}{\exists_\tau x.B_1'}}
}\and
{\inferrule*[right=\exL]
  {\left\{\deriv{\Pi^s}
    {\Seq{B_1'[s/x],B_2,\ldots,B_n,\Gamma}{C}}\right\}_{s \in \ground(\tau)}
  }
  {\Seq{\exists_\tau x.B_1',B_2,\ldots, B_n,\Gamma}{C}}}
\end{mathpar}
then the reduction $\multcut(\Pi_1',\Pi_2,\dots,\Pi_n,\Pi^t)$ is reducible by the induction hypothesis \ref{ind-height}, since the reducibility of $\Pi_1$ implies the reducibility of $\Pi_1'$, and $\height(\Pi^t)<\height(\Pi)$.

\refcase[red-1-6]{\DR/\DL}
Suppose $\Pi_1$ and $\Pi$ are
\begin{mathpar}
  {\inferrule*[right=\DR]
    {{\deriv{\Pi_1'}{\Seq{\Delta_1}{B'}}}}
    {\Seq{\Delta_1}{A}}}\quad\raisebox{1em}{$\dfn(H\defn B,A,\epsilon_\emptyset,B')$}\and
  {\inferrule*[right=\DL]
    {\left\{{\deriv{\Pi^{H\defn B}_{B'}}{\Seq{B',B_2,\ldots,B_n,\Gamma}{C}}}\st
      \dfn(H\defn B,A,\epsilon_\emptyset,B')\right\}}
    {\Seq{A,B_2,\ldots,B_n,\Gamma}{C}}}
\end{mathpar}
then the reduction $\multcut(\Pi_1',\Pi_2,\ldots,\Pi_n,\Pi')$ where $\Pi'$ is $\Pi^{H\defn B}_{B'}$ for any clause $H\defn B$ such that $\dfn(H\defn B,A,\epsilon_\emptyset,B')$ is reducible by the induction hypothesis \ref{ind-height} since the reducibility of $\Pi_1'$ is implied by the reducibility of $\Pi_1$, and $\height(\Pi')<\height(\Pi)$.



\refcase[red-1-7]{\nabR/\nabL}
Given $n,m\not\in\supp(B_1')$, if $\Pi_1$ and $\Pi$ are
\begin{mathpar}
  {\inferrule*[right=\nabR]
    {{\deriv{\Pi_1'}{\Delta_1\vdash B_1'[n/x]}}}
    {\Delta_1\vdash \nabla_\tau x.B_1'}}\and
  {\inferrule*[right=\nabL]
    {{\deriv{\Pi'}{B_1'[m/x],B_2,\ldots,B_n,\Gamma\vdash C}}}
    {\nabla_\tau x.B_1',B_2,\ldots,B_n,\Gamma\vdash C}}
\end{mathpar}
then by the renaming lemma (Lemma~\ref{renaming-lemma}), there exists a derivation $\Pi_1'' \sim\Pi_1'$ of $\Delta_1\vdash B_1'[m/x]$, and the reduction $\multcut(\Pi_1'',\Pi_2,\ldots,\Pi_n,\Pi')$ is reducible by the induction hypothesis \ref{ind-height}, since $\Ind(\Pi') = \Ind(\Pi)$ and $\height(\Pi')<\height(\Pi)$, and furthermore, by the renaming congruence lemma (Lemma~\ref{renaming-congruence-lemma}) $\Pi_1''$ is reducible.

\subsection{\em \underline{Inductive case}}
\refcase[red-1-8]{-/\IL ({\em Inductive case})}
Suppose $\Pi_1$ is any derivation with end sequent $\Delta_1\vdash p\ \vec t$ and $\Pi$ is
\begin{mathpar}
  {\inferrule*[right=\IL]
    {\left\{\deriv{\Pi^t_S}{B\ S\ \vec t\vdash S\ \vec t}\right\}_{\ground(t)}\\
      \deriv{\Pi'}{S\ \vec t,B_2,\ldots,B_n,\Gamma\vdash C}}
    {\Seq{p\ \vec t,B_2,\ldots,B_n,\Gamma}{C}}}
\end{mathpar}
then to show that the reduction $\multcut(\mu(\Pi_1,\Pi_S),\Pi_2,\ldots,\Pi_n,\Pi')$ is reducible it suffices to show by the induction hypothesis \ref{ind-height} that $\mu(\Pi_1,\Pi_S)$ is reducible. This follows from lemma~\ref{unfolded-reducibility-lemma} provided the following assumptions are satisfied:

\subcase{\em Assumption 1} Note that $\Ind(\Pi_S^t) < \Ind(\Pi)$, so from the induction hypothesis \ref{ind-index} it follows that the nullary cut $\multcut(\Pi_S^{\vec t})$ is reducible, which implies that $\Pi_S^{\vec t}$ is reducible, and therefore also normalizable by Lemma~\ref{normalization-lemma}.
\subcase{\em Assumption 2} Note that $\Ind(\Pi_S^t) < \Ind(\Pi)$, so from the induction hypothesis \ref{ind-index} it follows that for any reducible $\Upsilon$, $\multcut(\Upsilon,\Pi_S^t)$ is reducible.

\subcase{\em Assumption 3} Note that $\Ind(\init) = 0<\Ind(\Pi)$, so by inductive hypothesis \ref{ind-index} we have that for any reducible $\Upsilon$, $\multc(\Upsilon,\init)$ is reducible.

\subsection{\em \underline{Left-commutative cases:}}
\emph{Suppose $\Pi$ is a principal left rule and $\Pi_1$ ends with a left rule.}

\refcase[red-2-1]{$\bulletL/\circL$}
Suppose $\Pi$ ends with a left rule other than \cL or \wL
acting on $B_1$ and $\Pi_1$ is
\begin{mathpar}
\inferrule*[right=$\bulletL$]
{\left\{
  \deriv{\Pi_1^i}{\Seq{\Delta_1^i}{B_1}}\right\}
}{\Seq{\Delta_1}{B_1}}
\end{mathpar}
where $\bulletL$ is any left rule except \impL and \IL. In this case each $\Pi^i_1$ is $\alpha_1$-reducible (since every premise of $\Pi_1$ is a major premise), so by the induction hypothesis \ref{ind-reducibility}, each derivation $\Xi^i = mc(\Pi^i_1,\Pi_2,\dots,\Pi_n,\Pi)$ is reducible. It follows that the reduction
\begin{mathpar}
\inferrule*[right=$\bulletL$]
{\left\{\deriv{\Xi^i}{\Seq{\Delta_1^i,\Delta_2,\ldots,\Delta_n,\Gamma}{C}}\right\}
}{\Seq{\Delta_1,\Delta_2,\ldots,\Delta_n,\Gamma}{C}}
\end{mathpar}
is reducible.

\refcase[red-2-2]{\impL$/\circL$}
Suppose $\Pi$ ends with a left rule other than \cL or \wL acting
on $B_1$ and $\Pi_1$ is
\begin{mathpar}
\inferrule*[right=\impL]
{\deduce{\Seq{\Delta_1'}{D_1}}
                {\Pi_1'}
        \\ \deduce{\Seq{D_2,\Delta_1'}{B_1}}
                {\Pi_1''}}{\Seq{D_1 \imp D_2,\Delta_1'}{B_1}}
\end{mathpar}
In this case $\Pi_1'$ is normalizable and $\Pi_1''$ is $\alpha_1$-reducible, so by the induction hypothesis \ref{ind-reducibility}, it follows that the derivation $\Xi_1 = mc(\Pi_1'',\Pi_2,\dots,\Pi_n,\Pi)$ is reducible. Furthermore, $\weakL(\Pi_1')$ is normalizable, thus the reduction
\begin{mathpar}
\inferrule*[right=\impL]
{
 \infer=[\weakL]
  {\Seq{\Delta_1',\Delta_2,\ldots,\Delta_n,\Gamma}{D_1}}
  {\deduce{\Seq{\Delta_1'}{D_1}}{\Pi_1'}}
  \\
\deduce{\Seq{D_2,\Delta_1',\Delta_2,\ldots,\Delta_n,\Gamma}{C}}
                {\Xi_1}
}{\Seq{D_1 \imp D_2,\Delta_1',\Delta_2,\ldots,\Delta_n,\Gamma}{C}}
\end{mathpar}
is reducible.

\refcase[red-2-3]{\IL/$\circL$}
Suppose $\Pi$ ends with a left rule $\circL$ other than \cL or \wL acting
on $B_1$ and $\Pi_1$ is
\begin{mathpar}
  {\inferrule*[right=\IL]
    {\left\{{\deriv{\Pi_S^t}{\Seq{B\ S\ \vec t}{S\ \vec t}}}\right\}_{\ground(t)}
      \\ \deriv{\Pi_1'}{\Seq{S\ \vec t,\Delta_1'}{B_1}}}
    {\Seq{p\ \vec t,\Delta_1'}{B_1}}}
\end{mathpar}
In this case the $\Pi_S^t$ are normalizable and $\Pi_1'$ is $\alpha_1$-reducible, so by the induction hypothesis \ref{ind-reducibility}, it follows that the derivation $\Xi_1 = mc(\Pi_1',\Pi_2,\dots,\Pi_n,\Pi)$ is reducible. Thus the reduction

\begin{mathpar}
  {\inferrule*[right=\IL]
    {\left\{{\deriv{\Pi_S^t}{\Seq{B\ S\ \vec t}{S\ \vec t}}}\right\}_{\ground(t)}
      \\
      \deriv{\Xi_1}{\Seq{S\ \vec t,\Delta_1',\Delta_2,\ldots,\Delta_n,\Gamma}{C}}}
    {\Seq{p\ \vec t,\Delta_1',\Delta_2,\ldots,\Delta_n,\Gamma}{C}}}
\end{mathpar}
is reducible.

\subsection{\em \underline{Right commutative cases:}}
\emph{Suppose $\Pi$ ends with a non-principal rule.}

\refcase[red-3-1]{$-/\circL$}
Suppose $\Pi$ is
\begin{mathpar}
\inferrule*[right=$\circL$]
{\left\{{\deriv{\Pi^i}{B_1,\ldots,B_n,\Gamma^i\vdash C}}\right\}}
{B_1,\ldots,B_n,\Gamma\vdash C}
\end{mathpar}
where $\circL$ is any left rule other than \impL and \IL (but including \cL and \wL) acting on a formula other than $B_1, \ldots, B_n$. Then since $\height(\Pi^i)<\height(\Pi)$, it follows from the induction hypothesis \ref{ind-height} that each $\Xi^i = mc(\Pi_1,\dots,\Pi_n,\Pi^i)$ is reducible, and therefore the reduction
\begin{mathpar}
  {\inferrule*[right=$\circL$]
    {\left\{
      {\deriv{\Xi^i}{\Seq{\Delta_1,\ldots,\Delta_n,\Gamma^i}{C}}}
      \right\}}
    {\Seq{\Delta_1,\ldots,\Delta_n,\Gamma}{C}}
  }
\end{mathpar}
is reducible.

\refcase[red-3-2]{$-/$\impL}
Suppose $\Pi$ is
\begin{mathpar}
\inferrule*[right=\impL]
        {\deduce{\Seq{B_1,\ldots,B_n,\Gamma'}{D_1}}
                {\Pi'}
        \\ \deduce{\Seq{B_1,\ldots,B_n,D_2,\Gamma'}{C}}
                {\Pi''}}{\Seq{B_1,\ldots,B_n, D_1 \imp D_2,\Gamma'}{C}}
\end{mathpar}
Since $\height(\Pi'),\height(\Pi'')<\height(\Pi)$ it follows that $\Xi_1 = mc(\Pi_1,\dots,\Pi_n,\Pi')$ and $\Xi_2 = mc(\Pi_1,\dots,\Pi_n,\Pi'')$ are both reducible by the inductive hypothesis \ref{ind-height}, and thus $\Xi_1$ is normalizable by Lemma~\ref{normalization-lemma} and the reduction
\begin{mathpar}
  \inferrule*[right=\impL]
  {\deduce{\Seq{\Delta_1,\ldots,\Delta_n,\Gamma'}{D_1}}
    {\Xi_1}
    \\ \deduce{\Seq{\Delta_1,\ldots,\Delta_n,D_2,\Gamma'}{C}}
    {\Xi_2}}{\Seq{\Delta_1,\ldots,\Delta_n,D_1 \imp D_2,\Gamma'}{C}}
\end{mathpar}
is reducible.

\refcase[red-3-3]{$-/$\IL}
Suppose $\Pi$ is
\begin{mathpar}
{\inferrule*[right=\IL]
  {\left\{{\deriv{\Pi_S^t}{\Seq{B\ S\ \vec t}{S\ \vec t}}}\right\}_{\ground(t)}
    \\
    \deriv{\Pi'}{\Seq{B_1,B_2,\ldots,B_n,S\ \vec t,\Gamma'}{C}}
  }
  {\Seq{B_1,\ldots,B_n, p\ \vec t,\Gamma'}{C}}
  }
\end{mathpar}
then since $\height(\Pi')<\height(\Pi)$ it follows from the induction hypothesis \ref{ind-height} that $\Xi' = mc(\Pi_1,\dots,\Pi_n,\Pi')$ is reducible and therefore the reduction
\begin{mathpar}
{\inferrule*[right=\IL]
  {\left\{{\deriv{\Pi_S^t}{\Seq{B\ S\ \vec t}{S\ \vec t}}}\right\}_{\ground(t)}
    \\
    \deriv{\Xi'}{\Seq{\Delta_1,\Delta_2,\ldots,\Delta_n,S\ \vec t,\Gamma'}{C}}
  }
  {
    {\Seq{\Delta_1,\ldots,\Delta_n,p\ \vec t,\Gamma'}{C}}
  }
}
\end{mathpar}
is reducible.

\refcase[red-3-4]{$-/$\impR}
Suppose $\Pi$ is
\begin{mathpar}
\inferrule*[right=\impR]
{{\deriv{\Pi'}{\Seq{B_1,\ldots,B_n,\Gamma,C_1}{C_2}}}}
{\Seq{B_1,\ldots,B_n,\Gamma}{C_1\imp C_2}}
\end{mathpar}
Since $\height(\Pi')<\height(\Pi)$ it follows that $\Xi' = mc(\Pi_1,\dots,\Pi_n,\Pi')$ is reducible by the induction hypothesis \ref{ind-height}.
Furthermore, for any reducible derivation $\Upsilon$ of a sequent $\Seq{\Delta}{C_1}$,
since $mc(\Upsilon,\Xi')$ reduces to $mc(\Pi_1,\ldots,\Pi_n,\Upsilon,\Pi')$,
it again follows from induction hypothesis \ref{ind-height} that $mc(\Upsilon,\Xi')$ is reducible, and therefore the reduction reduction
\begin{mathpar}
\inferrule*[right=\impR]
{{\deriv{\Xi'}{\Seq{\Delta_1,\ldots,\Delta_n,\Gamma,C_1}{C_2}}}}
{\Seq{\Delta_1,\ldots,\Delta_n,\Gamma}{C_1\imp C_2}}
\end{mathpar}
is reducible.

\refcase[red-3-5]{$-/\circR$}
Suppose $\Pi$ is
\begin{mathpar}
\inferrule*[right=$\circR$]
{\left\{{\deriv{\Pi^i}{\Seq{B_1,\ldots,B_n,\Gamma^i}{C^i}}}\right\}}
{\Seq{B_1,\ldots,B_n,\Gamma}{C}}
\end{mathpar}
where $\circR$ is any right rule. 
Since $\height(\Pi^i)<\height(\Pi)$ it follows that $\Xi^i = mc(\Pi_1,\dots,\Pi_n,\Pi^i)$ is reducible by the induction hypothesis \ref{ind-height}, and therefore the reduction
\begin{mathpar}
\inferrule*[right=$\circR$]
{\left\{
\deriv{\Xi^i}{\Seq{\Delta_1,\ldots,\Delta_n,\Gamma^i}{C^i}}
\right\}}
{\Seq{\Delta_1,\ldots,\Delta_n,\Gamma}{C}}
\end{mathpar}
is reducible.

\subsection{\em \underline{Multicut cases:}}
\emph{Suppose either $\Pi$ or $\Pi_1$ ends with a multicut.}

\refcase[red-4-1]{$\multcut/\circL$ (\emph{Left-multicut case})}
If $\Pi$ ends with a left rule other than \cL or \wL acting on $B_1$, and
$\Pi_1$ ends with a multicut that reduces to $\Pi_1'$ (which is $\alpha_1$-reducible),
then the reduction
$\multcut(\Pi_1',\Pi_2,\dots,\Pi_n,\Pi).$
is reducible by the induction hypothesis \ref{ind-reducibility}.

\refcase[red-4-2]{$-/\multcut$ (\emph{Right-multicut case})}

Suppose $\Pi$ is
\begin{mathpar}
  {\inferrule*[right=\multc]
    {\left\{{\deriv{\Pi^j}{\Seq{\{B_i\}_{i \in \mc I^j},\Gamma^j}{D^j}}}\right\}_{j \in [m]}
      \\ {\deriv{\Pi'}{\Seq{D^1,\ldots,D^m,\{B_i\}_{i \in \mc I'},\Gamma'}{C}}
        }}
    {\Seq{B_1,\ldots,B_n,\Gamma^1,\ldots,\Gamma^m,\Gamma'}{C}}}
\end{mathpar}
where $\mc I^1,\ldots,\mc I^m,\mc I'$ partition the formulas $B_1,\ldots, B_n$ among the premise derivations $\Pi^1, \ldots, \Pi^m$,$\Pi'$.
For $1 \leq j \leq m$, note that $\height(\Pi^j)<\height(\Pi)$ and thus each derivation $\Xi^j$ below
\begin{mathpar}
{\inferrule*[right=\multc]
  {\left\{{\deriv{\Pi_i}{\Seq{\Delta_i}{B_i}}}\right\}_{i \in \mc I^j}
    \\ {\deriv{\Pi^j}{\Seq{\{B_i\}_{i \in \mc I^j},\Gamma^j}{D^j}}}}
  {\Seq{\{\Delta_i\}_{i \in \mc I^j},\Gamma^j}{D^j}}}
\end{mathpar}
is reducible by the induction hypothesis \ref{ind-height}.

Furthermore, since $\height(\Pi')<\height(\Pi)$, the reduction
\begin{mathpar}
{\inferrule*[right=\multc]
  {\left\{{\deriv{\Xi^j}{\Seq{\{\Delta_i\}_{i\in\mc I^j},\Gamma^j}{D^j}}}\right\}_{j \in [m]}
    \\ \left\{{\deriv{\Pi_i}{\Seq{\Delta_i}{B_i}}
    }\right\}_{i \in \mc I'}
    \\ {\deriv{\Pi'}{\Seq{D^1,\ldots,D^m,\{B_i\}_{i\in\mc I'},\Gamma'}{C}}}}
  {\Seq{\Delta_1,\ldots,\Delta_n,\Gamma^1,\ldots\Gamma^m,\Gamma'}{C}}}
\end{mathpar}
is reducible by the induction hypothesis \ref{ind-height}.

\subsection{\em \underline{Structural cases:}}
\emph{Suppose $\Pi$ is a principal structural rule.}

\refcase[red-5-1]{$-/$\cL}
Supose $\Pi$ is
\begin{mathpar}
  {\inferrule*[right=\cL]
    {\deduce{\Seq{B_1,B_1,B_2,\ldots,B_n,\Gamma}{C}}
      {\Pi'}}
    {\Seq{B_1,B_2,\ldots,B_n,\Gamma}{C}}
    }
\end{mathpar}
then since $\height(\Pi')<\height(\Pi)$ it follows from induction hypothesis \ref{ind-height} that $\Xi' = mc(\Pi_1,\Pi_1,\Pi_2,\dots,\Pi_n,\Pi')$ is reducible, and thus the reduction
\begin{mathpar}
  {\inferrule*[right=\cL]
    {\deduce{\Seq{\Delta_1,\Delta_1,\Delta_2,\ldots,\Delta_n,\Gamma}{C}}{\Xi'}}
    {\Seq{\Delta_1,\Delta_2,\ldots,\Delta_n,\Gamma}{C}}}
\end{mathpar}
is reducible.

\refcase[red-5-2]{$-/$\wL}
Suppose $\Pi$ is 
\begin{mathpar}
  {\inferrule*[right=\wL]    
    { \deduce{\Seq{B_2,\dots,B_n,\Gamma}{C}}{\Pi'}}
    {\Seq{B_1,B_2,\dots,B_n,\Gamma}{C}}}
\end{mathpar}

then since $\height(\Pi')<\height(\Pi)$ it follows from induction hypothesis \ref{ind-height} that $\Xi' = mc(\Pi_2,\dots,\Pi_n,\Pi')$ is reducible and thus the reduction
\begin{mathpar}
{\inferrule*[right=\wL]
  { \deduce{\Seq{\Delta_2,\ldots,\Delta_n,\Gamma}{C}}{\Xi'}}
  {\Seq{\Delta_1,\Delta_2,\ldots,\Delta_n,\Gamma}{C}}
  }
\end{mathpar}
is reducible.

\subsection{\em \underline{Axiom cases:}}
\emph{Suppose either $\Pi$ or $\Pi_1$ ends with \axiom.}
\refcase[red-6-1]{\axiom$/\circL$ (\emph{Left-axiom case})}

Suppose $\Pi$ ends with a left rule on $B_1$ and 
$\Pi_1$ ends with $\axiom$. This means that $\Delta_1 = \{B_1'\}$ for some $B_1'\approx B_1$, thus $\Xi$ reduces to
\begin{mathpar}
  {\inferrule*[right=\multc]
    {\left\{\deriv{\Pi_i}{\Delta_i\vdash B_i}\right\}_{i\in[2,n]}\\
      \deriv{\Pi'}{\Gamma,B_1',\ldots,B_n\vdash C}}
    {\Gamma,B_1',\Delta_2,\ldots,\Delta_n\vdash C}}
\end{mathpar}
where $\Pi'\sim\Pi$ is the derivation obtained from the renaming lemma (Lemma~\ref{renaming-lemma}). Since $\height(\Pi') = \height(\Pi)$ and the number of cut formulas is now $n-1$ it follows from induction hypothesis \ref{ind-cut-formulas} that $\Xi$ is reducible.

\refcase[red-6-2]{$-/$\axiom (\emph{Right-axiom case})}

If $\Pi$ ends with the $\axiom$ rule, then $n=1$,
$\Gamma = \emptyset$ and $C = B_1'$ for $B_1\approx B_1'$.
In this case, $\Xi$ reduces to $\Pi_1',$ where $\Pi_1'\sim\Pi_1$ is obtained from the renaming lemma (Lemma~\ref{renaming-lemma}). By the renaming congruence lemma (Lemma~\ref{renaming-congruence-lemma}), $\Pi'$ is reducible.

\end{proof}

\section{Additional properties of \LDIN}

\subsection{\DL for inductive definitions}
We make the following observation that inductive definitions are also fixed-point definitions:
\begin{lemma}[Inductive definitions are fixed-point definitions]
  \label{inductive-fixpoint}
  Given an inductive definition $p\ \vec x \ind B\ p\ \vec x$, the following rule is derivable
  \begin{mathpar}
    \inferrule*[right=\DL\!*]
    {\Gamma, B\ p\ \vec t \vdash C}
    {\Gamma, p\ \vec t \vdash C}
  \end{mathpar}
\end{lemma}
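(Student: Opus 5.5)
The plan is the standard Knaster--Tarski argument that a least fixed point is also a post-fixed point. We use the invariant $S := \lambda\vec x.\,B\ p\ \vec x$ in the \IL rule. With this choice, the major premise of \IL becomes $\Gamma, S\ \vec t\vdash C$. Up to $\beta$-conversion this is $\Gamma, B\ p\ \vec t\vdash C$, which is exactly the premise of the derived rule. So it only remains to give, for the generic $\vec x$, a derivation of the minor premise
$$\vec x;\ B\ (\lambda\vec y.B\ p\ \vec y)\ \vec x\vdash B\ p\ \vec x .$$
In the ground logic \LDINinf this is the family of such sequents for every ground $\vec t$.

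To derive the minor premise, we first show that the operator $B$ is monotone. Since $\mc I$ is strictly stratified, $p$ occurs only positively in $B\ p\ \vec x$. More precisely, $B$ is a closed term not containing $p$, and its predicate argument occurs only positively in its body. We prove a monotonicity lemma by induction on the structure of the formula. Take a formula $D\ X$ in which the predicate variable $X$ occurs only positively, and predicates $R,R'$ such that $\vec y; R\ \vec y\vdash R'\ \vec y$ is derivable. Then $\mc X; D\ R\vdash D\ R'$ is derivable.

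Each step of this induction uses the corresponding left and right introduction rules.
\begin{itemize}
\item For $\wedge$, $\vee$ and the quantifiers $\forall$, $\exists$, $\nabla$, the step is immediate. For the eigenvariable rules we use raising, and for $\nabla$ we pick a common fresh nominal.
\item For $D_1\imp D_2$, the antecedent $D_1$ does not contain $X$. We close the left branch with \init on $D_1$ and apply the hypothesis to $D_2$.
\item For atoms not headed by $X$, we use \init.
\item For atoms $X\ \vec s$, we instantiate the assumed derivation of $R\ \vec y\vdash R'\ \vec y$ at $\vec s$. In \LDIN this is a substitution lemma for eigenvariables; in \LDINinf the hypothesis is already given for all ground arguments.
\end{itemize}

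We then apply the monotonicity lemma with $R := \lambda\vec y.B\ p\ \vec y$ and $R' := p$. The required entailment $B\ p\ \vec y\vdash p\ \vec y$ is obtained by applying \IR to \init. This yields $B\ (\lambda\vec y.B\ p\ \vec y)\ \vec x\vdash B\ p\ \vec x$, which is the minor premise.

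The main obstacle is the bookkeeping in the monotonicity lemma. The atomic case needs a form of instantiation of the derivation $R\ \vec y\vdash R'\ \vec y$ to arbitrary argument terms, possibly under raised eigenvariables and nominals. In \LDIN I would prove the lemma in the generalized form ``for every instance'', or simply carry out the argument in \LDINinf. There the minor premises are indexed by all ground tuples, and the \nabla case is handled with the renaming lemma (Lemma~\ref{renaming-lemma}). Contraction is not needed, since each branch uses its hypothesis once. The resulting derivation is cut-free, which matters for the use of this lemma in the inversion theorem.
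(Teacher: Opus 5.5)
Your proposal is correct and matches the paper's proof step for step. The paper also applies \IL with the invariant $S = B\ p$ (your $\lambda\vec x.B\ p\ \vec x$ up to $\eta$), discharges the minor premise $B\ (B\ p)\ \vec x\vdash B\ p\ \vec x$ with a monotonicity sublemma proved by induction on the formula (strict stratification keeps $p$ out of the left of implications), and obtains the atomic hypothesis $B\ p\ \vec t\vdash p\ \vec t$ exactly as you do, by \IR applied to \init.
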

\begin{proof}
  The derivation is obtained from \muL with inductive invariant $S = B\ p$ as follows
  \begin{mathpar}
    \inferrule*[right=\muL]
    {B\ (B\ p)\ \vec x \vdash B\ p\ \vec x\\
      \Gamma, B\ p\ \vec t \vdash C}
    {\Gamma, p\ \vec t\vdash C}
  \end{mathpar}
  The derivability of $B\ (B\ p)\ \vec x\vdash B\ p\ \vec x$ is obtained from the following sublemma:
  \begin{addmargin}[1em]{1em}
    \begin{sublemma}
    Suppose $B$ is an operator of type $\omega\to\omega$ such that $p$ does not occur negatively in the formula $B\ p\ \vec x$, and suppose $S : \omega$ satisfies $S\ \vec t\vdash p\ \vec t$ for any $\vec t$, then $B\ S\ \vec x\vdash B\ p\ \vec x$ is derivable.
  \end{sublemma}
  \begin{proof}
    We proceed by induction on the complexity of $B\ p\ \vec x$. We only consider the following cases.
    \case{\underline{$p\ \vec t$}} In this case, $B\ S\ \vec x \equiv S\ \vec t$, so the derivation of $S\ \vec t\vdash p\ \vec t$ is obtained by assumption.

    \case{\underline{$B_1\ p\ \vec x \wedge B_2\ p\ \vec x$}} In this case, the derivation is obtained as follows
    \begin{mathpar}
      \inferrule*[right=\andL*]
      {\inferrule*[Right=\andR]
        {B_1\ S\ \vec x\vdash B_1\ p\ \vec x\\
          B_2\ S\ \vec x\vdash B_2\ p\ \vec x}
        {B_1\ S\ \vec x, B_2\ S\ \vec x\vdash B_1\ p\ \vec x\wedge B_2\ p\ \vec x}}
      {B_1\ S\ \vec x\wedge B_2\ S\ \vec x\vdash B_1\ p\ \vec x\wedge B_2\ p\ \vec x}
    \end{mathpar}
    where $B_i\ S\ \vec x\vdash B_i\ p\ \vec x$ are obtained from the induction hypotheses.
    
    \case{\underline{$B_1\ \vec x\imp B_2\ p\ \vec x$}} In this case, the derivation is obtained as follows
    \begin{mathpar}
      \inferrule*[right=\impR]
      {\inferrule*[Right=\impL, rightskip=3em, leftskip=3em]
        {\inferrule*[right=\init]{ }{B_1\ \vec x \vdash B_1\ \vec x}\\
          \inferrule*[Right=\wL]{B_2\ S\ \vec x\vdash B_2\ p\ \vec x}
          {B_2\ S\ \vec x,B_1\ \vec x\vdash B_2\ p\ \vec x}}
        {B_1\ \vec x\imp B_2\ S\ \vec x, B_1\ \vec x\vdash B_2\ p\ \vec x}}
      {B_1\ \vec x\imp B_2\ S\ \vec x\vdash B_1\ \vec x\imp B_2\ p\ \vec x}
    \end{mathpar}
    where $B_2\ S\ \vec x\vdash B_2\ p\ \vec x$ is obtained from the induction hypothesis.
    \case{\underline{$\forall_\alpha y.B'\ y\ p\ \vec x$}} In this case, the derivation is obtained as follows
    \begin{mathpar}
      \inferrule*[right=\allR]
      {\left\{\raisebox{-1em}{$\inferrule*[right=\allL]
        {B'\ t\ S\ \vec x\vdash B'\ t\ p\ \vec x}
        {\forall y. B'\ y\ S\ \vec x\vdash B'\ t\ p\ \vec x}$
        }\right\}_{t\in\ground(\alpha)}}
      { \forall y. B'\ y\ S\ \vec x\vdash \forall y. B'\ y\ p\ \vec x}
    \end{mathpar}
    where $(B'\ t)\ S\ \vec x\vdash (B'\ t)\ p\ \vec x$ is obtained from the induction hypothesis, noting that substituting $t$ for $y$ in $(B'\ t)$ does not increase the complexity of the formula.
  \end{proof}
\end{addmargin}

We apply the lemma by choosing $B = B$, and $S = (B\ p)$. Note that the stratification condition on inductive definitions guarantees that $p$ does not occur negatively in $B\ p$. Furthermore, the condition on $S$ is obtained as follows
\begin{mathpar}
  \inferrule*[right=\muR]
  {\inferrule*[Right=\init]{ }
    {B\ p\ \vec t\vdash B\ p\ \vec t}}
  {B\ p\ \vec t\vdash p\ \vec t}
\end{mathpar}

\end{proof}

\subsection{Uniqueness of definition instances}
We make the following observation from \cite{Tiu12} which makes use of the requirement that the head $H$ of a definition $H\defn B$ be in the pattern fragment.

\begin{lemma}[Uniqueness of definition instances]
  \label{definition-uniqueness-lemma}
  Given a definition $H\defn_{\mc X} B$ satisfying $\dfn(H\defn B, A, \theta, B_1)$ and $\dfn(H\defn B, A, \theta, B_2)$ such that $H\rho_1 = A\theta = H\rho_2$, then $\rho_1 = \rho_2$. In particular this implies that $B_1 = B_2$.
\end{lemma}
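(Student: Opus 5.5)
The plan is to reduce the statement to the standard fact that higher-order pattern unification problems have at most one solution on the variables that occur in the pattern, i.e. pattern matching is injective on substitutions. Write $H = p\ s_1\ \ldots\ s_k$ with $H$ in the $L_\lambda$ fragment. Every free variable $x\in\mc X$ occurs in $H$, and each such occurrence has the form $x\ y_1\ \ldots\ y_m$, where the $y_j$ are distinct variables bound by $\lambda$-abstractions inside $H$ (or, more generally, distinct bound variables or nominal constants). Since $\rho_1,\rho_2:\mc Z\to\mc X$ assign a term only to the variables of $\mc X$, showing $\rho_1=\rho_2$ means showing $x\rho_1 = x\rho_2$ for every $x\in\mc X$.

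I would prove a slightly more general claim by induction on the structure of a $\beta\eta$-normal pattern term $s$ lying over $\mc X$ together with a list of locally bound variables $\vec z$. The claim is that if $s\rho_1 = s\rho_2$ modulo $\alpha\beta\eta$, then $x\rho_1 = x\rho_2$ for every $x\in\mc X$ occurring free in $s$. There are three cases.
\begin{itemize}
\item If $s$ is an abstraction $\lambda z.s'$, the equality passes to $s'$ under the extended list of bound variables.
\item If $s = c\ s'_1\ldots s'_r$ with $c$ a constant, nominal, or bound variable, the substitutions do not touch the head. Equality of $\beta\eta$-normal forms then forces $s'_i\rho_1 = s'_i\rho_2$ componentwise, and we apply the induction hypothesis to each argument.
\item If $s = x\ y_1\ldots y_m$ with $x\in\mc X$, then $(x\rho_1)\ \vec y =_{\beta\eta} (x\rho_2)\ \vec y$. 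Because the $y_j$ are distinct and do not occur in $x\rho_i$, we may abstract: $x\rho_1 =_\eta \lambda\vec y.(x\rho_1)\ \vec y = \lambda\vec y.(x\rho_2)\ \vec y =_\eta x\rho_2$.
\end{itemize}
Applying the claim to $H$ with $H\rho_1 = A\theta = H\rho_2$ gives $x\rho_1=x\rho_2$ for all $x$ occurring in $H$, hence for all of $\mc X$ by the clause requirement.

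Uniqueness of the body then follows immediately: $B_1 = B\rho_1 = B\rho_2 = B_2$, because $B$ lies over $\mc X$ and the two substitutions agree on $\mc X$.

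I expect the main obstacle to be the third case, in particular justifying that the $y_j$ do not occur free in $x\rho_i$. This holds because $\rho_i$ has its range in $\mc Z$, while the $y_j$ are $\lambda$-bound in $H$ and can be chosen fresh by $\alpha$-renaming. A secondary point in the second case is that the head of $s$ is rigid (a constant, nominal, or bound variable), so that decomposition of normal forms is valid.
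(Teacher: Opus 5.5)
Your proof is correct, but it takes a different route from the paper's.

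\textbf{The paper's route.} The paper argues in a few lines by appeal to pattern unification. Since $\mc X\cap\mc Z=\emptyset$ and $H$ is a pattern, the problem $H\rho = A\theta$ has a unique most general unifier, by the cited result on higher-order pattern unification. Because this unifier instantiates all of $\mc X$, the paper concludes it is the only solution, so $\rho_1=\rho_2$.

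\textbf{Your route.} You avoid unifiers entirely. You prove directly, by structural induction on the normal form of the pattern, that matching against a pattern is injective on the substitution. The abstraction step in the flexible case $x\ \vec y$ is exactly where distinctness of the bound arguments is used.

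\textbf{Comparison.} Your argument is self-contained and needs only the uniqueness half of the pattern-unification theory, not existence. It also makes explicit two hypotheses the paper uses only tacitly: the pattern restriction, and the requirement that every variable of $\mc X$ occur in $H$. The latter is what the paper's phrase ``$\rho$ is a ground substitution with respect to $\mc X$'' silently relies on. The paper's proof is shorter, but it leans on the cited theorem and leaves the step from ``unique most general unifier'' to ``unique unifier'' informal.

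\textbf{One correction.} Drop your parenthetical allowing nominal constants as arguments of flexible variables. In this setting the $\rho_i$ may have nonempty support, so a nominal $n$ can occur in $x\rho_i$, and your abstraction step then fails. For example, $x\rho_1=\lambda w.n$ and $x\rho_2=\lambda w.w$ give the same instance of $x\ n$ but are different substitutions. This does not affect the lemma, because clause heads have empty support. Hence every argument of a flexible variable in $H$ is a bound variable, and your $\alpha$-renaming justification applies.
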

\begin{proof}
  We first recall the domains $\theta : \mc Z\to\mc Y$ and $\rho_1,\rho_2 : \mc Z\to\mc X$. We can then consider the problem of finding substitution $\rho : \mc Z\to \mc X$ unifying $A\theta$ and $H$. Since $\mc X\cap \mc Z = \emptyset$, and $H$ is in the pattern fragment, we know that their exists a unique most general unifier $\rho$ (see \cite{MN12}).
  In particular, since $\rho$ is a ground substitution with respect to $\mc X$, it is the unique solution and thus $\rho = \rho_1 = \rho_2$. Finally, we note that $B_1 = B\rho_1 = B\rho_2 = B_2$, as desired.
\end{proof}
}


\end{document}